\newcommand{\adcomment}[1]{}
\newcommand{\revII}[1]{#1}
\newtheorem{remark}{Remark}
\newcommand{\N}{\mathbb{N}}
\newcommand{\I}{\mathcal{I}}
\newcommand{\R}{\mathcal{R}}
\newcommand{\Fm}{F_\textsf{m}}
\newcommand{\fs}{\sigma_{\textsf{f}}}
\newcommand{\sfs}{\sigma_{\textsf{S(f)}}}
\newcommand{\ite}{\mathit{ite}}
\newcommand{\nil}{\mathit{nil}}
\newcommand{\Mod}{\textsf{Mod}}
\newcommand{\lei}{\le_{\mathrm{i}}}
\newcommand{\slei}{<_{\mathrm{i}}}
\newcommand{\mui}{\mu_{\mathrm{i}}}
\newcommand{\lef}{\le_{\mathrm{f}}}
\newcommand{\slef}{<_{\mathrm{f}}}
\newcommand{\muf}{\mu_{\mathrm{f}}}
\newcommand{\dummyloc}{\textit{kf}}
\newcommand{\Dummyloc}{\textit{KF}}
\newcommand{\wtp}{\mathit{wtp}}
\newcommand{\mwptr}[2]{\mathit{MW}^{#1.f:=#2}}
\newcommand{\mwalloc}[1]{\mathit{MW}_v^{\mathsf{alloc}(#1)}}
\newcommand{\mwallocparam}[2]{\mathit{MW}_{#1}^{\mathsf{alloc}(#2)}}
\newcommand{\halloc}[1]{\Fr^{x;v}}
\newcommand{\sem}[2][M]{\llbracket #2 \rrbracket_{#1}}
\newcommand{\semnu}[2][M,\nu]{\llbracket #2 \rrbracket_{#1}}
\newcommand{\Fr}{\textit{Sp}}
\newcommand{\mh}{\textrm{PSL }}
\newcommand{\valid}{\mathit{valid}}
\newcommand{\ttree}{\mathit{ttree}}
\newcommand{\pttree}{\mathit{pttree}}
\newcommand{\lft}{\mathit{left}}
\newcommand{\rght}{\mathit{right}}
\newcommand{\tnext}{\mathit{tnext}}
\newcommand{\bst}{\mathit{bst}}
\newcommand{\key}{\mathit{key}}
\newcommand{\maxkey}{\mathit{maxkey}}
\newcommand{\minkey}{\mathit{minkey}}
\newcommand{\btree}{\mathit{btree}}
\newcommand{\lst}{\mathit{list}}
\newcommand{\lseg}{\mathit{lseg}}
\newcommand{\nxt}{\mathit{next}}
\newcommand{\dll}{\mathit{dll}}
\newcommand{\prv}{\mathit{prev}}
\newcommand{\len}{\mathit{length}}
\newcommand{\mkee}{\mathit{mkeys}}
\newcommand{\slst}{\mathit{slist}}
\newcommand{\treap}{\mathit{treap}}
\newcommand{\pty}{\mathit{priority}}
\newcommand{\ptset}{\mathit{priorities}}
\newcommand{\bfac}{\mathit{bfac}}
\newcommand{\height}{\mathit{height}}
\newcommand{\avl}{\mathit{avl}}
\newtheorem*{rep@theorem}{\rep@title}
\newcommand{\newreptheorem}[2]{%
\newenvironment{rep#1}[1]{%
 \def\rep@title{#2 \ref{##1}}%
 \begin{rep@theorem}}%
 {\end{rep@theorem}}}
\newtheorem*{rep@lemma}{\rep@title}
\newcommand{\newreplemma}[2]{%
\newenvironment{rep#1}[1]{%
 \begin{rep@lemma}}%
 {\end{rep@lemma}}}
\newcommand{\spp}{support}
\newcommand{\spps}{supports}
\newcommand{\Spp}{Support}
\newcommand{\Spps}{Supports}
\newcommand{\Star}{\mathit{Star}}
\DeclareRobustCommand{\vcup}[1]{%
  \mathbin{\vphantom{\cup}\mathpalette\vcup@{#1}}%
}
\newcommand{\vcup@}[2]{%
  \ooalign{%
    $\m@th#1\cup$\cr
    \hidewidth
    \raisebox{%
      \dimexpr\depth+\demote@{#1}\dimexpr1.5pt\relax\relax
    }{\fontsize{\demote@{#1}\dimexpr\f@size pt}{\z@}\selectfont\check@mathfonts
      $\m@th#2$}%
    \hidewidth\cr
  }%
}
\newcommand{\demote@}[1]{%
  \ifx#1\displaystyle 0.7
  \else\ifx#1\textstyle 0.7
  \else\ifx#1\scriptstyle 0.5
  \else 0.4\fi\fi\fi
}
\newcommand{\printfnsymbol}[1]{%
  \textsuperscript{\@fnsymbol{#1}}%
}
\begin{document}

\title{A First-Order Logic with Frames}

\author{Adithya Murali}
\email{adithya5@illinois.edu}
\affiliation{
  \institution{University of Illinois at Urbana-Champaign, Department of Computer Science}
  \city{Urbana}
  \state{Illinois}
  \country{USA}
}

\author{Lucas Pe\~{n}a}
\email{lpena7@illinois.edu}
\affiliation{
  \institution{University of Illinois at Urbana-Champaign, Department of Computer Science}
  \city{Urbana}
  \state{Illinois}
  \country{USA}
}

\author{Christof L\"{o}ding}
\email{loeding@automata.rwth-aachen.de}
\affiliation{
  \institution{RWTH Aachen University, Department of Computer Science}
  \city{Aachen}
  \country{Germany}
}

\author{P. Madhusudan}
\email{madhu@illinois.edu}
\affiliation{
  \institution{University of Illinois at Urbana-Champaign, Department of Computer Science}
  \city{Urbana}
  \state{Illinois}
  \country{USA}
}

\renewcommand{\shortauthors}{Adithya Murali et al.}

\begin{abstract}
We propose a novel logic, called \emph{Frame Logic} (FL), that extends first-order logic (with recursive definitions) using a construct $\Fr(\cdot)$ that captures
the \emph{implicit \spps{}} of formulas--- the precise subset of the universe upon which their meaning depends. Using such \spps{}, we formulate proof rules that facilitate frame reasoning elegantly when the underlying model undergoes change. 
We show that the logic is expressive by capturing several data-structures and also exhibit a translation from a \emph{precise} fragment of separation logic to frame logic. 
Finally, we design a program logic based on frame logic for reasoning with programs
that dynamically update heaps that facilitates local specifications and frame reasoning. This program logic
consists of both localized proof rules as well as rules that derive the weakest tightest preconditions in FL.
\end{abstract}

\begin{CCSXML}
<ccs2012>
 <concept>
  <concept_id>10010520.10010553.10010562</concept_id>
  <concept_desc>Computer systems organization~Embedded systems</concept_desc>
  <concept_significance>500</concept_significance>
 </concept>
 <concept>
  <concept_id>10010520.10010575.10010755</concept_id>
  <concept_desc>Computer systems organization~Redundancy</concept_desc>
  <concept_significance>300</concept_significance>
 </concept>
 <concept>
  <concept_id>10010520.10010553.10010554</concept_id>
  <concept_desc>Computer systems organization~Robotics</concept_desc>
  <concept_significance>100</concept_significance>
 </concept>
 <concept>
  <concept_id>10003033.10003083.10003095</concept_id>
  <concept_desc>Networks~Network reliability</concept_desc>
  <concept_significance>100</concept_significance>
 </concept>
</ccs2012>
\end{CCSXML}

\ccsdesc[500]{Computer systems organization~Embedded systems}
\ccsdesc[300]{Computer systems organization~Redundancy}
\ccsdesc{Computer systems organization~Robotics}
\ccsdesc[100]{Networks~Network reliability}

\keywords{Program Verification, Program Logics, Heap Verification, First-Order Logic, First-Order Logic with Recursive Definitions}

\maketitle

\section{Introduction}\label{sec:intro}

Program logics for expressing and reasoning with programs that dynamically manipulate heaps is an active area of research. The research on separation logic has argued convincingly that it is highly desirable to have \emph{localized logics} that talk about small states (heaplets rather than the global heap), and the ability to do \emph{frame reasoning}. Separation logic achieves this objective by having a tight heaplet semantics and using special operators, primarily a separating conjunction operator $*$ and a separating implication operator (the magic wand $-*$).

In this paper, we ask a fundamental question: can logics such as FOL and FOL with recursive definitions be extended to support localized specifications and frame reasoning?
Can we utilize such logics for reasoning effectively with programs that dynamically manipulate heaps, with the aid of local specifications and frame reasoning?

The primary contribution of this paper is to endow the first-order logic with recursive definitions (with least fixpoint semantics) with frames and frame reasoning.

A formula in first-order logic with recursive definitions (FO-RD) can be naturally associated with a \emph{\spp{}}--- the subset of the universe that determines its truth. By using a more careful syntax such as guarded quantification (which continue to have a classical interpretation), we can in fact write specifications in FO-RD that have very precise \spps{}.
For example, we can write the property that $x$ points to a linked list using a formula $list(x)$ written purely in FO-RD so that its \spp{} is precisely the locations constituting the linked list.

In this paper, we define an extension of FO-RD, called
Frame Logic (FL) where we allow a new operator $\Fr(\alpha)$ which, for an FO-RD formula $\alpha$, evaluates to the \spp{} of $\alpha$. Logical formulas
thus have access to \spps{} and can use it to \emph{separate} \spps{} and do frame reasoning. For instance, the logic can now express that two lists are disjoint by asserting that
$\Fr(list(x)) \cap \Fr(list(y)) = \emptyset$. It can then reason that in such a program heap configuration, if the program
manipulates only the locations in $\Fr(list(y))$, then $list(x)$
would continue to be true, using simple frame reasoning.
We can also avoid blowup while doing this by defining macros, such as $\textit{Star}(\alpha, \beta) = \alpha \wedge \beta \wedge \Fr(\alpha) \cap \Fr(\beta)=\emptyset$. 
In addition, FL allows for us to omit this subformula expressing disjoint supports. We can simply write the formula $\lst(x) \land \lst(y)$ to represent lists that \emph{may or may not} be overlapping. Such formulas are very difficult to write in Separation Logic.

The addition of the \spp{} operator to FO-RD yields a very natural logic for expressing specifications. First, formulas in FO-RD have the same meaning when viewed as FL formulas. For example,
$f(x)=y$ (written in FO-RD as well as in FL) is true in any model that has $x$ mapped by $f$ to $y$, instead of a specialized ``tight heaplet semantics'' that demands that $f$ be a partial function with the domain only consisting of the location $x$. The fact that the \spp{} of this formula contains only the location $x$ is important, of course, but is made accessible using the \spp{} operator, i.e., $\Fr(f(x)=y)$ gives the singleton set containing the element interpreted for $x$\footnote{\revII{For a pointer $f$ we denote by $f(x)=y$ that $x$ points to $y$ on $f$ i.e.,   $x \overset{f}{\mapsto} y$, whose ``footprint'' is $\{x\}$ in many heap logics}}. 
Second, properties of \spps{}
can be naturally expressed using set operations. 
To state that the lists pointed to by $x$ and $y$ are disjoint, we don't need special operators (such as the $*$ operator in separation logic) but can express this as $\Fr(list(x)) \cap \Fr(list(y))=\emptyset$. 
Third, when used to annotate programs, pre/post specifications for programs written in FL can be made \emph{implicitly} local by interpreting their \spps{} to be the localized heaplets accessed and modified by programs, yielding frame reasoning akin to program logics that use separation logic. Finally, as we show in this paper, the weakest precondition of specifications across basic loop-free paths can be expressed in FL, making it an expressive logic for reasoning with programs. Separation logic, on the other hand, introduces the magic wand operator $-*$ (which is inherently higher-order) in order to add enough expressiveness to be closed under weakest preconditions~\cite{reynolds02}. 

We define frame logic (FL) as an extension of FO with recursive definitions (FO-RD) that operates over a multi-sorted universe, with a particular foreground sort (used to model locations on the heap on which pointers can mutate) and several background sorts that are defined using separate theories. \Spps{} for formulas are defined with respect to the foreground sort only. A special background sort of \emph{sets} of elements of the foreground sort is assumed and is used to model the \spps{} for formulas. For any formula $\varphi$ in the logic, we have a special construct $\Fr(\varphi)$ that captures its \spp{}, a set of locations in the foreground sort, that intuitively corresponds to the precise subdomain of functions the value of $\varphi$ depends on. We then prove a \emph{frame theorem} (Theorem~\ref{thm:frametheorem}) that says that changing a model $M$ by changing the interpretation of functions that are not in the \spp{} of $\varphi$ will not affect the truth of the formula $\varphi$.
This theorem then directly supports frame reasoning; if a  model satisfies $\varphi$ and the model is changed so that the changes made are disjoint from the \spp{} of $\varphi$, then $\varphi$ will continue to hold. We also show that FL formulas can be translated to vanilla FO-RD logic (without \spp{} operators); in other words, the semantics for the \spp{} of a formula can be captured in FO-RD itself\footnote{\revII{The blowup of translation from FL to FO-RD is only linear; this can be seen from Figure~\ref{fig:frame-equations}.}}. Consequently, we can use any FO-RD reasoning mechanism (proof systems~\cite{vampire,kovacs17} or heuristic algorithms such as the natural proof techniques~\cite{madhusudan12,pek14,qiu13,suter10}) to reason with FL formulas.

We illustrate our logic using several examples drawn from program verification; we show how to express various
data-structure definitions and the elements they contain 
and various measures for them using FL formulas (e.g., linked lists, sorted lists, list segments, binary search trees, AVL trees, lengths of lists, heights of trees, set of keys stored in the data-structure, etc.)

While the sensibilities of our logic are definitely inspired by separation logic, there are some fundamental 
differences beyond the fact that our logic extends the syntax and semantics of first-order logic with recursive definitions with a special \spp{} operator and avoids operators such as $*$ and $-*$. In separation logic, there can be many \spps{} of a formula (also called heaplets)--- a heaplet for a formula is one that \emph{supports its truth}. For example, a formula of the form $\alpha \vee \beta$ can have a heaplet that supports the truth of $\alpha$ or one that supports the truth of $\beta$. However, the philosophy that we follow in our design is to have a \emph{single} \spp{} that supports the truth value of a formula, whether it be \emph{true or false}. Consequently, the \spp{} of the formula 
$\alpha \vee \beta$ is the \emph{union} of the \spps{} of the formulas $\alpha$ and~$\beta$. 

The above design choice of the \spp{} being \emph{determined} by the formula has several consequences that lead to a deviation from separation logic. For instance, the \spp{} of the negation of a formula $\varphi$ is the same as the \spp{} of $\varphi$. And the \spp{} of the formula $f(x)=y$ and its negation are the same, namely the singleton location interpreted for $x$.
In separation logic, the corresponding formula will have the same heaplet but its negation will include \emph{all} other heaplets. The choice of having determined \spps{} or heaplets is not new, and there have been several variants and sublogics of separation logics that have been explored. For example, the logic {\sc Dryad}~\cite{qiu13,pek14} is a separation logic that insists on determined heaplets to support automated reasoning, and the \emph{precise} fragment of separation logic studied in the literature~\cite{ohearn04} defines a sublogic that has (essentially) determined heaplets. The second main contribution in this paper is to show that this fragment of separation logic (with slight changes for technical reasons) can be translated to frame logic, such that the unique heaplet that satisfies a precise separation logic formula is its \spp{} of the corresponding formula in frame logic.

The third main contribution of this paper is a program logic based on frame logic for a simple while-programming language destructively updating heaps. We present two kinds of proof rules for reasoning with such programs annotated with pre- and post-conditions written in frame logic. The first set of rules are local rules that axiomatically define the semantics of the program, using the smallest \spps{} for each command. We also give a frame rule that allows arguing preservation of properties whose \spps{} are disjoint from the heaplet modified by a program. These rules are similar to analogous rules in separation logic. The second class of rules work to give a \emph{weakest tightest precondition} for any postcondition with respect to non-recursive programs. In separation logic, the corresponding rules for weakest preconditions are often expressed using separating implication (the magic wand operator). Given a small change made to the heap and a postcondition $\beta$, the formula $\alpha$ $-*$ $\beta$ captures all heaplets $H$ where if a heaplet that satisfies $\alpha$ is joined with $H$, then $\beta$ holds.
When $\alpha$ describes the change effected by the program, $\alpha$ $-*$ $\beta$ captures, essentially, the weakest precondition. However, the magic wand is a very powerful operator that calls for quantifications over heaplets and submodels, and hence involves second-order quantification~\cite{magic-wand-complexity}. In our logic, we show that we can capture the weakest precondition with only first-order quantification, and hence first-order frame logic is closed under weakest preconditions across non-recursive programs blocks. This  means that when inductive loop invariants are given also in FL, reasoning with programs reduces to reasoning with FL. By translating FL to pure FO-RD formulas, we can use FO-RD reasoning techniques to reason with FL, and hence programs.

\revII{Our work is in large part inspired by our previous work on reasoning with \textsc{Dryad}~\cite{qiu13,pek14,loding18} using FO-reasoning engines, in particular SMT solvers. {\sc Dryad} is a separation logic that has, essentially, \emph{unique heaplets}, and this is crucial in translation to FO without quantifiers over sets (for example, quantifying over two heaplets to reason with $\alpha * \beta$). {\sc Dryad} does not have the magic wand and is therefore not closed under weakest preconditions. We imbibe this philosophy fundamentally in the design of Frame Logic--- the support for formulas is uniquely defined. When converting Dryad formulas to FO-RD, the translation has separate components for capturing truth and heaplets (especially heaplets for inductively defined datastructures), and the correctness of the heaplet definition is not derived or verified to be correct in the works in~\cite{qiu13,pek14,loding18}. Frame Logic addresses this disadvantage by ensuring that the support of formulas are systematically captured using first-order logic (by the Frame Logic to FO-RD translation). Frame Logic is also closed under weakest preconditions. 
Reasoning with {\sc Dryad} also proceeds using a verification condition technique based on strongest post, which avoids introducing unnecessary quantification. From our experience in working with both {\sc Dryad} and Frame Logic, we believe that logical engines to reason with FO-RD formulas derived from Frame Logic will be as successfully amenable to automation as {\sc Dryad} is, especially when adapted to verification conditions derived using the strongest-post. Such an extension is considerable effort, however, and we leave this to future work. We provide further discussion in Section~\ref{sec:reasoning-with-fl}.}

\smallskip

\noindent In summary, the contributions of this paper are:
\begin{itemize}[topsep=0pt]
 \item A logic, called \emph{frame logic} (FL) that extends FO-RD with a \spp{} operator and supports frame reasoning. We illustrate FL with specifications of
 various data-structures.
 We show a translation to equivalent formulas in 
 FO-RD.
 \item A program logic and proof system based on FL including local rules and rules for computing the weakest tightest precondition. FL reasoning
 required for proving programs is hence reducible
 to reasoning with FO-RD.
 \item A separation logic fragment that can generate only precise formulas, and a translation from this logic to equivalent FL formulas.
\end{itemize}

The paper is organized as follows.
Section~\ref{sec:ford} sets up first-order logics with recursive definitions (FO-RD), with a special uninterpreted foreground sort of locations and several background sorts/theories. Section~\ref{sec:frame-logic} introduces Frame Logic (FL), its syntax, its semantics which includes a discussion of design choices for supports, proves the frame theorem for FL, shows a reduction of FL to FO-RD, and illustrates the logic by defining several data-structures and their properties using FL. 
Section~\ref{sec:prog-and-prfs} develops a program logic based on FL, illustrating them with proofs of verification of programs. Section~\ref{sec:seplog-translation} introduces a precise fragment of separation logic and shows its translation to FL. Section~\ref{sec:discussion} discusses comparisons of FL to separation logic, and some existing first-order  techniques that can be used to reason with FL. Section~\ref{sec:rel-work} compares our work with the research literature and Section~\ref{sec:conc} has concluding remarks.

\revII{This paper is an extension of our work at ESOP 2020~\cite{murali20}. In this work we present a cleaner operational semantics, a simpler program logic, and give detailed proofs of all our results. We also present detailed comparisons in the discussion section (Sec.~\ref{sec:discussion}) and the section on related work (Sec.~\ref{sec:rel-work}).}

\section{Background: First-Order Logic with Recursive Definitions
and Uninterpreted Combinations of Theories} 
\label{sec:ford}

The base logic upon which we build frame logic is a first 
order logic with recursive definitions (FO-RD), where we allow a foreground
sort and several background
sorts, each with their individual theories (like arithmetic, sets, arrays, etc.).
\revII{The logic FO-RD is essentially the same as the classical first-order logic with least fixpoints (FO+lfp) studied in finite model theory and databases from the 1980s~\cite{libkin04,vardi1982,immerman1982,relational-logic+lfp,chandra-harel}. The only difference is
that we give names to recursive definitions that have least fixpoint semantics. (Logics with inductive definitions over non-mononotonic bodies are also studied in the literature; see Section~\ref{sec:rel-work}) on related work.}

The foreground sort and functions involving the foreground sort are 
\emph{uninterpreted} (not constrained by theories). This hence can be seen as
an uninterpreted combination of theories over disjoint domains. 
This logic has been defined and used to model heap verification before~\cite{loding18}.

We will build frame logic over such a framework where \spps{} are modeled as 
subsets of elements of the foreground sort. When modeling heaps in program
verification using logic, the foreground sort will be used to model 
\emph{locations of the heap}, uninterpreted functions from the foreground sort to 
foreground sort will be used to model \emph{pointers}, and uninterpreted functions
from the foreground sort to the background sort will model \emph{data fields}.
Consequently, \spps{} will be subsets of locations of the heap, which is
appropriate as these are the domains of pointers that change when a program
updates a heap.

We define a signature as $\Sigma = (S; C; F; \R; \I)$, where
$S$ is a finite non-empty set of sorts.
$C$ is a set of constant symbols, where each $c \in C$ has some sort $\tau \in S$.
$F$ is a set of function symbols, where each function $f \in F$ has a type of the
form $\tau_1 \times \ldots \times \tau_m \rightarrow \tau$ for some $m$, with
$\tau_i, \tau \in S$. 
The sets $\R$ and $\I$ are (disjoint) sets of relation symbols, where each relation $R \in \R \cup \I$ has
a type of the form $\tau_1 \times \ldots \times \tau_m$. The set $\I$
contains those relation symbols for which the corresponding relations
are inductively defined using formulas (details are given below), while those
in $\R$ are given by the model.

We assume that the set of sorts contains a designated ``foreground
sort'' denoted by $\fs$. 
All the other sorts in $S$ are called background sorts, and for
each such background sort $\tau$ we allow the constant symbols of type $\tau$, function symbols that have type
$\tau^n \rightarrow \tau$ for some $n$, and relation symbols have
type $\tau^m$ for some $m$, to be constrained using an arbitrary
theory $T_\tau$.

\newcommand{\recdef}{\mathcal{D}}
A formula in first-order logic with recursive definitions (FO-RD) over
such a signature is of the form $(\recdef, \alpha)$, where $\recdef$
is a set of recursive definitions of the form $R(\vec x) := \rho_R(\vec{x})$,
where $R \in \I$ and $\rho_R(\vec{x})$ is a first-order logic formula, in which the 
relation symbols from $\I$ occur only positively, where each symbol from $\I$ is under an even number of negations. $\alpha$ is also a first-order logic formula over the signature.
We assume $\recdef$ has at most one definition for any inductively
defined relation, and that the formulas $\rho_R$ and $\alpha$ use only inductive relations defined in $\recdef$. 

The semantics of a formula is standard; the semantics of inductively defined
relations are defined to be the least fixpoint that satisfies the relational
equations, and the semantics of $\alpha$ is the standard one defined using these
semantics for relations. We do not formally define the semantics, but we will formally define the semantics of frame logic which is an extension of FO-RD.

\section{Frame Logic} \label{sec:frame-logic}
We now define Frame Logic (FL), the central contribution of this paper.

\begin{figure}
\small
 \[
\begin{array}{rrcl}
  \text{FL formulas:} & \varphi &::=& \bot \mid \top \mid t_\tau = t_\tau \mid R(t_{\tau_1}, \ldots, t_{\tau_m}) \mid \varphi \land \varphi \mid \lnot \varphi \mid \ite(\gamma:\varphi,\varphi) \mid \exists y:\gamma. ~\varphi \\
  &&& \mbox{$\tau \in S$, $R \in \R \cup \I$ of type $\tau_1 \times \cdots \times \tau_m$} \\
 \text{Guards:} & \gamma &::=& t_\tau = t_\tau \mid R(t_{\tau_1}, \ldots, t_{\tau_m}) \mid \gamma \land \gamma \mid \lnot \gamma \mid  \ite(\gamma:\gamma,\gamma) \mid \exists y:\gamma. ~\gamma \\
  &&& \mbox{$\tau \in S \setminus \{\sfs\}$, $R \in \R$ of type $\tau_1 \times \cdots \times \tau_m$} \\
 \text{Terms:} & t_\tau &::=& c \mid x \mid f(t_{\tau_1}, \ldots, t_{\tau_m}) \mid \ite(\gamma:t_\tau,t_\tau) \mid \\ 
  &&& \Fr(\varphi) ~~\text{(if $\tau = \sfs$)} \mid \Fr(t_{\tau'}) ~~\text{(if $\tau = \sfs$)} \\
  &&& \mbox{$\tau,\tau' \in S$ with constants $c$, variables $x$ of type $\tau$,} \\
  &&& \mbox{and functions $f$ of type $\tau_1 \times \cdots \times t_m \rightarrow \tau$} \\
\multicolumn{4}{l}{\text{Recursive definitions: $R(\vec x) := \rho_R(\vec{x})$ with $R \in \I$ of type $\tau_1 \times \cdots \times \tau_m$ with}} \\
&&& \text{$\tau_i \in S \setminus\{\sfs\}$, FL formula $\rho_R(\vec{x})$ where all relation symbols} \\
&&& \text{ $R' \in \I$ occur only positively or inside a \spp{} expression.}
\end{array}
\normalsize
\]

\caption{Syntax of frame logic: $\gamma$ for guards, $t_\tau$ for terms of sort $\tau$, 
and general formulas $\varphi$. Guards cannot use inductively defined relations or \spp{} expressions.} \label{fig:syntax-frame-logic}
\end{figure}

We consider a universe with a foreground sort and several background sorts, each restricted by individual theories,
as described in Section~\ref{sec:ford}. 
We consider the elements of the foreground sort to be \emph{locations} 
and consider \spps{} as \emph{sets of locations}, i.e., sets of elements
of the foreground sort. 
We hence introduce a background sort $\sfs$; the elements of sort $\sfs$ model sets of elements of sort $\fs$. 
Among the relation symbols in $\R$ there is the relation
$\in$ of type $\fs \times \sfs$ that is interpreted as the usual
element relation. The signature includes the standard operations on
sets $\cup$, $\cap$ with the usual
meaning, the unary function $\widetilde{\cdot}$ that is
interpreted as the complement on sets (with respect to the set of
foreground elements), and the constant
$\emptyset$. For these functions and relations we assume a background theory $B_{\sfs}$
that is an axiomatization of the theory of
sets. We further assume that the signature does not contain any other
function or relation symbols involving the sort $\sfs$. 



For reasoning about changes of the structure over the locations, we assume that there is a nonempty
subset $\Fm \subseteq F$ of function symbols that are declared
mutable. These functions can be used to model mutable pointer fields in the heap
that can be manipulated by a program and thus change. Formally, we
require that each $f \in \Fm$ has at least one argument of sort
$\fs$.


For variables, let $\mathit{Var}_\tau$ denote the set of variables of
sort $\tau$, where $\tau \in S$.  We let $\overline{x}$ abbreviate
tuples $x_1, \ldots, x_n$ of variables.


Our frame logic over uninterpreted combinations of theories is a
variant of first-order logic with recursive definitions that has an additional operator
$\Fr(\varphi)$ that assigns to each formula $\varphi$ a set of elements
(its \spp{} or ``heaplet'' in the context of heaps) in the foreground universe. So $\Fr(\varphi)$ is a term of
sort $\sfs$.

The intended semantics of $\Fr(\varphi)$ (and of the
inductive relations) is defined formally 
as a least fixpoint of a set of
equations. This semantics is presented in
Section~\ref{sec:frames}. In the following, we first define the
syntax of the logic, then discuss informally the various design decisions for the semantics of \spps{}, before proceeding to a formal definition of the semantics

%

\subsection{Syntax of Frame Logic (FL)}
The syntax of our logic is given in the grammar in Figure~\ref{fig:syntax-frame-logic}. This extends FO-RD with the rule for building \emph{\spp{} expressions}, which are terms of sort $\sfs$ of the form $\Fr(\alpha)$ for a formula $\alpha$, or $\Fr(t)$ for a term $t$.


The formulas defined by $\gamma$ are used as \emph{guards} in existential quantification and in the if-then-else-operator, which is denoted by $\ite$. The restriction compared to general formulas is that guards cannot use inductively defined relations ($R$ ranges only over $\R$ in the rule for $\gamma$, and over $\R \cup \I$ in the rule for $\varphi$), nor terms of sort $\sfs$ and thus no \spp{} expressions ($\tau$ ranges over $S \setminus\{\sfs\}$ in the rules for $\gamma$ 
and over $S$ in the rule for $\varphi$). The requirement that the guard does not use the inductive
 relations and \spp{} expressions is used later to ensure the
 existence of least fixpoints for defining semantics of inductive
 definitions. The semantics of an $\ite$-formula $\ite(\gamma:\alpha,\beta)$ is the
 same as the one of $(\gamma \land \alpha) \lor (\lnot \gamma
 \land \beta)$; however, the \emph{\spps{}} of the two
 formulas will turn out to be different (i.e.,
 $\Fr(\ite(\gamma:\alpha,\beta))$ and $\Fr((\gamma \land
 \alpha) \lor (\lnot \gamma \land \beta))$ are
 different), as explained in Section~\ref{sec:design}.
The same is true for existential formulas, i.e., $\exists y:\gamma. \varphi$ has the same semantics as $\exists y.\gamma \land \varphi$ but, in general, has a different \spp{}.

As a brief example to help understand these syntactic constraints, consider the expression $\exists y : \lseg(x, y).\; \lst(y)$. This is not syntactically valid in Frame Logic as the guard contains the recursive definition $\lseg$. However, we could instead write the formula $\exists y : y = \nxt(x).\; \lst(y)$, which is syntactically valid as the guard only references variables and the mutable function $\nxt(x)$. 

\revII{Universal formulas of the kind $\forall x:\gamma.\,\alpha$ are shorthands for the formula $\neg(\exists x:\gamma.\,\neg\alpha)$ as expected, and their supports are hence defined as well.}

For recursive definitions (throughout the paper, we use the terms recursive definitions and inductive definitions with the same meaning), we require that the relation $R$ that is defined does not have arguments of sort $\sfs$. This is another restriction in order to ensure the existence of a least fixpoint model in the definition of the semantics.\footnote{It would be sufficient to restrict formulas of the form $R(t_1, \ldots, t_n)$ for inductive relations $R$ to not contain \spp{} expressions as subterms.}

\subsection{Semantics of \Spp{} Expressions: Design Decisions}\label{sec:design}
We discuss the design decisions that go behind the semantics of the \spp{} operator $\Fr$ in our logic, and then give an example for the \spp{} of an inductive definition. The formal conditions that the \spps{} should satisfy are stated in the equations in 
Figure~\ref{fig:frame-equations}, and are explained in Section~\ref{sec:frames}. Here, we start by an informal discussion.

The first decision is to have every formula uniquely define a \spp{}, which roughly captures the subdomain of mutable functions that a formula $\varphi$'s truthhood depends on, 
and have $\Fr(\varphi)$ evaluate to it.

The choice for \spps{} of atomic formulas are relatively clear. An atomic formula of the kind
$f(x)\!\!=\!\!y$, where $x$ is of the foreground sort and $f$ is a mutable function, has as its \spp{} the singleton set containing the location interpreted for $x$. And atomic formulas that do not involve mutable functions over the foreground have an empty \spp{}. \Spps{} for terms can also be similarly defined. The \spp{} of a conjunction $\alpha \wedge \beta$ should clearly be the union of the \spps{} of the two formulas.

\begin{remark}\label{rem:mutation-frame}
In traditional separation logic, each pointer field is stored in a separate location, using integer offsets.
However, in our work, we view pointers as references and disallow pointer arithmetic.
A more accurate heaplet for such references can be obtained by taking
heaplet to be the pair $(x,f)$ (see~\cite{Parkinson05}), capturing
the fact that the formula depends only on the
field $f$ of $x$.
Such accurate heaplets can be captured in FL as well--- we can introduce a \emph{non-mutable field lookup pointer} $L_f$
and use $x.L_f.f$ in programs instead of $x.f$.
\end{remark}


What should the \spp{} of a formula $\alpha \vee \beta$ be? The choice we make here is that its \spp{} is the \emph{union} of the \spps{} of $\alpha$ and $\beta$. Note that in a model where $\alpha$ is true and $\beta$ is false, we still include the heaplet of $\beta$ in $\Fr(\alpha \vee \beta)$. In a sense, this is an overapproximation of the \spp{} as far as frame reasoning goes, as surely preserving the model's definitions on the \spp{} of $\alpha$ will preserve 
the truth of $\alpha$, and hence of $\alpha \vee \beta$. 

However, we prefer the \spp{} to be the union of the \spps{} of $\alpha$ and $\beta$. We think of the \spp{} as the subdomain of the universe that determines the meaning of the formula, whether it be \emph{true} or \emph{false}. Consequently, we would like
the \spp{} of a formula and its negation to be the
same. Given that the \spp{} of the negation of a disjunction, being a conjunction, is the union of the frames of $\alpha$ and $\beta$, we would like this to be the \spp{}. 

Separation logic makes a different design decision.
Logical formulas are not associated with tight \spps{}, but rather, the semantics of the formula is defined for models with given \spps{}/heaplets, 
where the idea of a heaplet is whether it supports
the \emph{truthhood} of a formula (and not its falsehood).
For example, for a model, the various heaplets that satisfy $\neg (f(x)=y)$ in separation logic would include all heaplets where the location of $x$ is not present, which does not coincide with the notion we have chosen for \spps{}.
However, for positive formulas, separation logic handles \spps{} more accurately, as it can associate several \spps{} for a formula, yielding two heaplets for formulas of the form $\alpha \vee \beta$ when they are both true in a model.
The decision to have a single \spp{} for a formula compels us to take the union of the \spps{} to be the \spp{} of a disjunction.

There are situations, however, where there are disjunctions $\alpha \vee \beta$, where only \emph{one} of the disjuncts can possibly be true, and hence we would like the \spp{} of the formula to be the \spp{} of the disjunct that happens to be true. 
We therefore introduce a new syntactical form $ite(\gamma: \alpha, \beta)$ in frame logic, whose heaplet is the union of the \spps{} of $\gamma$ and $\alpha$, if $\gamma$ is true, and the \spps{} of $\gamma$ and $\beta$ if $\gamma$ is false. While the truthhood of 
$ite(\gamma: \alpha, \beta)$ is the same as
that of $(\gamma \wedge \alpha) \vee (\neg \gamma \wedge \beta)$, its \spps{} are potentially smaller,
allowing us to write formulas with tighter \spps{}
to support better frame reasoning. Note that the
\spp{} of $ite(\gamma : \alpha, \beta)$ and its negation $ite(\gamma : \neg \alpha, \neg \beta)$ are the same, as we desired.

Turning to quantification, the \spp{} for a formula of the form $\exists x. \alpha$ is hard to define, as its truthhood could depend on the entire universe.
We hence provide a mechanism for \emph{guarded}
quantification, in the form $\exists x: \gamma. ~~\alpha$. The semantics of this formula is
that there exists some location that satisfies the guard $\gamma$, for which $\alpha$ holds. 
The \spp{} for such a formula includes
the \spp{} of the guard, and the \spps{} of $\alpha$
when $x$ is interpreted to be a location that satisfies $\gamma$. For example, $\exists x: (x=f(y)).~~g(x)=z$ has as its \spp{} the locations interpreted for $y$ and $f(y)$ only, since we only consider the support of $g(x)$ when $x$ is interpreted to be $f(y)$.

For a formula $R(\overline{t})$ with an inductive relation $R$ defined by $R(\overline{x}) := \rho_R(\overline{x})$, the 
\spp{} descends into the definition, changing the variable assignment of the variables in $\overline{x}$ from the inductive definition to the terms in $\overline{t}$.  Furthermore, it contains the elements to which mutable functions are applied in the terms in $\overline{t}$.


Recursive definitions are designed such
that the evaluation of the equations for the \spp{} expressions is independent of
the interpretation of the inductive relations. The equations mainly depend on the
syntactic structure of formulas and terms. Only the semantics of guards, and the semantics of subterms under a mutable function symbol play a role. For this reason,
we disallow guards to contain recursively defined relations or \spp{} expressions. We also require that the only functions involving the sort $\sfs$ are the standard 
functions involving sets. Thus, subterms of 
mutable functions cannot contain \spp{} expressions (which are of sort $\sfs$) as subterms.

These restrictions ensure that there
indeed exists a unique simultaneous least solution of the equations for
the inductive relations and the \spp{} expressions. We now provide an example.

\begin{example} 
\label{ex:tree}
Consider the definition of a predicate $\mathit{tree}(x)$ w.r.t.\ two unary mutable functions $\lft$ and $\rght$:
\begin{align*}
  \mathit{tree}(x) := & \ite(x = \nil: \top, \alpha) \mbox{   where}\\
  \alpha = &\exists \ell,r:(\ell=\lft(x) \land r = \rght(x)). \mathit{tree}(\ell) \land \mathit{tree}(r) \land {} \\
  &\Fr(\mathit{tree}(\ell)) \cap \Fr(\mathit{tree}(r)) = \emptyset \land
  \neg ( x \in \Fr(\mathit{tree}(\ell)) \cup \Fr(\mathit{tree}(r)) )
\end{align*}
This inductive definition defines binary trees with pointer fields $\lft$ and $\rght$ for left- and right-pointers, by stating that $x$ points to a tree if either $x$ is equal to $nil$ (in this case its \spp{} is empty), or
$\lft(x)$ and $\rght(x)$ are trees with disjoint \spps{}.
The last conjunct says that $x$ does not belong to the \spp{} of the left and right subtrees; this condition is, strictly speaking, not required to define trees (under least fixpoint
semantics). 
Note that the access to the \spp{} of formulas eases defining disjointness of heaplets, like in separation logic.
The \spp{} of $tree(x)$ turns out to be precisely the nodes that are reachable from $x$ using $\lft$ and $\rght$ pointers, as one would desire. Consequently, if a pointer outside this \spp{} changes, we would be able to conclude using frame reasoning that the truth value of $tree(x)$ does not change.
\qed
\end{example}

\subsection{Formal Semantics of Frame Logic}\label{sec:frames}

\begin{figure}
\small
  \[
  \begin{array}{rcl}
    \sem{\Fr(c)}(\nu) &=& \sem{\Fr(x)}(\nu) = \emptyset \mbox{ for a constant $c$ or variable $x$}\\
  \sem{\Fr(f(t_1, \ldots, t_n))}(\nu) &=&
  \begin{cases}
  \bigcup\limits_{i \text{ with } t_i \text{ of sort } \fs}\{\semnu{t_i}\} \cup \bigcup\limits_{i=1}^n \sem{\Fr(t_i)}(\nu) & \mbox{if } f \in \Fm \\
  \bigcup\limits_{i=1}^n \sem{\Fr(t_i)}(\nu) & \mbox{if } f \not\in \Fm 
  \end{cases}\\
  \sem{\Fr(\Fr(\varphi))}(\nu) &=& \sem{\Fr(\varphi)}(\nu) \\
  \sem{\Fr(\Fr(t))}(\nu) &=& \sem{\Fr(t)}(\nu) \\
 \revII{\sem{\Fr(\top)}(\nu)} &\revII{=}& \revII{\sem{\Fr(\top)}(\nu) = \emptyset} \\
 \sem{\Fr(\bot)}(\nu) &=& \sem{\Fr(\top)}(\nu) = \emptyset \\
 \sem{\Fr(t_1 = t_2)}(\nu) &=& \sem{\Fr(t_1)}(\nu) \cup \sem{\Fr(t_2)}(\nu) \\
 \sem{\Fr(R(t_1, \ldots, t_n))}(\nu) &=& \bigcup_{i=1}^n \sem{\Fr(t_i)}(\nu)  \mbox{
  for $R \in \R$}\\
\sem{\Fr(R(\overline{t}))}(\nu) &=& \sem{\Fr(\rho_R(\overline{x}))}(\nu[\overline{x} \leftarrow \semnu{\overline{t}}]) \cup \bigcup_{i=1}^n \sem{\Fr(t_i)}(\nu)  \\
  && \text{for $R \in \I$ with definition $R(\overline{x}) := \rho_R(\overline{x})$}, \\
  && \overline{t} = (t_1, \ldots, t_n), \overline{x} = (x_1, \ldots, x_n)\\
\sem{\Fr(\alpha \land \beta)}(\nu) &=& \sem{\Fr(\alpha)}(\nu) \cup
\sem{\Fr(\beta)}(\nu) \\
\sem{\Fr(\lnot \varphi)}(\nu) &=& \sem{\Fr(\varphi)}(\nu) \\
\sem{\Fr(\ite(\gamma:\alpha,\beta))}(\nu)  &=& \sem{\Fr(\gamma)}(\nu) \cup
  \begin{cases}
     \sem{\Fr(\alpha)}(\nu) \mbox{ if } M,\nu \models \gamma \\
    \sem{\Fr(\beta)}(\nu) \mbox{ if } M,\nu \not\models \gamma 
  \end{cases} \\
\sem{\Fr(\ite(\gamma:t_1,t_2))}(\nu)  &=& \sem{\Fr(\gamma)}(\nu) \cup
  \begin{cases}
     \sem{\Fr(t_1)}(\nu) \mbox{ if } M,\nu \models \gamma \\
    \sem{\Fr(t_2)}(\nu) \mbox{ if } M,\nu \not\models \gamma 
  \end{cases} \\
   \sem{\Fr(\exists y: \gamma.\varphi)}(\nu) &=& \bigcup\limits_{u \in D_y} \sem{\Fr(\gamma)}(\nu[y \leftarrow u]) \cup \bigcup\limits_{u \in D_y; M,\nu[y
      \leftarrow u] \models \gamma} \sem{\Fr(\varphi)}(\nu[y \leftarrow u])
  \end{array}
  \]
  \caption{Equations for \spp{} expressions}
    \label{fig:frame-equations}
\normalsize
\end{figure}

Before we explain the semantics of the \spp{} expressions and inductive definitions, we introduce a semantics that treats \spp{} expressions and the symbols from $\I$ as uninterpreted symbols. We refer to this semantics as \emph{uninterpreted semantics}. For the formal definition we need to introduce some terminology first.

An occurrence of a variable $x$ in a formula is free if it does not
occur under the scope of a quantifier for $x$. By renaming 
variables we can assume that each variable only occurs freely in a
formula or is quantified by exactly one quantifier in the formula.  We
write $\varphi(x_1, \ldots, x_k)$ to indicate that the free variables
of $\varphi$ are among $x_1, \ldots, x_k$.  Substitution of a term $t$
for all free occurrences of variable $x$ in a formula $\varphi$ is
denoted $\varphi[t/x]$.  Multiple variables are substituted
simultaneously as $\varphi[t_1/x_1, \ldots, t_n/x_n]$.  We abbreviate
this by $\varphi[\overline{t}/\overline{x}]$.



A model is of the form $M = (U ; \sem\cdot)$ where $U =
(U_{\tau})_{\tau \in S}$ contains a universe for each sort, and an
interpretation function $\sem\cdot$. The universe for the sort $\sfs$
is the powerset of the universe for $\fs$. 

A variable assignment is a function $\nu$ that assigns to each
variable a concrete element from the universe for the sort of the
variable. For a variable $x$, we write $D_x$ for the universe of the
sort of $x$ (the domain of $x$). For a variable $x$ and an element $u
\in D_x$ we write $\nu[x \leftarrow u]$ for the variable assignment
that is obtained from $\nu$ by changing the value assigned for $x$ to
$u$.

The interpretation function $\sem\cdot$ maps each constant $c$ of sort $\tau$ to
an element $\sem{c} \in U_{\tau}$, each function symbol $f:\tau_1
\times \ldots \times \tau_m \rightarrow \tau$ to a concrete function
$\sem f : U_{\tau_1} \times \ldots \times U_{\tau_m} \rightarrow
U_\tau$, and each relation symbol $R \in \R \cup \I$ of type $\tau_1
\times \ldots \times \tau_m$ to a concrete relation $\sem R \subseteq
U_{\tau_1} \times \ldots \times U_{\tau_m}$.  These interpretations are assumed 
to satisfy the background theories (see Section~\ref{sec:ford}).
Furthermore, the
interpretation function maps each expression of the form $\Fr(\varphi)$
to a function $\sem{\Fr(\varphi)}$ that assigns to each variable
assignment $\nu$ a set $\sem{\Fr(\varphi)}(\nu)$ of foreground elements. The set
$\sem{\Fr(\varphi)}(\nu)$ corresponds to the \spp{} of the formula when
the free variables are interpreted by $\nu$.  Similarly,
$\sem{\Fr(t)}$ is a function from variable assignments to sets of foreground elements. 

Based on such models, we can define the semantics of terms and
formulas in the standard way.  The only construct that is non-standard
in our logic are terms of the form $\Fr(\varphi)$, for which the
semantics is directly given by the interpretation function in Figure~\ref{fig:frame-equations}. We write $\semnu{t}$ for the interpretation of a term $t$ in
$M$ with variable assignment $\nu$.  With this convention,
$\sem{\Fr(\varphi)}(\nu)$ denotes the same thing as
$\semnu{\Fr(\varphi)}$. As usual, we write $M,\nu \models \varphi$
to indicate that the formula $\varphi$ is true in $M$ with the free
variables interpreted by $\nu$, and $\sem{\varphi}$ denotes the relation
defined by the formula $\varphi$ with free variables $\overline{x}$. 
\revII{Note that the support of $\forall y:\gamma.\,\varphi$ is the same as the support of $\exists y:\gamma.\,\varphi$ since $\forall y:\gamma.\,\varphi$ is a shorthand for $\neg(\exists y:\gamma.\,\neg\varphi)$.}

We refer to the above semantics as the
\emph{uninterpreted semantics} of $\varphi$ because we do not give a
specific meaning to inductive definitions and \spp{} expressions.

Now let us define the true semantics for FL.
The relation symbols $R \in \I$ represent inductively defined
relations, which are defined by equations of the form $R(\overline{x}) :=
\rho_R(\overline{x})$ (see Figure~\ref{fig:syntax-frame-logic}).
In the intended meaning, $R$ is interpreted as the least relation that
satisfies the equation 
\[
\sem{R(\overline{x})} = \sem{\rho_R(\overline{x})}.
\]
The usual requirement for the existence of a unique least fixpoint of the equation is that the definition of $R$ does not
negatively depend on $R$. For this reason, we require that in
$\rho_R(\overline{x})$ each occurrence of an inductive predicate $R' \in \I$
is either inside a \spp{} expression, or it occurs under an even
number of negations.\footnote{As usual, it would be sufficient to
  forbid negative occurrences of inductive predicates in mutual
  recursion.} 


Every \spp{} expression is evaluated on a model to a set of foreground elements (under a given variable assignment $\nu$). 
Formally, we are interested in models in which the \spp{} expressions are interpreted to be the sets that correspond to the \emph{smallest solution of the equations given in Figure~\ref{fig:frame-equations}}. The intuition behind these definitions 
was explained in Section~\ref{sec:design}

\begin{example} \label{ex:tree-frame}
Consider the inductive definition $tree(x)$ defined in Example~\ref{ex:tree}.
%
To check whether the equations from Figure~\ref{fig:frame-equations} indeed yield the desired \spp{}, note that the \spps{} of $\Fr(x=nil) = \Fr(x) = \Fr(\top) = \emptyset$. Below, we write $[u]$ for a variable assignment that assigns $u$ to the free variable of the formula that we are considering. Then we obtain that $\Fr(\mathit{tree}(x))[u] = \emptyset$ if $u = \nil$, and $\Fr(\mathit{tree}(x))[u] = \Fr(\alpha)[u]$ if $x \not= nil$. The formula $\alpha$ is existentially quantified with guard $\ell = \lft(x) \land r = \rght(x)$. The {\spp{}} of this guard is $\{u\}$ because mutable functions are applied to $x$. 
The {\spp{}} of the remaining part of $\alpha$ is the union of the {\spps{}} of $\mathit{tree}(\ell)[\lft(u)]$ and 
$\mathit{tree}(r)[\rght(u)]$ (the assignments for $\ell$ and $r$ that make the guard true).
So we obtain for the case that $u \not = \nil$ that 
the element $u$ enters the \spp{}, and the recursion further descends into the subtrees of $u$, as desired.\qed
\end{example}

A \emph{frame model} is a model in which the interpretation of the inductive relations and of the \spp{} expressions 
corresponds to the least solution of the respective equations. \revII{We formalize this idea and prove the following proposition in Section~\ref{sec:frame-models}.}

\begin{proposition}\label{pro:unique-frame-model}
For each model $M$, there is a unique frame model over the same universe and the same interpretation of the constants, functions, and non-inductive relations.
\end{proposition}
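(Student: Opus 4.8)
The plan is to build the interpretations of the inductive relations and of the $\Fr$-expressions as two successive least fixpoints, exploiting the fact stressed in Section~\ref{sec:design} that the $\Fr$-equations never refer to the interpretation of the inductive relations. First I would fix $M$ together with its universe and its interpretation of the constants, functions, and non-inductive relations $\R$. These data already determine the truth of every guard $\gamma$ (guards contain no symbol of $\I$ and no $\Fr$-expression) and the value of every foreground-sort term occurring as a subterm of a mutable function (such terms cannot contain an $\Fr$-expression, since the signature has no function mapping an $\sfs$-argument to another sort). Hence every quantity on the right-hand sides of the equations in Figure~\ref{fig:frame-equations}, other than the $\Fr$-values themselves, is a constant fixed by $M$.

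\emph{Stage one: solving the $\Fr$-equations.} I would observe that only finitely many syntactic $\Fr$-subexpressions are relevant, namely the set $\mathcal E$ of $\Fr$-subexpressions occurring in the bodies $\{\rho_R : R \in \I\}$ (and in the formula under consideration), closed under taking $\Fr$-subexpressions and under replacing $\Fr(R(\overline t))$ by the $\Fr$-subexpressions of $\rho_R$; this closure is finite because there are finitely many inductive definitions. The unknowns are the functions $\nu \mapsto \sem{\Fr(E)}(\nu)$ (with values in $\powset(U_{\fs})$) for $E \in \mathcal E$, and the lattice $L_{\Fr} = \prod_{E \in \mathcal E} [\text{assignments} \to \powset(U_{\fs})]$, ordered by pointwise inclusion, is complete. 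The operator $\Phi_{\Fr}$ reading off the right-hand sides of Figure~\ref{fig:frame-equations} is \emph{monotone}: each right-hand side is a union of fixed sets (the singletons $\{\semnu{t_i}\}$) with $\Fr$-values of subexpressions, under case splits governed only by fixed guard truths, and crucially the operations $\widetilde{\cdot}$ and $\in$ never act on an $\Fr$-value in these equations — only $\cup$ does. By Knaster--Tarski, $\Phi_{\Fr}$ has a unique least fixpoint $\hat g$; and since $\sem{R}$ for $R \in \I$ never appears in $\Phi_{\Fr}$, this $\hat g$ is independent of how the inductive relations are interpreted.

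\emph{Stage two: solving the relational equations.} I would then freeze the $\Fr$-values to $\hat g$, so that every $\Fr$-expression occurring in a body $\rho_R$ becomes a constant subset. On the complete lattice $L_{\I} = \prod_{R\in\I}\powset(U_{\tau_1}\times\cdots\times U_{\tau_m})$ (with $\tau_i$ the argument sorts of $R$), ordered by inclusion, the operator $\Phi_{\I}$ sending $(\sem{R})_R$ to $(\sem{\rho_R})_R$ is monotone: by the syntactic restriction every occurrence of an inductive predicate in $\rho_R$ is either inside an $\Fr$-expression — hence now a constant — or under an even number of negations, i.e.\ positive, while all guards and non-inductive relations remain fixed. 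Knaster--Tarski then yields a unique least fixpoint $\hat r$.

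\emph{Conclusion and the main obstacle.} The interpretation extending $M$ by $\hat g$ and $\hat r$ satisfies all equations and is therefore a frame model; for uniqueness I would argue that in \emph{any} frame model the $\Fr$-part must be a solution of the self-contained $\Fr$-system, hence, being least, equals $\hat g$, after which the relational part is forced to equal $\hat r$ by uniqueness of the least fixpoint of $\Phi_{\I}$. The step I expect to be most delicate is not either fixpoint computation in isolation but the justification that ``simultaneous least solution'' is well posed: the combined operator on $L_{\Fr}\times L_{\I}$ is \emph{not} monotone, because $\Fr$-values can enter $\rho_R$ through complementation and membership and thus influence $\sem{\rho_R}$ non-monotonically. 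The whole argument rests on the decoupling provided by the syntactic restrictions (guards and mutable-function subterms free of $\I$ and $\Fr$; inductive predicates either positive or $\Fr$-guarded), which is exactly what permits staging the two fixpoints instead of taking one over the product lattice; carefully verifying that these restrictions remove every non-monotone dependence is the heart of the proof.
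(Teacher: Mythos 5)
Your proposal is correct and follows essentially the same route as the paper's proof: both stage the construction as a least fixpoint of the self-contained \spp{} equations (monotone by Knaster--Tarski, using exactly the decoupling facts about guards and mutable-function subterms that the paper isolates in Lemma~\ref{lem:simple-formulas}), followed by a least fixpoint of the inductive equations with the \spps{} frozen, with uniqueness forced by leastness in each stage; you also correctly identify the same obstacle the paper addresses, namely that the combined operator is non-monotone because \spp{} values occur negatively in the bodies $\rho_R$. The only differences are cosmetic: you work over a product lattice of interpretations of (finitely many relevant) \spp{} expressions, whereas the paper orders the class of models $\Mod(\hat{M})$ by the preorder $\lef$ and partial order $\lei$ and iterates the operators $\muf$ and $\mui$ over the ordinals.
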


\subsection{Frame Models}\label{sec:heaplet-models}\label{sec:frame-models}


In this section we present a formal construction of \emph{frame models} over which FL formulae are interpreted. We first introduce some useful terminology.


A \emph{pre-model} $\hat{M}$ is defined like a model with the
difference that a pre-model does not interpret the inductive relation
symbols and the \spp{} expressions $\Fr(\varphi)$ and $\Fr(t)$. A
pre-model $\hat{M}$ spans a class of models $\Mod(\hat{M})$, namely
those that simply extend $\hat{M}$ by an interpretation of the
inductive relations and the \spp{} expressions. 

The inductive definitions of relations from $\I$ can have 
negative references to \spp{} expressions. For example, the tree definition 
from Example~\ref{ex:tree} uses \spp{} expressions in the subformula 
$\Fr(\mathit{tree}(\ell(x))) \cap \Fr(\mathit{tree}(r(x))) = \emptyset$. This formula
is true if there does not exist an element in the intersection $\Fr(\mathit{tree}(\ell(x))) \cap \Fr(\mathit{tree}(r(x)))$, 
and hence negatively refers to these \spp{} expressions.
For this reason, we need to define two partial orders that correspond to first taking the least fixpoint for the \spp{} expression, and then the least fixpoint for the inductive predicates. 

In the following, we refer to the equations for the \spp{} expressions from Figure~\ref{fig:frame-equations} as \emph{\spp{} equations}, and to the equations $\sem{R(\overline{x})} = \sem{\rho_R(\overline{x})}$ for the inductive definitions as 
the \emph{inductive equations}.

For $M_1,M_2 \in \Mod(\hat{M})$ we let $M_1 \lef M_2$ if 
\begin{itemize}
\item $\sem[M_1]{\Fr(\varphi)}(\nu) \subseteq
\sem[M_2]{\Fr(\varphi)}(\nu)$ as well as $\sem[M_1]{\Fr(t)}(\nu) \subseteq
\sem[M_2]{\Fr(t)}(\nu)$ for all \spp{} expressions and all variable assignments $\nu$.
\end{itemize}
Note that $\lef$ is not a partial order but only a preorder: for two models $M_1,M_2$ that differ only in their interpretations of the inductive relations, we have $M_1 \lef M_2$ and $M_2 \lef M_1$. We write $M_1 \slef M_2$ if $M_1 \lef M_2$ and not $M_2 \lef M_1$.

We further define $M_1 \lei M_2$ if 
\begin{itemize}
\item $\sem[M_1]{\Fr(\varphi)} =
\sem[M_2]{\Fr(\varphi)}$ as well as $\sem[M_1]{\Fr(t)} =
\sem[M_2]{\Fr(t)}$ for all \spp{} expressions, and
\item $\sem[M_1]{I} \subseteq \sem[M_2]{I}$ for all
inductive relations $I \in \I$.
\end{itemize}
The relation $\lei$ is a partial order.

We say that $M \in \Mod(\hat{M})$ is a \emph{frame model} if its
interpretation function $\sem\cdot$ satisfies the inductive equations and the \spp{} equations, and furthermore
\begin{enumerate}
\item each $M' \in \Mod(\hat{M})$ with $M' \slef M$ does not satisfy the \spp{} equations, and
\item each $M' \in \Mod(\hat{M})$ with $M' \slei M$ does not satisfy the inductive equations.
\end{enumerate}



For proving the existence of a unique frame model, we use the following lemma for dealing with guards and terms with mutable functions.
\begin{lemma}\label{lem:simple-formulas}
Let $\hat{M}$ be a pre-model, $M_1,M_2 \in \Mod(\hat{M})$, and $\nu$ be a variable assignment.
\begin{enumerate}
\item If $\varphi$ is formula that does not use inductive relations and \spp{} expressions, then $M_1, \nu \models \varphi$ iff  $M_2,\nu \models \varphi$.
\item If $t$ is a term that has no \spp{} expressions as subterms, then $\sem[M_1,\nu]{t} = \sem[M_2,\nu]{t}$.
\item If  $t = f(t_1, \ldots,t_n)$ is a term with a mutable function symbol $f \in \Fm$,
then $\sem[M_1,\nu]{t_i} = \sem[M_2,\nu]{t_i}$ for all $i$.
\end{enumerate}
\end{lemma}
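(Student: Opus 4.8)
The plan is to prove Parts~1 and~2 together by a simultaneous structural induction on the formula $\varphi$ and the term $t$, and then to derive Part~3 as a corollary of Part~2. The entire argument rests on one observation about the structure of $\Mod(\hat{M})$: since both $M_1$ and $M_2$ extend the same pre-model $\hat{M}$, they agree on the universes of all sorts, on the interpretation of every constant $c$, of every function symbol $f \in F$, and of every non-inductive relation $R \in \R$; they may differ \emph{only} on the interpretations of the inductive relations $R \in \I$ and of the \spp{} expressions $\Fr(\varphi)$ and $\Fr(t)$. Hence, whenever a formula or term avoids exactly these two classes of symbols, its evaluation cannot distinguish $M_1$ from $M_2$. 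The induction must be simultaneous because the $\ite$ construct lets terms contain guard formulas and atomic formulas contain terms; since in $\ite(\gamma : t_1, t_2)$ the guard $\gamma$ is a strictly smaller subexpression (and, by the grammar for guards, automatically free of inductive relations and \spp{} expressions), the induction on expression size is well-founded.

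For Part~2, the base cases $t = c$ and $t = x$ hold because $\sem[M_1,\nu]{c} = \sem[\hat{M}]{c} = \sem[M_2,\nu]{c}$ and $\sem[M_1,\nu]{x} = \nu(x) = \sem[M_2,\nu]{x}$. For $t = g(t_1, \ldots, t_m)$, the hypothesis that $t$ has no \spp{} subterms forces $g$ to be an ordinary function symbol (not $\Fr$) and each $t_i$ to be \spp{}-free, so the induction hypothesis gives $\sem[M_1,\nu]{t_i} = \sem[M_2,\nu]{t_i}$, and the two models agree on the interpretation of $g$ (a function symbol fixed by $\hat{M}$); the $\ite(\gamma : t_1, t_2)$ case additionally invokes Part~1 on the guard $\gamma$ to see that the same branch is selected in both models. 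For Part~1, the atomic cases $t_1 = t_2$ and $R(t_1, \ldots, t_m)$ with $R \in \R$ reduce to Part~2 on the argument terms together with the agreement of $M_1$ and $M_2$ on the interpretation of $R$; the Boolean, $\ite$, and guarded-existential cases follow routinely from the induction hypothesis, using that $D_y$ is the same universe in $M_1$ and $M_2$ so the quantifier ranges over identical elements.

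Finally, Part~3 follows from Part~2 once we observe that each argument $t_i$ of a mutable function $f \in \Fm$ is \spp{}-free. This is exactly the syntactic invariant enforced in Section~\ref{sec:design}: the only function and relation symbols involving the sort $\sfs$ are the set operations, so a mutable $f$ takes no argument of sort $\sfs$; since every \spp{} expression has sort $\sfs$ and the set operations keep $\sfs$-typed subterms confined to $\sfs$-typed terms, no term of a sort other than $\sfs$ can contain a \spp{} subterm, and in particular the $t_i$ cannot. Applying Part~2 to each $t_i$ then yields $\sem[M_1,\nu]{t_i} = \sem[M_2,\nu]{t_i}$. The only genuinely delicate point in the whole proof is setting up the simultaneous induction so that the $\ite$/atomic cross-references decrease the induction measure --- everything else is a direct consequence of $M_1$ and $M_2$ sharing the underlying pre-model.
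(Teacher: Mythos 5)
Your proposal is correct and follows essentially the same route as the paper: the paper dismisses Parts~1 and~2 as immediate from the fact that models in $\Mod(\hat{M})$ agree on everything except inductive relations and \spp{} expressions (your simultaneous induction just spells this out), and it proves Part~3 by exactly your observation that the sort restrictions on $\sfs$ prevent arguments of mutable functions from containing \spp{} subterms, so Part~2 applies.
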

\begin{proof}
Parts 1 and 2 are immediate from the fact that $M_1$ and $M_2$ only differ in the interpretation of the inductive relations and \spp{} expressions. For the third claim, note that we assumed that the only functions involving arguments of sort $\sfs$ are the standard functions for set manipulation. Hence, a term build from a mutable function symbol cannot have \spp{} expressions as subterms. Therefore, the third claim follows from the second one.
\end{proof}

\revII{We now prove Proposition~\ref{pro:unique-frame-model} (see Section~\ref{sec:frames}). We first restate the proposition formally using the terminology developed above.}
\begin{proposition}\label{pro:unique-lfp}
For each pre-model $\hat{M}$, there is a unique frame model in
$\Mod(\hat{M})$.
\end{proposition}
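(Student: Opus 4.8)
The plan is to \emph{decouple} the two fixpoint computations: first take the least fixpoint of the support equations (independently of how the inductive relations are interpreted), and only then take the least fixpoint of the inductive equations with the supports frozen. This two-stage structure mirrors the two preorders $\lef$ and $\lei$ in the definition of a frame model, and it is exactly what is needed to tolerate the \emph{negative} dependence of inductive definitions on support expressions: a single simultaneous fixpoint would not be monotone, but a sequential one is.

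First I would fix a pre-model $\hat M$ and regard the support equations of Figure~\ref{fig:frame-equations} as defining an operator $\Phi$ on the complete lattice of \emph{support interpretations} --- assignments giving, for each support expression $\Fr(\varphi)$, $\Fr(t)$ and each variable assignment $\nu$, a set of foreground elements, ordered pointwise by inclusion (this is the order underlying $\lef$). The key observation, established via Lemma~\ref{lem:simple-formulas}, is that the right-hand side of every support equation depends only on (a) data already fixed by $\hat M$, namely the truth values of the guards (part 1) and the valuations $\semnu{t_i}$ of the non-$\sfs$ argument terms (part 2, using that no function maps into a sort other than $\sfs$ from $\sfs$, so such terms contain no support subexpressions), and (b) the values of other support expressions, combined only by union. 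In particular the right-hand sides never consult the interpretation of the inductive relations. Hence $\Phi$ is monotone and independent of the inductive relations, and by Knaster--Tarski it has a unique least fixpoint $\mathcal F^\ast$.

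Second I would freeze the support interpretation at $\mathcal F^\ast$ and view the inductive equations $\sem{R(\overline x)} = \sem{\rho_R(\overline x)}$ as an operator $\Psi$ on the lattice of inductive-relation interpretations. Because every occurrence of an inductive symbol $R' \in \I$ in $\rho_R$ is either positive or inside a support expression, and the latter are now fixed constants (evaluated through $\mathcal F^\ast$), $\Psi$ is monotone; Knaster--Tarski again yields a unique least fixpoint $\mathcal R^\ast$. Extending $\hat M$ by $\mathcal F^\ast$ and $\mathcal R^\ast$ gives a model $M^\ast$ that satisfies both the support equations and the inductive equations, which settles existence.

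For uniqueness I would show that the two minimality conditions pin down $M^\ast$ exactly. Satisfying the support equations is the same as being a fixpoint of $\Phi$; since $\mathcal F^\ast$ is the least such fixpoint, any frame model whose support interpretation differed from $\mathcal F^\ast$ would admit an $M'$ with $M' \slef M$ satisfying the support equations, violating condition (1), so condition (1) holds iff the support interpretation equals $\mathcal F^\ast$. Given this, condition (2) --- whose comparison $\lei$ ranges only over models sharing the same support interpretation --- forces the inductive relations to be the $\subseteq$-least fixpoint of $\Psi$, namely $\mathcal R^\ast$. Hence $M^\ast$ is the unique frame model in $\Mod(\hat M)$. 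The main obstacle, and the point demanding care, is the decoupling argument of the first stage: one must verify, via the syntactic restrictions (guards free of inductive relations and support expressions, and the absence of any function from $\sfs$ into another sort), that $\Phi$ genuinely ignores the inductive interpretation, since this is precisely what licenses computing the two fixpoints sequentially instead of through a single, non-monotone, simultaneous fixpoint.
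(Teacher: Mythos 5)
Your proposal is correct and follows essentially the same route as the paper's proof: a two-stage least-fixpoint construction (supports first, then inductive relations with supports frozen), each stage justified by Knaster--Tarski, with Lemma~\ref{lem:simple-formulas} supplying the decoupling that makes the support operator independent of the inductive relations. The only difference is presentational --- you work directly on lattices of support and relation interpretations, whereas the paper defines the operators $\muf$ and $\mui$ on $\Mod(\hat M)$ and restricts to the appropriate sublattices --- and your explicit ``iff'' reading of the two minimality conditions just spells out the uniqueness step the paper states more tersely.
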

\begin{proof}
The \spp{} equations define an operator $\muf$ on $\Mod(\hat{M})$. This operator $\muf$ is defined in a
standard way, as explained in the following. Let $M \in
\Mod(\hat{M})$. Then $\muf(M)$ is a model in $\Mod(\hat{M})$ where $\sem[\muf(M)]{\Fr(\varphi)}$, resp.\ $\sem[\muf(M)]{\Fr(t)}$, is
obtained by taking the right-hand side of the corresponding
equation. For example, $\sem[\muf(M)]{\Fr(\varphi_1 \land \varphi_2)}(\nu) =
\sem{\Fr(\varphi_1)}(\nu) \cup \sem{\Fr(\varphi_2)}(\nu)$.
The interpretation of the inductive predicates is left unchanged by $\muf$.

We can show that $\muf$ is a monotonic operator on $(\Mod(\hat{M}), \lef)$, that is,  for all $M_1,M_2 \in \Mod(\hat{M})$ with $M_1 \lef M_2$ we have that $\muf(M_1) \le \muf(M_2)$.
It is routine to check monotonicity of $\muf$ by induction on the structure of the \spp{} expressions. We use Lemma~\ref{lem:simple-formulas} for the only cases in which the semantics of formulas and terms is used in the \spp{} equations, namely $\ite$-formulas, existential formulas, and terms $f(t_1, \ldots, t_n)$ with mutable function $f$.
Consider, for example, the \spp{} equation 
\[
\begin{array}{l}
\sem{\Fr(f(t_1, \ldots, t_n))}(\nu)\\ {} =
  \bigcup\limits_{i \text{ with } t_i \text{ of sort } \fs}\{\semnu{t_i}\} \cup \bigcup\limits_{i=1}^n \sem{\Fr(t_i)}(\nu)
\end{array}
\]
for $f \in \Fm$, and let $M_1 \lef M_2$ be in $\Mod(\hat{M})$ and
$\nu$ be a variable assignment. Then
\[
\begin{array}{l}
 \sem[\muf(M_1)]{\Fr(f(t_1, \ldots, t_n))}(\nu) \\
  = \bigcup\limits_{i \text{ with } t_i \text{ of sort } \fs}
  \{\sem[M_1,\nu]{t_i}\} \cup \bigcup\limits_{i=1}^n \sem[M_1]{\Fr(t_i)}(\nu)\\
  \stackrel{(1)}{=}
  \bigcup\limits_{i \text{ with } t_i \text{ of sort } \fs}
  \{\sem[M_2,\nu]{t_i}\} \cup \bigcup\limits_{i=1}^n \sem[M_1]{\Fr(t_i)}(\nu)\\
  \stackrel{(2)}{\subseteq}
  \bigcup\limits_{i \text{ with } t_i \text{ of sort } \fs}
  \{\sem[M_2,\nu]{t_i}\} \cup \bigcup\limits_{i=1}^n \sem[M_2]{\Fr(t_i)}(\nu)\\
  =  \sem[\muf(M_2)]{\Fr(f(t_1, \ldots, t_n))}(\nu)
  \end{array}
\]
where (1) holds because of Lemma~\ref{lem:simple-formulas}, and (2) holds because $M_1 \lef M_2$.

As a further case, consider the \spp{} equation for $R(\overline{t})$ where $R$ is an inductively defined relation and $t = (t_1, \ldots, t_n)$.
\[
\begin{array}{l}
 \sem[\muf(M_1)]{\Fr(R(\overline{t}))}(\nu) \\
  = \sem[M_1]{\Fr(\rho_R(\overline{x})}(\nu[\overline{x} \leftarrow \sem[M_1,\nu]{\overline{t}}]) \cup \bigcup\limits_{i=1}^n \sem[M_1]{\Fr(t_i)}(\nu)\\
  \stackrel{(*)}{\subseteq} \sem[M_2]{\Fr(\rho_R(\overline{x})}(\nu[\overline{x} \leftarrow \sem[M_2,\nu]{\overline{t}}]) \cup \bigcup\limits_{i=1}^n \sem[M_2]{\Fr(t_i)}(\nu)\\
  = \sem[\muf(M_2)]{\Fr(R(\overline{t}))}(\nu)
   \end{array}
\]
For the inclusion $(*)$ we use the fact that the $t_i$ do not contain \spp{} expressions as subterms by our restriction of the type of inductively defined relations. Hence, by Lemma~\ref{lem:simple-formulas}, $\sem[M_1,\nu]{t_i} = \sem[M_2,\nu]{t_i}$. 

Similarly, one can show the inclusion for the other \spp{} equations.

We also obtain an operator $\mui$ from the inductive equations, 
which leaves the interpretation of the \spp{} expressions
unchanged. The operator $\mui$ is monotonic on $(\Mod(\hat{M}),\lei)$
because inductive predicates can only be used positively in the
inductive definitions, and furthermore $\lei$ only compares models
with the same interpretation of the \spp{} expressions.

In order to obtain the unique frame model, we first consider the
subset of $\Mod(\hat{M})$ in which all inductive predicates are
interpreted as empty set. On this set of models, $\lef$ is a partial order 
and forms
a complete lattice (the join and meet for the lattice are
obtained by taking the pointwise union, respectively intersection, of
the interpretations of the \spp{} expression). By the Knaster-Tarski
theorem, there is a unique least fixpoint of $\muf$. This fixpoint can be obtained
by iterating $\muf$ starting from the model in $\Mod(\hat{M})$ that
interprets all inductive relations and the \spp{} expression by the
empty set (in general, this iteration is over the ordinal numbers, not just the natural numbers). 
Let $M_f$ be this least fixpoint.

The subset of $\Mod(\hat{M})$ in which the \spp{} expressions are interpreted as in $M_f$ forms
a complete lattice with the partial order $\lei$. Again by the Knaster-Tarski Theorem, there is a unique least fixpoint.
This least fixpoint can be obtained by iterating the operator $\mui$ starting from $M_f$ (again, the iteration is over the ordinals).

Denote the resulting model by $M_{f,i}$. It interprets the \spp{} expressions in the same way as $M_f$, and thus $M_f \lef M_{f,i}$ and $M_{f,i} \lef M_i$. By monotonicity of $\muf$, $M_{f,i}$ is also a fixpoint of $\muf$ and thus satisfies the \spp{} equations. Hence $M_{f,i}$ satisfies the inductive equations and the \spp{} equations. It can easily be checked that $M_{f,i}$ also satisfies the other conditions of a frame model:
Let $M \in \Mod(\hat{M})$ with $M \slef M_{f,i}$. Then also $M \lef M_f$ and assuming that $M$ satisfies the \spp{} equations yields a smaller fixpoint of $\muf$, and thus a contradiction. Similarly, a model $M \slei M_{f,i}$ cannot satisfy the inductive equations.

It follows that 
$M_{f,i}$ is a frame model in $\Mod(\hat{M})$. Uniqueness follows from the uniqueness of the least fixpoints of $\muf$ and $\mui$ as used in the construction of $M_{f,i}$.
\end{proof}

\subsection*{\revII{Consequences of Frame Logic Semantics}}
\label{sec:consequences}

\revII{We now highlight some consequences of the semantics of FL and highlight differences with other logics.}

\revII{\textbf{Global Heap Semantics vs Local Heap Semantics:}}
\revII{Our semantics is \emph{global} in the sense that formulas are interpreted on the entire model. Supports of formulas define a subset of the model that capture the part on which their depends on, and roughly correspond to \emph{heaplets} in separation logic. Separation logic semantics is however defined not on the global heap but with respect to local heaplets (i.e., captured using rules of the form
$s, h \models \varphi$), where $h$ is a heaplet~\cite{reynolds02}, and hence corresponding to a local heap semantics.}

\revII{Consider the FL formula $\exists y: y=f(x).\, \top$. This formula is valid in FL but the corresponding SL formula $\exists y.\, x \overset{f}{\mapsto} y$ is not valid; it holds on the heaplet $h$ where $h$ is a singleton location containing the valuation of $x$, but not on other heaplets. 
Note that $Sp(\exists y: y=f(x).\, \top)$
will be this heaplet $h$.}




\revII{\textbf{Substitutions:}} \revII{Another interesting consequence of FL semantics is that substituting a formula with another equivalent formula (in terms of truthhood) may not result in equivalent formulas.}


\revII{Consider the formulas $\top$ and $f(x) = f(x)$ for a mutable function $f$. These are equivalent in FL in terms of truthhood (they are both valid, holding in all models). 
Consider the valid formula $\Fr(f(x) = f(x)) = \{x\}$. 
Substituting $f(x)=f(x)$ with $\top$ will result in an invalid formula $\Fr(\top)=\{x\}$.}

\revII{More precisely, there can be formulas that are equivalent in terms of truthhood (in the global heap) but have different supports, and hence one may not be substituted by another in every context. However, it is true that if $\alpha$ and $\beta$ are equivalent in terms of truthhood \emph{and} have the same supports in every model, then they can be substituted for each other. This motivates a stronger definition of equivalence for FL formulas:}



\begin{definition}[\revII{Equivalence of FL formulas}]
\label{def:fl-equiv}
\revII{FL formula $\varphi_1$ and $\varphi_2$ are said to be equivalent, denoted $\varphi_1 \equiv \varphi_2$ if for every frame model $M$ and interpretation of free variables $\nu$, $M, \nu \vDash \varphi_1$ iff $M, \nu \vDash \varphi_2$ \emph{and} $\semnu{\Fr(\varphi_1)} = \semnu{\Fr(\varphi_2)}$.}
\end{definition}

\revII{Formulas are not equivalent if they merely imply each other; it must also be the case that their supports are equal. 
Reasoning directly in FL requires care in order to cater the above notion. However, as we point out in Section~\ref{sec:reasoning-with-fl}, we can convert FL to FO-RD and then use standard reasoning for FO-RD (including normal substitution that FO-RD allows).}




\revII{\textbf{Supports and Heaplets:}}
\revII{In FL, our design decision is to have precisely \emph{one} support for any formula. In separation logic, there can several heaplets under which a formula can hold true. For example, the formula $\top$ holds in \emph{any} heaplet in separation logic, while we have chosen the support of $\top$ (and that of $\bot$) to be $\emptyset$.}

\revII{Note that $\alpha \implies (\alpha \wedge \top)$ and 
$\alpha \implies (\alpha \vee \bot)$ are both valid in FL.
And further, $\Fr(\alpha) = \Fr(\alpha \wedge \top) = \Fr(\alpha \vee \bot)$, by our semantics.
} 

\revII{Notice that $\alpha \wedge \neg(\alpha) \implies \bot$ is valid in FL, despite the fact that the support of the antecedent and that of the consequent may be different (unlike separation logic). Similarly, $\alpha \vee \neg \alpha \implies \top$ is also valid.}

\subsection{A Frame Theorem}\label{sec:frame-theorem}
The \spp{} of a formula can be used for frame reasoning in the following sense: if we modify
a model $M$ by changing the interpretation of the mutable functions
(e.g., a program modifying pointers), then truth values of formulas do not change if the change happens outside the \spp{} of
the formula. This is formalized and proven below.

Given two models $M,M'$ over the same universe, we say that $M'$ is a
\emph{mutation of $M$} if $\sem{R} = \sem[M']{R}$, $\sem{c} = \sem[M']{c}$, and $\sem{f} = \sem[M']{f}$,
for all constants $c$, relations $R \in \R$, and
functions $f \in F \setminus \Fm$. In other words, 
$M$ can only be different from $M'$ on the interpretations of the mutable functions, the inductive relations, and the \spp{} expressions.

Given a subset $X \subseteq U_{\fs}$ of the elements from the
foreground universe, we say that the \emph{mutation is stable on $X$}
if the values of the mutable functions did not change on arguments
from $X$, that is, $\sem{f}(u_1, \ldots, u_n) =
\sem[M']{f}(u_1, \ldots, u_n)$ for all mutable functions $f \in \Fm$ and
all appropriate tuples $u_1, \ldots, u_n$ of arguments with $\{u_1, \ldots,
u_n\} \cap X \not= \emptyset$.

\begin{theorem}[Frame Theorem]\label{thm:frametheorem}
Let $M,M'$ be frame models such that $M'$ is a mutation of $M$ that is
stable on $X \subseteq U_{\fs}$, and let $\nu$ be a variable assignment. 
Then $M,\nu \models \alpha$ iff $M',\nu \models \alpha$ for all formulas $\alpha$
with $\sem{\Fr(\alpha)}(\nu) \subseteq X$, and $\semnu{t} = \sem[M',\nu]{t}$ for all terms $t$ with 
$\sem{\Fr(t)}(\nu) \subseteq X$.
\end{theorem}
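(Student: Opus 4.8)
The plan is to prove the two claims of the theorem---truth-value agreement for formulas and value agreement for terms---simultaneously by induction on the structure of the formula or term, stated uniformly over \emph{all} variable assignments $\nu$ (the quantifier and inductive cases force appeals to the hypothesis at shifted assignments such as $\nu[y \leftarrow u]$ and $\nu[\overline x \leftarrow \semnu{\overline t}]$). To push the induction through I would strengthen it with an auxiliary \emph{\spp{}-agreement} claim: whenever $\sem{\Fr(\alpha)}(\nu) \subseteq X$ we have $\sem{\Fr(\alpha)}(\nu) = \sem[M']{\Fr(\alpha)}(\nu)$, and similarly for terms. This strengthening is genuinely needed, both for the term case $t = \Fr(\varphi)$---where the value of the term \emph{is} a \spp{}---and to keep the two models' \spp{} computations aligned throughout.

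The base and propositional cases are routine and all follow the same template. For a constant or variable the \spp{} is empty and agreement is immediate since $M'$ is a mutation of $M$. For $f(t_1, \ldots, t_n)$ with $f \notin \Fm$, the equation $\sem{\Fr(f(\overline t))}(\nu) = \bigcup_i \sem{\Fr(t_i)}(\nu)$ forces every $\sem{\Fr(t_i)}(\nu) \subseteq X$, so the arguments agree by the hypothesis and $\sem{f} = \sem[M']{f}$. The pivotal mutable case $f \in \Fm$ additionally puts each foreground argument value $\semnu{t_i}$ into the \spp{}, hence into $X$; since a mutable symbol has at least one foreground argument, \emph{stability of the mutation on $X$} yields $\semnu{f(\overline t)} = \sem[M',\nu]{f(\overline t)}$. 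Conjunction and negation are immediate from their union/identity \spp{} equations. The $\ite$ and guarded-existential cases work because their \spp{} equations are designed to always contain the \spp{} of the guard (at every relevant witness): the hypothesis applied to the structurally smaller guard makes its truth value agree across $M$ and $M'$, so the \emph{same} branch is taken (resp.\ the same witnesses satisfy the guard) in both models, and agreement of the active branch/witnesses then transfers upward.

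The only case not covered by plain structural induction is an inductive atom $R(\overline t)$ with $R \in \I$, whose body $\rho_R$ is not a subformula and whose meaning is a least fixpoint. First I would use the hypothesis on the $t_i$ to get $\overline a := \semnu{\overline t} = \sem[M',\nu]{\overline t}$, reducing the goal to ``$\overline a \in \sem{R}$ iff $\overline a \in \sem[M']{R}$''. I would then run a nested induction on the ordinal stages of the fixpoint, $\sem{R}^{k}$ and $\sem[M']{R}^{k}$, proving stage-wise agreement for all tuples whose \spp{} lies in $X$, and taking unions at the limit. The successor step unfolds $\rho_R$ once: treating the stage-$k$ relations as ordinary relations, it is again the structural argument of the previous paragraph over $\rho_R$, with each \emph{inner} inductive atom $R'(\overline s)$ discharged by the outer (ordinal) hypothesis. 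The reason the two stage computations stay synchronised is the \spp{} equation $\sem{\Fr(R(\overline t))}(\nu) \supseteq \sem{\Fr(\rho_R(\overline x))}(\nu[\overline x \leftarrow \overline a])$ together with the fact that \spps{} descend only into \emph{active} ($\ite$- and guard-satisfied) branches: every inner atom and guard that actually influences the truth of $R(\overline t)$ has its \spp{} contained in $\sem{\Fr(R(\overline t))}(\nu) \subseteq X$, so the hypotheses really are available for each relevant subgoal.

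Two facts make the bookkeeping sound, and the second is the main obstacle. First, by the syntactic restrictions (guards use no inductive relations or \spp{} expressions, and mutable-function arguments carry no \spp{} subterms), the \spp{} and term-value equations never consult the interpretation of the inductive relations---only the base model, the guards, and the foreground subterms of mutable functions. Hence the auxiliary \spp{}-agreement claim can be settled by its own fixpoint induction, fully \emph{decoupled} from the relation fixpoint, with stability on $X$ discharging each mutable step. Second---the delicate point---one must verify the invariant that the \spp{} of a compound really does dominate the \spps{} of exactly those subparts on which its truth depends, uniformly in $\nu$ and across the nested inductions, so that the induction hypothesis is available at every stage. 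This ``supports dominate the truth-relevant dependencies'' property is precisely what the equations of Figure~\ref{fig:frame-equations} are engineered to enforce, and making it precise simultaneously for the relation fixpoint and the \spp{} fixpoint is where the real work of the proof lies.
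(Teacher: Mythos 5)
Your proposal is correct and takes essentially the same route as the paper's own proof: both combine a transfinite induction over the ordinal stages of the fixpoint construction with a structural induction on formulas/terms, both rely on the fact that guards and the arguments of mutable functions are independent of the inductive relations and \spp{} expressions (the paper's Lemma~\ref{lem:simple-formulas}), and both discharge the mutable-function case via stability of the mutation on $X$, with all \spp{}-containment side conditions evaluated in the final frame model $M$. The only difference is organizational---the paper makes the ordinal induction the outer one with the structural induction nested inside (and obtains your auxiliary \spp{}-agreement claim as the term case $t = \Fr(\varphi)$, since \spp{} expressions are terms), whereas you make the structural induction outermost and nest the ordinal induction at inductive atoms.
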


We dedicate the rest of this subsection to the proof of the Frame Theorem. 
\begin{proof}
  The intuition behind the statement of the theorem should be
  clear. The \spp{} of a formula/term contains the elements on which
  mutable functions are dereferenced in order to evaluate the formula/term. 
  If the mutable functions do not change on this set,
  then the evaluation does not change.

  %
  For a formal proof of the Frame Theorem, we refer to the terminology and definitions introduced in
  Section~\ref{sec:frame-models}, and to the proof of
  Proposition~\ref{pro:unique-lfp} in Section~\ref{sec:frame-models}, in which the unique frame model is obtained 
  by iterating the operators $\muf$ and $\mui$, which are defined by the \spp{} equations 
  and the inductive equations. 
  
  In general, this iteration of the operators ranges over ordinals (not just natural numbers).
  For an ordinal $\eta$, let $M_\eta$ and $M'_\eta$ be the models at step $\eta$
  of the fixpoint iteration for obtaining the frame models $M$ and
  $M'$. So the sequence of the $M_\eta$ have monotonically increasing 
  interpretations of the inductive relations and \spp{} expressions, and
  are equal to $M$ on the interpretation of the other relations and functions.
  The frame model $M$ is obtained at some stage $\xi$ of the fixpoint 
  iteration, so $M = M_\xi$. More precisely, the frame model is contructed by first
  iterating the operator $\muf$ until the fixpoint of the \spp{} expressions is reached.
  During this iteration, the inductive relations are interpreted as empty. 
  Then the operator $\mui$ is iterated until also the inductive relations reach their fixpoint.
  Below, we do an induction on $\eta$. In that induction, we do not explicitly distinguish these two phases
  because it does not play any role for the arguments (only in one place and we mention it explicitly there).
  
  By induction on $\eta$, we can show that $M_\eta,\nu \models
  \varphi \Leftrightarrow 
  M'_\eta,\nu \models \varphi$, and $\sem[M_\eta,\nu]{t} =
  \sem[M'_\eta,\nu]{t}$ for all variable assignments $\nu$ and all 
  formulas $\varphi$  with $\sem{\Fr(\varphi)}(\nu) \subseteq
  X$, respectively terms $t$ with $\sem{\Fr(t)}(\nu) \subseteq X$. For each $\eta$, we
  furthermore do an induction on the structure of the formulas,
  respectively terms.

  Note that the assumption that the \spp{} is contained in $X$ refers
  to the \spp{} in $M$. So when applying the induction, we have to
  verify that the condition on the \spp{} of a formula/term is
  satisfied in $M$ (and not in $M_\eta$).

  For the formulas, the induction is straight forward, using
  Lemma~\ref{lem:simple-formulas} in the cases of existential
  formulas and $ite$-formulas.
  Consider, for example, the case of an existential formula $\psi =
  \exists y:\gamma.\varphi$ with $\sem{\Fr(\psi)} \subseteq X$.
  \[
  \begin{array}{ll}
  & M_\eta,\nu \models \exists y:\gamma.\varphi \\ 
  \Leftrightarrow&
  \mbox{exists } u \in D_y: M_\eta,\nu[y \leftarrow u] \models \gamma\\
  & \mbox{ and } M_\eta,\nu[y \leftarrow u] \models \varphi \\
   \stackrel{(*)}{\Leftrightarrow}& 
  \mbox{exists } u \in D_y: M_\eta',\nu[y \leftarrow u] \models \gamma \\
  & \mbox{ and } M_\eta',\nu[y \leftarrow u] \models \varphi \\
   \Leftrightarrow & M_\eta',\nu \models \exists y:\gamma.\varphi 
  \end{array}
  \]
  where $(*)$ holds by induction on the structure of the formula. We
  only have to verify that $\sem{\Fr(\gamma)}(\nu[y \leftarrow u])
  \subseteq X$ and $\sem{\Fr(\varphi)}(\nu[y \leftarrow u])
  \subseteq X$ in order to use the induction hypothesis.

  Since $\gamma$ is a guard of an existential formula, it satisfies
  the condition of Lemma~\ref{lem:simple-formulas}, and therefore its
  truth value is the same in $(M_\eta,\nu[y \leftarrow u])$ for all
  ordinals $\eta$ (Lemma~\ref{lem:simple-formulas} applies because all the models $M_\eta$ 
  differ only in the interpretations of the
  \spp{} expressions and inductive relations, and thus have the same pre-model). 
  In particular, $M,\nu[y \leftarrow u] \models
  \gamma$ since $M = M_\xi$ for some ordinal $\xi$. From the equations
  for the \spps{} we obtain $\sem{\Fr(\gamma)}(\nu[y \leftarrow u])
  \subseteq \sem{\Fr(\psi)}(\nu)$ and $\sem{\Fr(\varphi)}(\nu[y \leftarrow
    u]) \subseteq \sem{\Fr(\psi)}(\nu)$. The desired claim now follows
  from the fact that $\sem{\Fr(\psi)}(\nu) \subseteq X$.

  For inductive relations $R$ with definition $R(\overline{x}) := \rho_R(\overline{x})$, we have to use the induction on the ordinal
  $\eta$. Assume that $\varphi = R(\overline{t})$ for $\overline{t} = (t_1, \ldots, t_n)$, and that
  $\sem{\Fr(\varphi)}(\nu) \subseteq X$. Then $\sem{\Fr(\rho_R(\overline{x})}(\nu[\overline{x} \leftarrow \overline{t}])
  \subseteq X$ and  $\sem{\Fr(t_i)}(\nu) \subseteq X$ for all $i$ by the
  \spp{} equations.

  For the case of a limit ordinal $\eta$, the inductive relations of
  $M_\eta$, resp.~$M'_\eta$, are
  obtained by taking the union of the interpretations of the
  inductive relations for all $M_\zeta$, resp.~$M'_\zeta$, for all $\zeta <
  \eta$. So the claim follows directly by induction.

  For a successor ordnial $\eta+1$, we can assume that we are in the second phase of the construction
  of the frame model (the iteration of the operator $\mui$). For the first phase the claim trivially holds because
  all the inductive relations are interpreted as empty. Thus, we have
   \[
  \begin{array}{rcl}
  M_{\eta+1},\nu \models R(\overline{t}) &\Leftrightarrow &
    M_\eta,\nu \models \rho_R(\overline{t}) \\
    &\Leftrightarrow &
    M_\eta,\nu[\overline{x} \leftarrow \sem[M_\eta,\nu]{\overline{t}}] \models \rho_R(\overline{x}) \\
  &\stackrel{(*)}{\Leftrightarrow} &
  M_\eta',\nu[\overline{x} \leftarrow \sem[M_\eta',\nu]{\overline{t}}] \models \rho_R(\overline{x}) \\
   &\Leftrightarrow & M_{\eta+1}',\nu \models R(\overline{t})
  \end{array}
  \]
  where $(*)$ holds by induction on $\eta$.
  
  The other cases for formulas are similar (or simpler).

  Concerning the terms, we also present some cases only, the other
  cases being similar or simpler.

  We start with the case $t = f(t_1, \ldots, t_n)$ for a mutable
  function $f$. Let $\nu$ be a variable assignment with $\semnu{t}
  \subseteq X$. By the \spp{} equations, $\semnu{t_i}
  \subseteq X$ for all $i$. We have $\sem[M_\eta,\nu]{t} =
  \sem{f}(\sem[M_\eta,\nu]{t_1}, \ldots, \sem[M_\eta,\nu]{t_n})$. By
  induction on the structure of terms, we have $\sem[M_\eta,\nu]{t_i}
  = \sem[M_\eta',\nu]{t_i} =: u_i$. 
  By Lemma~\ref{lem:simple-formulas}, we conclude that
  $\sem[M_\eta,\nu]{t_i} = \semnu{t_i}$. Since $f$ is mutable, it contains at least one argument of sort $\fs$, say $t_j$. Then $\semnu{t_j} \in
  \sem{\Fr(t)}(\nu) \subseteq X$, and the mutation did not change the function value of $f$ on
  the tuple $(u_1, \ldots, u_n)$. So 
  we obtain in summary that $\sem[M',\nu]{t}
  = \sem[M']{f}(u_1,
  \ldots, u_n) = \sem{f}(u_1,\ldots,u_n) =\semnu{t}$.

  Now consider terms of the form $\Fr(\varphi)$. We need to proceed by
  induction on the structure of $\varphi$. We present the case of
  $\varphi = \ite(\gamma:\varphi_1,\varphi_2)$.  Let $\nu$ be a
  variable assignment with $\sem{\Fr(\varphi)}(\nu) \subseteq X$. Assume
  that $M_\eta,\nu \models \gamma$. By the condition on guards,
  Lemma~\ref{lem:simple-formulas} yields that $M,\nu \models \gamma$
  and thus $\sem{\Fr(\gamma)}(\nu) \subseteq X$ and
  $\sem{\Fr(\varphi_1)}(\nu) \subseteq X$. We obtain
  \[
  \begin{array}{rcl}
    \sem[M_\eta]{\Fr(\varphi)}(\nu) &=& \sem[M_\eta]{\Fr(\gamma)}(\nu)
    \cup  \sem[M_\eta]{\Fr(\varphi_1)}(\nu) \\
    &\stackrel{(*)}{=}& \sem[M_\eta']{\Fr(\gamma)}(\nu)
    \cup  \sem[M_\eta']{\Fr(\varphi_1)}(\nu) \\
    &=& \sem[M_\eta']{\Fr(\varphi)}(\nu)
  \end{array}
  \]
  where $(*)$ follows by induction on the structure of the formula inside the \spp{} expression. The case $M_\eta,\nu \not\models
  \gamma$ is analogous.

  Now consider $\Fr(\varphi)$ with $\varphi = R(\overline{t})$ for an inductively defined relation $R$ with definition $R(\overline{x}) = \rho_R(\overline{x})$ and $\overline{t} = (t_1, \ldots, t_n)$.  Let $\nu$ be a
  variable assignment with $\sem{\Fr(\varphi)}(\nu) \subseteq X$. 
  By the \spp{} equations, $\sem{\Fr(\rho_R(\overline{x}))}(\nu[\overline{x} \leftarrow \semnu{\overline{t}}]) \subseteq X$ and $\sem{\Fr(t_i)}(\nu) \subseteq X$.
  
  Let $\eta+1$ be a successor ordinal. 
 Then
 \[
\begin{array}{l}
 \sem[M_{\eta+1}]{\Fr(R(\overline{t}))}(\nu) \\
  = \sem[M_{\eta}]{\Fr(\rho_R(\overline{x}))}(\nu[\overline{x} \leftarrow \sem[M_\eta,\nu]{\overline{t}}]) \cup \bigcup\limits_{i=1}^n \sem[M_\eta]{\Fr(t_i)}(\nu)\\
  \stackrel{(*)}{=} 
    \sem[M_{\eta}']{\Fr(\rho_R(\overline{x}))}(\nu[\overline{x} \leftarrow \sem[M_\eta',\nu]{\overline{t}}]) \cup \bigcup\limits_{i=1}^n \sem[M_\eta']{\Fr(t_i)}(\nu)\\
 = \sem[M_{\eta+1}']{\Fr(R(\overline{t}))}(\nu)
   \end{array}
\]
where $(*)$ holds by induction on $\eta$. We can apply the induction hypothesis because the terms $t_i$ do not contain
\spp{} expressions by the restriction on the type of inductive relations, and thus $\sem[M_\eta',\nu]{\overline{t}} = \sem[M_\eta,\nu]{\overline{t}} = \semnu{\overline{t}}$ by Lemma~\ref{lem:simple-formulas}.

  The proof of the other cases works in a similar fashion.
  \end{proof}
  





\subsection{Reduction from Frame Logic to FO-RD} \label{sec:reduction-to-FO-RD}

The only extension of frame logic compared to FO-RD is the operator $\Fr$, which defines a function from interpretations of free variables to sets of foreground elements. The semantics of this operator can be captured within FO-RD itself, so reasoning within frame logic can be reduced to reasoning within FO-RD. 

A formula $\alpha(\overline{y})$ with $\overline{y} = y_1, \ldots,y_m$ has one \spp{} for each interpretation of the free variables.
We capture these \spps{} by an inductively defined
 relation $\Fr_\alpha(\overline{y},z)$ of arity $m+1$ such that for each
frame model $M$, we have $(u_1, \ldots,u_m,u) \in \sem{\Fr_\alpha}$
if $u \in \sem{\Fr(\alpha)}(\nu)$ for the interpretation $\nu$ that
interprets $y_i$ as $u_i$.

Since the semantics of $\Fr(\alpha)$ is defined over the structure of
$\alpha$, we introduce corresponding inductively defined relations $\Fr_\beta$ and $\Fr_t$ for all
subformulas $\beta$ and subterms $t$ of either $\alpha$ or of a formula $\rho_R$ for $R \in \mathcal{I}$.


\begin{figure*}
\[
  \begin{array}{rcl}
    \Fr_c(\overline{y},z) &:=& \bot ~~~~\mbox{ for a constant $c$}\\[3pt]
    \Fr_x(\overline{y},z) &:=& \bot ~~~~\mbox{ for a variable $x$}\\[3pt]
    \Fr_{f(t_1, \ldots, t_n)}(\overline{y},z) &:=& 
  \begin{cases}
  \left(\bigvee\limits_{t_i \text{ of sort } \fs} z = t_i\right) \lor \bigvee\limits_{i=1}^n \Fr_{t_i}(\overline{y},z) & \mbox{if } f \in \Fm \\[3pt]
  \bigvee\limits_{i=1}^n \Fr_{t_i}(\overline{y},z) & \mbox{if } f \not\in \Fm 
  \end{cases}\\[3pt]
  \Fr_{\Fr(\beta)}(\overline{y},z) &:=& \Fr_{\beta}(\overline{y},z)\\[3pt]
  \Fr_{R(t_1, \ldots, t_n)}(\overline{y},z) &:=& \bigvee_{i=1}^n \Fr_{t_i}(\overline{y},z)  \mbox{
  for $R \in \R$}\\[3pt]
 \revII{\Fr_{(\top)}(\overline{y},z)} &\revII{:=}& \revII{\bot} \\[3pt]
  \Fr_{(\bot)}(\overline{y},z) &:=& \bot \\[3pt]
 \Fr_{(t_1 = t_2)}(\overline{y},z) &:=& \Fr_{t_1}(\overline{y},z) \lor \Fr_{t_2}(\overline{y},z) \\[3pt]
 \Fr_{R(\overline{t})}(\overline{y},z)  &:=& \Fr_{\rho_{R(\overline{x})}}(\overline{y},z)[\overline{t}/\overline{x}] \lor \bigvee_{i=1}^n \Fr_{t_i}(\overline{y},z)\\[3pt]
 && \mbox{for $R \in \I$ with definition $R(\overline{x}) := \rho_R(\overline{x})$} \\[3pt]
\revII{\Fr_{\beta_1 \land \beta_2}(\overline{y},z)  }&\revII{:=}& \revII{\Fr_{\beta_1}(\overline{y},z) \lor \Fr_{\beta_2}(\overline{y},z)}\\[3pt]
\Fr_{\lnot \beta}(\overline{y},z) &:=& \Fr_{\beta}(\overline{y},z)\\[3pt]
\Fr_{\ite(\gamma:\beta_1,\beta_2)}(\overline{y},z)  &:=&
\Fr_{\gamma}(\overline{y},z) \lor
\ite(\gamma:\Fr_{\beta_1}(\overline{y},z),\Fr_{\beta_2}(\overline{y},z)) \\[3pt]
\Fr_{\ite(\gamma:t_1,t_2)}(\overline{y},z)  &:=& \Fr_{\gamma}(\overline{y},z)
\lor \ite(\gamma:\Fr_{t_1}(\overline{y},z),\Fr_{t_2}(\overline{y},z)) \\[3pt]
\Fr_{\exists x: \gamma.\beta}(\overline{y},z) &:=& \exists
x. (\Fr_{\gamma}(\overline{y},z) \lor (\gamma \land \Fr_{\beta}(\overline{y},z)))
  \end{array}
\]
\caption{Translation of \spp{} equations to FO-RD.}
\label{fig:translation-to-ford}
\end{figure*}

The equations for \spps{} from Figure~\ref{fig:frame-equations} can be expressed by inductive definitions for the relations $\Fr_\beta$. The translations are shown in Figure~\ref{fig:translation-to-ford}. 
For the definitions, we assume that $\overline{y}$ contains all
variables that are used in $\alpha$ and in the formulas $\rho_R$ of the inductive definitions. We further assume that 
each variable either is used in at most one of the formulas $\alpha$ or $\rho_R$, and either only occurs freely in it, or is quantified
at most once. The relations
$\Fr_\beta$ are all of arity $m+1$, even if the subformulas do not use
some of the variables. In practice, one would rather use relations of
arities as small as possible, referring only to the relevant
variables. In a general definition, this is, however, rather
cumbersome to write, so we use this simpler version in which we do not
have to rearrange and adapt the variables according to their use in
subformulas.

For the definition of $\Fr_{R(\overline{t})}$ where $R$ is an inductively defined relation, note that the variables $\overline{x}$ from the definition of $R$ are contained in $\overline{y}$ by the above assumptions, and are substituted by the terms in $\overline{t}$ in the first part of the formula.
Similiarly, the quantified variable $x$ in an existential formula is contained in $\overline{y}$.

\begin{figure}
\begin{align*}
    \lst(x) :=\; &\ite(x = nil, \top, \exists z : z = \nxt(x).\; \lst(z) \land x \not\in \Fr(\lst(z)) \tag{linked list} \\
    \dll(x) :=\; &\ite(x = nil : \top, \ite(\nxt(x) = nil : \top, \exists z : z = \nxt(x). \\
       &\prv(z) = x \land \dll(z) \land x \not\in \Fr(\dll(z)))) \tag{doubly linked list} \\
    \lseg(x,y) :=\; &\ite(x = y : \top, \exists z : z = \nxt(x).\; \lseg(z,y) \land x \not\in \Fr(\lseg(z, y))) \tag{linked list segment} \\
    \len(x,n) := \; &\ite(x = nil : n = 0, \exists z : z = \nxt(x).\; \len(z, n - 1)) \tag{length of list} \\
    \slst(x) :=\; &\ite(x = nil : \top, \ite(\nxt(x) = nil, \top, \exists z : z = \nxt(x). \\
    &\key(x) \leq \key(z) \land \slst(z) \land x \not\in \Fr(\slst(z)))) \tag{sorted list} \\
    \mkee(x, M) :=\; &\ite(x = nil : M = \emptyset, \exists z, M_1 : z = \nxt(x). \\
    &M = M_1 \cup_{m} \{\key(x)\} \land \mkee(z, M_1)) \land x \not\in \Fr(\mkee(z, M_1)) \tag{multiset of keys in linked list} \\
    \btree(x) :=\; &\ite(x = nil : \top, \exists \ell, r : \ell = \lft(x) \land r = \rght(x). \\
    &\btree(\ell) \land \btree(r) \land x \not\in \Fr(\btree(\ell)) \land x \not\in \Fr(\btree(r))\; \land \\
    &\Fr(\btree(\ell)) \cap \Fr(\btree(r)) = \emptyset) \tag{binary tree} \\
    \revII{\minkey(x, n) :=\; }&\revII{\ite(x = nil: \infty, \exists \ell, r, n_1, n_2 : \ell = \lft(x) \land r = \rght(x)\, \land} \\
    & \revII{n_1 = \minkey(l) \land n_2 = \minkey(r).\, n = \mathit{min}(n_1,\key(x),n_2))} \tag{\revII{minimum key in a tree/dag}}\\
    \revII{\maxkey(x, n) :=\; }&\revII{\ite(x = nil: -\infty, \exists \ell, r, n_1, n_2 : \ell = \lft(x) \land r = \rght(x)\, \land} \\
    & \revII{n_1 = \maxkey(l) \land n_2 = \maxkey(r).\, n = \mathit{max}(n_1,\key(x),n_2))} \tag{\revII{maximum key in a tree/dag}}\\
    \bst(x) :=\; &\ite(x = nil : \top, \exists \ell, r,\mathit{maxl},\mathit{minr} : \ell = \lft(x) \land r = \rght(x)\, \land \\ 
    & \maxkey(l,\mathit{maxl}) \land \minkey(r,\mathit{minr}).\,\\
    & \mathit{maxl} \leq \key(x) \land \key(x) \leq \mathit{minr}\; \land \\ 
    &\bst(\ell) \land \bst(r) \land x \not\in \Fr(\bst(\ell)) \land x \not\in \Fr(\bst(r))\, \land \\
    &\Fr(\bst(\ell)) \cap \Fr(\bst(r)) = \emptyset)))) \tag{binary search tree} \\
    \height(x, n) :=\; &\ite(x = nil : n = 0, \exists \ell, r, n_1, n_2: \ell = \lft(x) \land r = \rght(x). \\
    &\height(\ell, n_1) \land \height(r, n_2) \land \ite(n_1 > n_2 : n = n_1 + 1, n = n_2 + 1)) \tag{height of binary tree} \\
    \bfac(x, b) :=\; &\ite(x = nil : 0, \exists \ell, r, n_1, n_2 : \ell = \lft(x) \land r = \rght(x). \\ 
    &\height(\ell, n_1) \land \height(r, n_2) \land b = n_2 - n_1) \tag{balance factor (for AVL tree)} \\
    \avl(x) :=\; &\ite(x = nil : \top, \exists \ell, r : \ell = \lft(x) \land r = \rght(x). \\
    &\avl(\ell) \land \avl(r) \land \bfac(x) \in \{-1,0,1\}\; \land \\
    & x \not\in \Fr(\avl(\ell)) \cup \Fr(\avl(r)) \land \Fr(\avl(\ell)) \cap \Fr(\avl(r)) = \emptyset) \tag{avl tree} \\
    \ttree(x) :=\; &\pttree(x, nil) \tag{threaded tree} \\
    \pttree(x,p) :=\;& \ite(x = nil : \top, \exists \ell, r : \ell = \lft(x) \land r = \rght(x). \\
    &((r = nil \land \tnext(x) = p) \lor (r \neq nil \land \tnext(x) = r))\; \land \\
    &\pttree(\ell, x) \land \pttree(r, p) \land x \not\in \Fr(\pttree(\ell, x)) \cup \Fr(\pttree(r, p))\; \land \\
    &\Fr(\pttree(\ell, x)) \cap \Fr(\pttree(r, p)) = \emptyset) \tag{threaded tree auxiliary definition}
\end{align*}
\caption{Example definitions of data-structures and other predicates in Frame Logic}
\label{fig:rec-defs-list}
\end{figure}

It is not hard to see that general frame logic formulas can be translated to FO-RD formulas that make use of these new inductively defined relations.

\begin{proposition} 
For every frame logic formula there is an equisatisfiable FO-RD formula
with the signature extended by auxiliary predicates for recursive definitions of \spps{}. 
\end{proposition}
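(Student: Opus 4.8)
The plan is to eliminate the single non-FO-RD construct, the operator $\Fr(\cdot)$, by internalizing its defining equations (Figure~\ref{fig:frame-equations}) as genuine inductive definitions. Fix a frame logic formula $\alpha$ together with the recursive definitions $\recdef$ that it uses. For every subformula $\beta$ and subterm $t$ occurring either in $\alpha$ or in some body $\rho_R$ with $R \in \I$, and with $\overline{y}$ denoting its free variables, I would introduce a fresh relation symbol $\Fr_\beta$ (resp.\ $\Fr_t$) of arity $|\overline{y}|+1$, intended to satisfy $(\overline{u},u) \in \sem{\Fr_\beta}$ iff $u \in \sem{\Fr(\beta)}(\nu)$ for the assignment $\nu$ mapping $\overline{y}$ to $\overline{u}$. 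The extra argument $z$ plays the role of a candidate member of the support set, so that the set-valued support equations become ordinary relational ones.

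First I would transcribe each support equation into an inductive definition for the corresponding $\Fr_\beta$, replacing the set-theoretic operations by their logical counterparts: set union becomes disjunction, the singleton $\{\semnu{t_i}\}$ in the mutable-function case becomes the equality $z = t_i$, membership becomes relation application, and negation of a support is simply dropped since $\sem{\Fr(\lnot\varphi)}=\sem{\Fr(\varphi)}$. The two clauses needing care are the $\ite$-clause and the existential clause, where the guard $\gamma$ appears both positively and negatively. Because guards contain neither inductive relations nor \spp{} expressions, $\gamma$ is already an FO-RD formula and, crucially, inert with respect to the fixpoint; I would write $\Fr_{\ite(\gamma:\alpha,\beta)}(\overline{y},z) := \Fr_\gamma(\overline{y},z) \lor (\gamma \land \Fr_\alpha(\overline{y},z)) \lor (\lnot\gamma \land \Fr_\beta(\overline{y},z))$, and similarly existentially quantify the guarded variable in the $\exists$-clause. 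For an application $R(\overline{t})$ of an inductive relation I would substitute the actual terms into the body's support relation, yielding $\Fr_{R(\overline{t})}(\overline{y},z) := \Fr_{\rho_R}(\overline{t},z) \lor \bigvee_i \Fr_{t_i}(\overline{y},z)$, which mirrors the assignment update $\nu[\overline{x} \leftarrow \semnu{\overline{t}}]$ in Figure~\ref{fig:frame-equations}.

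Next I would transform $\alpha$ itself. Since $\sfs$ is already a background sort of FO-RD, I keep it and remove only the $\Fr$-terms: each maximal subterm $\Fr(\beta)$ of sort $\sfs$ is replaced by a fresh set variable $s_\beta$, and I conjoin, under an outer existential over the $s_\beta$, the defining constraint $\forall z.\,(z \in s_\beta \leftrightarrow \Fr_\beta(\overline{y},z))$. All remaining set operations ($\cup,\cap,\widetilde{\cdot},\emptyset,\in$) are left untouched, since they belong to $B_{\sfs}$. This produces an FO-RD formula over the signature extended by the $\Fr_\beta$ predicates together with their inductive definitions.

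The correctness argument, and the main obstacle, is to show that the enlarged system of recursive definitions---those of $\recdef$ together with all the new $\Fr_\beta$---still admits a simultaneous least fixpoint computing exactly the intended supports, so that the construction preserves satisfiability. The key observation, already built into the design of FL, is that the support equations never refer to the interpretation of the inductive relations $R$ themselves, but only to guards, to subterms under mutable functions, and to the supports $\Fr_{\rho_R}$ of the bodies. Hence in the combined monotone operator the $\Fr_\beta$ occur only positively, the original $R$ occur positively as required by $\recdef$, and the guards act as fixed parameters that do not threaten monotonicity despite their negative occurrences. I would therefore invoke the least-fixpoint existence underlying Proposition~\ref{pro:unique-frame-model} for the combined system, and then prove by induction on the structure of the support equations that, in the resulting least fixpoint, $\sem{\Fr_\beta}$ is precisely $\{(\overline{u},u) : u \in \sem{\Fr(\beta)}(\nu),\ \nu(\overline{y})=\overline{u}\}$. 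Combined with Proposition~\ref{pro:unique-frame-model}, which lets us pass between an arbitrary model and its unique frame model, this shows that the translated formula is satisfiable exactly when $\alpha$ is, establishing equisatisfiability. The points I expect to demand the most care are the substitution bookkeeping in the $R(\overline{t})$ clause---avoiding variable capture and correctly threading the ambient free variables $\overline{y}$ through nested supports---and the precise monotonicity argument certifying that this mixed inductive system is well-defined.
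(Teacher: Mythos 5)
Your proposal is correct and follows essentially the same route as the paper: the paper likewise introduces, for every subformula and subterm of $\alpha$ and of the bodies $\rho_R$, an inductively defined relation $\Fr_\beta(\overline{y},z)$ of arity one more than the number of variables, transcribes the \spp{} equations of Figure~\ref{fig:frame-equations} into inductive definitions exactly as you describe (this is Figure~\ref{fig:translation-to-ford} in the appendix, including your substitution clause $\Fr_{R(\overline{t})}(\overline{y},z) := \Fr_{\rho_R(\overline{x})}(\overline{y},z)[\overline{t}/\overline{x}] \lor \bigvee_i \Fr_{t_i}(\overline{y},z)$), and justifies well-definedness and equisatisfiability by the same stratified fixpoint argument underlying Proposition~\ref{pro:unique-frame-model}, since the \spp{} definitions never mention the inductive relations themselves. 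The remaining differences are cosmetic: the paper keeps $\ite$ in the translated clauses rather than expanding it to $(\gamma \land \cdot) \lor (\lnot\gamma \land \cdot)$, and it leaves the final replacement of $\Fr$-terms inside $\alpha$ informal, where you spell it out via existentially quantified set variables constrained by $\forall z\,(z \in s_\beta \leftrightarrow \Fr_\beta(\overline{y},z))$.
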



%

\subsection{Expressing Data-Structures Properties in FL}

We now present the formulation of several data-structures and properties about them in FL. 
Figure~\ref{fig:rec-defs-list} depicts formulations of singly- and doubly-linked lists, list segments, lengths of lists, sorted lists, the multiset of keys stored in a list (assuming a background sort of multisets), binary trees, their heights, and AVL trees. In all these definitions, the \spp{} operator plays a crucial role. We also present a formulation of \emph{single threaded binary trees} (adapted from~\cite{brinck81}), which are binary trees where, apart from tree-edges, there is a pointer $tnext$ that connects every tree node to the inorder successor in the tree; these pointers go from leaves to ancestors arbitrarily far away 
in the tree, making it a nontrivial definition. 

We believe that FL formulas naturally and succinctly express these data-structures and their properties, making it an attractive logic for annotating programs.

\section{Programs and Proofs}\label{sec:prog-and-prfs}

In this section, we develop a program logic for a while-programming language that can destructively update heaps.
%
We assume that location variables are denoted by variables of the form $x$ and $y$, whereas variables that denote other data (which would correspond to the \emph{background} sorts in our logic) are denoted by $v$. We omit the grammar to construct background terms and formulas, and simply denote such `background expressions' with $be$ and clarify the sort when it is needed. Finally, we assume that our programs are written in Single Static Assignment (SSA) form, which means that every variable is assigned to at most once in the program text. The grammar for our programming language is in Figure~\ref{fig:prog-lang}.

\begin{figure}
\begin{align*}
    S &::= x\, :=\, c\;\; |\;\; x\, :=\, y\;\; |\;\; x\, :=\, y.f\;\; |\;\; v\, :=\, be\;\; |\;\; x.f\, :=\, y
    \\ &\;\;|\; \;\mathsf{alloc}(x)\;\; |\;\; \mathsf{free}(x) 
    \;\; |\;\; \mathsf{if}\; be\; \mathsf{then}\; S\; \mathsf{else}\; S\;\;
    |\;\; \mathsf{while}\; be\; \mathsf{do}\; S\; 
    |\;\; S\, ; \, S
\end{align*}
\caption{Grammar of while programs. $c$ is a constant location, $f$ is a field pointer, and $be$ is a background expression. In our logic,
we model every field $f$ as a function $f()$
from locations to the appropriate sort.}
\label{fig:prog-lang}
\end{figure}

\begin{figure}
\begin{align*}
\bot &\xRightarrow{\;\;*\;\;} \bot \\
(M, H, U ) &\xRightarrow{x:=y} ( M[x \mapsto y], H, U) \\
(M, H, U ) &\xRightarrow{x:=c} ( M[x \mapsto c], H, U) \\
(M, H, U ) &\xRightarrow{v:=be} ( M[v \mapsto be], H, U) \\
(M, H, U ) &\xRightarrow{x:=y.f}( M[x \mapsto f(y)], H, U) \text{, if $M(y)\in H$} \\
(M, H, U ) &\xRightarrow{x:=y.f} \bot \text{, if $M(y)\not\in H$} \\
(M, H, U ) &\xRightarrow{\mathsf{if}\; be\; \mathsf{then}\; S\; \mathsf{else}\; T} ( M', H', U') \text{, if $M \models be$ and $(M, H, U ) \xRightarrow{S} ( M', H', U')$} \\
(M, H, U ) &\xRightarrow{\mathsf{if}\; be\; \mathsf{then}\; S\; \mathsf{else}\; T} ( M', H', U') \text{, if $M \not\models be$ and $(M, H, U ) \xRightarrow{T} ( M', H', U')$} \\
(M, H, U ) &\xRightarrow{\mathsf{while}\; be\; \mathsf{do}\; S} ( M', H', U') \text{, if $M \models be$ and $(M, H, U ) \xRightarrow{S\; ;\; \mathsf{while}\; be\; \mathsf{do}\; S}( M', H', U')$} \\
(M, H, U ) &\xRightarrow{\mathsf{while}\; be\; \mathsf{do}\; S} (M, H, U ) \text{, if $M \not\models be$} \\
(M, H, U ) &\xRightarrow{x.f:=y}( M[f \mapsto f[M(x) \mapsto M(y)]], H, U) \text{, if $M(x)\in H$} \\
(M, H, U ) &\xRightarrow{x.f:=y} \bot \text{, if $M(x) \not\in H$} \\
\revII{(M, H, U )} &\revII{\xRightarrow{\mathsf{alloc}(x)}( M[x \mapsto a][f \mapsto f[a \mapsto \mathit{def}_f]], H \cup \{a\}, U \setminus \{a\}) \text{, for all $f \in F$}}  \\
& \hspace{5em} \revII{\text{for some $a \in U$ 
}} \\
( M[x \mapsto a], H, U ) &\xRightarrow{\mathsf{free}(x)}(M, H \setminus \{a\}, U) \text{, if $a \in H$} \\
(M[x \mapsto a], H, U ) &\xRightarrow{\mathsf{free}(x)} \bot \text{, if $a \not\in H$} \\
(M, H, U ) &\xRightarrow{S\; ;\; T}( M'', H'', U'') \text{, if $(M, H, U ) \xRightarrow{S} ( M', H', U')$} \\
\; & \hspace{8em} \text{and $( M', H', U') \xRightarrow{T} ( M'', H'', U'' )$} \\
\end{align*}
\caption{Operational Semantics of Frame Logic Programming Language}
\label{fig:op-sem}
\end{figure}

\subsection{Operational Semantics}\label{sec:op-sem}

In this Section we will discuss the operational semantics of our programs. First, we extend the signature of our logic by an infinite set $\Dummyloc = \{\dummyloc_i \mid i \in \N \}$ of constants of the type of the foreground sort, i.e. heap locations. We also add constants $\{\mathit{def}_f\}_{f \in F}$ to denote `default' values.

A configuration in our operational semantics is of the form $(M,H,U)$ where $M$ is a model that contains interpretations for the store and the heap. The store is a partial map that interprets variables, constants, and non-mutable functions over universes of the appropriate sorts. The heap is a total map on the domain of locations that interprets mutable functions. $H$ is a subset of (the universe of) locations denoting the set of allocated locations, and $U$ is a subset of locations denoting unallocated locations that can be allocated in the future. Lastly, we introduce a special configuration $\bot$ to denote an error state.

A configuration $(M,H,U)$ is \emph{valid} if:
\begin{itemize}
    \item all program variables of the location sort (including $nil$) map only to locations not in $U$.
    \item $U$ does not intersect with $H$,
    \item $U$ is infinite.
    \item the constants $\dummyloc_i$ are interpreted to distinct locations (i.e., no two constants in $\Dummyloc$ are interpreted to the same location) that are neither in $H$ nor in $U$. We will later use these constants to design our program logic rule for allocation.
    \item it is not possible to reach any location in $U$ or $\Dummyloc$ from any location in $H$. Simply, locations in $H$ do not point to locations in $U$ or $\Dummyloc$.
    \item locations in $U$ as well as those interpreted by $\Dummyloc$ have default values under functions, defined by the interpretation of the constants $\{\mathit{def}_f\}_{f \in F}$. 
\end{itemize}

We denote this by $\valid(M,H,U)$. We will demand that initial configurations of programs are valid, and maintain this by constructing our operational semantics rules to ensure that reachable configurations of programs are valid. Observe that validity of a configuration is expressible in First-Order Logic.




The full operational semantics are in Figure~\ref{fig:op-sem}. We abuse notation by using $M(x)$ to mean the value of $x$ as interpreted by the model $M$, as well as using the expression $M[x \mapsto y]$ to denote that the model is updated with the variable $x$ now storing the value stored by the variable/expression $y$ in the original model.

$\bot$ is a sink state, and every statement on $\bot$ transitions to $\bot$. The pointer lookup rule changes the store where the variable $x$ now maps to $f(y)$, provided $y$ is a location that is allocated. The pointer modification rule modifies the heap on the function $f$, where the store's interpretation for $x$ now maps to the store's interpretation for $y$ (again, provided $x$ is a location that is allocated). The allocation rule is the only nondeterministic rule in the operational semantics, as there is a transition for each $a \in U$. For each such $a$, the store is modified where $x$ now points to $a$. Additionally, the heap is modified for each function $f$ where the newly allocated $a$ maps to the default value under each $f$. Further, note since freed elements are not added back to $U$, freed locations cannot be reallocated. However, since $U$ is infinite, and dereferenced pointers must be in $H$, this does not pose a problem. All other rules are straightforward.

Note that when side conditions are violated as in the lookup rule or pointer modification rule, the configuration transitions to $\bot$, which denotes an abort or fault configuration. These faulting transitions are crucial for the soundness of the frame rule. \revII{The allocation rule, however, will always succeed for valid configurations because we demand that $U$ is infinite for valid configurations.}

\subsection{Triples and Validity}\label{sec:triples-validity}
We express specifications of programs using triples of the form $\{\alpha\}S\{\beta\}$ where $\alpha$ and $\beta$ are FL formulas in our extended signature and $S$ is a program in our while-language. \revII{However, we restrict the formulas that can appear in the specifications. First, we disallow atomic relations between locations. Second, we disallow functions from a background sort to the foreground sort (see Section~\ref{sec:frame-logic}). Finally, quantified formulas can have supports as large as the entire heap but we want our program logic to cover a more practical fragment without compromising expressive power. Thus, we require guards in quantification to be of the form $f(z') = z$ where $z$ is the quantified variable. We will maintain this as an invariant in the formulas that we generate for, say, weakest preconditions. Apart from these restrictions, we will also assume that formulas only feature unary functions for ease of presentation.}

We define a triple to be \emph{valid} if every valid configuration 
with heaplet being precisely the \spp{} of $\alpha$, when acted on by the program, yields a configuration with heaplet being the \spp{} of $\beta$.
More formally, a triple is valid if for every valid configuration $(M,H,U)$ such that $M \models \alpha$, $H = \sem[M]{\Fr(\alpha)}$: 

\noindent
(1) it is never the case that the abort state $\bot$ is encountered in the execution on $S$
    
\noindent
(2) if $(M,H,U)$ transitions to $(M',H',U')$ on $S$, then $M' \models \beta$ and $H' = \sem[M']{\Fr(\beta)}$

\revII{Note that the post-configuration $(M',H',U')$ will also be valid as a consequence of our transition rules, using the fact that $(M,H,U)$ is valid.}

\subsection{Program Logic}\label{sec:prog-log}

First, we define a set of \emph{local rules}
and rules for conditionals, while, sequence, consequence, and framing. \revII{Recall we assume there is at least one mutable function $f \in \Fm$ (see Section~\ref{sec:frame-logic}) in our signature.}
\begin{align*}
    \text{\textbf{Assignment:}} \;\; &\{\top\}\; x\,:=\, y\; \{x = y\} \hspace*{5ex} \{\top\}\; x\,:=\, c\; \{x = c\} \\
    \text{\textbf{Lookup:}} \;\; &\{f(y) = f(y)\}\; x\, :=\, y.f\; \{x = f(y)\} \\
    \text{\textbf{Mutation:}} \;\; &\{f(x) = f(x)\}\; x.f\, :=\, y\; \{f(x) = y\} \\
    \text{\textbf{Allocation:}} \;\; &\{\top\}\; \mathsf{alloc}(x)\; \{\bigwedge_{f \in F} f(x) = \mathit{def}_f \} \\
    \text{\textbf{Deallocation:}} \;\; &\{f(x) = f(x)\}\; \mathsf{free}(x)\; \{\top\} \\
    \text{\textbf{Conditional:}} \;\; &\hspace*{-23.9ex}
    \centerline{
    \infer{\{\alpha\}\; \mathsf{if}\; be\; \mathsf{then}\; S\; \mathsf{else}\; T\; \{\beta\}}{\{be \land \alpha\}\; S\; \{\beta\} & \{\neg be \land \alpha\}\; T\; \{\beta\}}} \\
    \text{\textbf{While:}} \;\; &\hspace*{-23.1ex} 
    \centerline{
    \infer{\{\alpha\}\; \mathsf{while}\; be\; \mathsf{do}\; S\; \{\neg be \land \alpha\}}{\{\alpha \land be\}\; S\; \{\alpha\}}} \\
    \text{\textbf{Sequence:}} \;\; &\hspace*{-23.1ex}
    \centerline{
    \infer{\{\alpha\}\; S\;;\; T\; \{\mu\}}{\{\alpha\}\; S\; \{\beta\} & \{\beta\}\; T\; \{\mu\}}} \\ 
    \text{\textbf{Consequence:}} \;\;&\hspace*{-17ex}
    \centerline{
    \infer{\{\alpha'\}\; S\; \{\beta'\}}{
    \begin{aligned} \alpha' \implies \alpha \\ \beta \implies \beta' \end{aligned} & \{\alpha\}\; S\; \{\beta\} &
    \begin{aligned} \Fr(\alpha) = \Fr(\alpha') \\ \Fr(\beta) = \Fr(\beta') \end{aligned}}} \\
    \text{\textbf{Frame:}} \;\;&\hspace*{-12ex}
    \centerline{
    \infer[\mathit{vars}(S) \cap \mathit{fv}(\mu) = \emptyset]{\{\alpha \wedge \mu\}\; S\; \{\beta \wedge \mu\}}{\Fr(\alpha) \cap \Fr(\mu) = \emptyset & \{\alpha\}\; S\; \{\beta\}}}
\end{align*}

The above rules are intuitively clear and are similar to the local rules in separation logic~\cite{reynolds02}. The rules for statements capture their semantics using minimal/tight heaplets,
and the frame rule allows proving triples with larger heaplets.
Some seemingly trivial preconditions as in the lookup, mutation, and allocation rules are added to ensure tight heaplets, i.e. that the support of the precondition is equal to the support of the postcondition, modulo any \textsf{alloc}/\textsf{free} statements. In the rule for $\mathsf{alloc}$, the postcondition says that the newly allocated
location has default values for all pointer fields and datafields (denoted as $\textit{def}_f$).
The soundness of the frame rule relies crucially on the frame theorem for FL (Theorem~\ref{thm:frametheorem}). The full soundness proof can be found in Section~\ref{sec:prog-log-proofs}.

\begin{theorem}\label{thm:soundness-local}
The above rules are sound with respect to the operational semantics.
\end{theorem}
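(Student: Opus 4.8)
The plan is to prove soundness by induction on the structure of the derivation of a triple $\{\alpha\}S\{\beta\}$, showing that each axiom is valid and that each inference rule preserves validity. Throughout we work directly from the definition of validity in Section~\ref{sec:triples-validity}: starting from any valid configuration $(M,H,U)$ with $M \models \alpha$ and $H = \sem[M]{\Fr(\alpha)}$, we must check (i) that $S$ never reaches $\bot$, and (ii) that every resulting configuration $(M',H',U')$ satisfies $M' \models \beta$ and $H' = \sem[M']{\Fr(\beta)}$. A small auxiliary lemma --- that each command maps valid configurations to valid configurations --- is needed so that the compositional rules can chain applications; this follows from the side conditions in the operational semantics and is routine.

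For the atomic axioms the argument is a direct computation of supports using the equations in Figure~\ref{fig:frame-equations}. The recurring idea is that the self-equality preconditions are engineered precisely to force the accessed location into the heaplet. For \textbf{Lookup}, $\sem{\Fr(f(y)=f(y))} = \{\sem{y}\}$ since $f$ is mutable, so the hypothesis $H = \{\sem{y}\}$ guarantees $\sem{y} \in H$ and dereferencing does not abort; after the assignment $x = f(y)$ holds, and its support is again $\{\sem{y}\}$, matching $H' = H$. \textbf{Mutation} and \textbf{Deallocation} are analogous: $\sem{\Fr(f(x)=f(x))} = \{\sem{x}\}$ ensures $\sem{x} \in H$ so the update or free is defined, and the postcondition supports are $\{\sem{x}\}$ and $\emptyset$ respectively, matching the heap after the operation. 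For \textbf{Assignment} both pre- and postcondition have empty support and the heap is untouched. For \textbf{Allocation}, the precondition support is empty so $H = \emptyset$, and the fresh location drawn from $U$ becomes the sole element of $H'$, which coincides with the support $\{\sem{x}\}$ of $\bigwedge_f f(x) = \mathit{def}_f$; the default field values make the postcondition true.

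The structural rules are largely routine. \textbf{Sequence} chains the two induction hypotheses, and \textbf{While} additionally requires an inner induction on the number of loop iterations to show that the invariant $\alpha$ and configuration validity are preserved and that no iteration aborts. \textbf{Conditional} splits on the value of the background test $be$; because $be$ is a background expression its foreground support is empty, so $\sem{\Fr(be \land \alpha)} = \sem{\Fr(\alpha)} = H$, and the matching branch hypothesis applies. \textbf{Consequence} is where the support side conditions earn their keep: the equalities $\Fr(\alpha)=\Fr(\alpha')$ and $\Fr(\beta)=\Fr(\beta')$ are exactly what is needed to transport the tight-heaplet requirement $H = \sem{\Fr(\alpha')}$ to $H = \sem{\Fr(\alpha)}$ before applying the premise, and back again for the postcondition, while the implications transport truth.

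The main obstacle is the \textbf{Frame} rule, whose soundness rests on Theorem~\ref{thm:frametheorem}. Given a valid configuration with $M \models \alpha \land \mu$ and $H = \sem[M]{\Fr(\alpha \land \mu)} = \sem[M]{\Fr(\alpha)} \uplus \sem[M]{\Fr(\mu)}$ (disjoint by the hypothesis $\Fr(\alpha) \cap \Fr(\mu) = \emptyset$), the plan is to isolate the sub-configuration carrying the tight heaplet $\sem[M]{\Fr(\alpha)}$, apply the induction hypothesis for $\{\alpha\}S\{\beta\}$, and then lift the result back to the full heap. Two facts must be established with care. First, a \emph{locality} property of the operational semantics: executing $S$ on the enlarged heap performs exactly the same mutations as on the tight heaplet and confines all changes to $\sem[M]{\Fr(\alpha)}$ together with freshly allocated cells; here the $\mathsf{alloc}$ case needs that fresh locations, drawn from $U \subseteq \widetilde{H}$, avoid $\sem[M]{\Fr(\mu)} \subseteq H$, so the frame region is never touched or reused. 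Consequently the resulting $M'$ is a mutation of $M$ that is stable on $X = \sem[M]{\Fr(\mu)}$. Second, we invoke Theorem~\ref{thm:frametheorem} twice: once on the formula $\mu$, using $\sem[M]{\Fr(\mu)}(\nu) \subseteq X$ together with the side condition $\mathit{vars}(S) \cap \fv(\mu) = \emptyset$ (so the store reassignments made by $S$ also do not disturb $\mu$), to conclude $M' \models \mu$; and once on the set-sorted term $\Fr(\mu)$, whose own support satisfies $\sem{\Fr(\Fr(\mu))} = \sem{\Fr(\mu)} \subseteq X$ by the idempotence equation, to conclude $\sem[M']{\Fr(\mu)} = \sem[M]{\Fr(\mu)}$. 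Combining, $M' \models \beta \land \mu$ and $H' = \sem[M']{\Fr(\beta)} \cup \sem[M']{\Fr(\mu)} = \sem[M']{\Fr(\beta \land \mu)}$, as required. I expect verifying the locality property of each command and the precise bookkeeping of the allocation set $U$ across the split to be the most delicate part of the argument.
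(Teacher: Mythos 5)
Your proof is correct, and for the structural rules it coincides in substance with the paper's; the genuine difference is in how the command axioms are handled. You verify Assignment, Lookup, Mutation, Allocation and Deallocation directly against the operational semantics, computing their supports from Figure~\ref{fig:frame-equations} (the self-equalities $f(y)=f(y)$ and $f(x)=f(x)$ forcing the dereferenced cell into the tight heaplet). The paper instead \emph{derives} each local axiom from the corresponding global weakest-tightest-precondition rule of Section~\ref{sec:weakest-pre-rules}: it instantiates the global rule with the local postcondition, computes the precondition --- e.g.\ $\wtp(x=f(y),\, x:=y.f)$ collapses to the tautology $f(y)=f(y) \land y \in \Fr(f(y)=f(y))$ --- and checks that this is equivalent to the stated local precondition with equal support, so the Consequence rule finishes. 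The paper's organization buys reuse: the soundness of the global rules (Theorems~\ref{thm:lookup-soundness}, \ref{thm:pointer-soundness}, \ref{thm:alloc-soundness}, and~\ref{thm:dealloc-soundness}) must be proven anyway, and the local rules are exhibited as instances of them; your route is more elementary and self-contained, avoiding the $\mathit{MW}$ transformations and their supporting lemmas at the cost of redoing support computations that the global proofs already contain. On the Frame rule your argument is essentially identical to the paper's (Theorem~\ref{thm:soundness-frame}): a step-by-step correspondence between execution on the tight heaplet $\sem[M]{\Fr(\alpha)}$ and on the enlarged heap --- where touching or freeing a cell of $\sem[M]{\Fr(\mu)}$ is excluded because the tight execution would then abort, contradicting validity of the premise $\{\alpha\}S\{\beta\}$ --- so that $M'$ is a mutation of $M$ stable on $\sem[M]{\Fr(\mu)}$, after which Theorem~\ref{thm:frametheorem} applies. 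Two of your touches are in fact slightly more careful than the paper's text: applying the frame theorem a second time to the \emph{term} $\Fr(\mu)$ (licensed by the equation $\sem{\Fr(\Fr(\mu))}=\sem{\Fr(\mu)}$ in Figure~\ref{fig:frame-equations}) to transport the heaplet equality $\sem[M']{\Fr(\mu)}=\sem[M]{\Fr(\mu)}$, where the paper instead strengthens its execution invariant; and noting explicitly that the side condition $\mathit{vars}(S)\cap\fv(\mu)=\emptyset$ is what lets you disregard store updates, which the frame theorem's notion of mutation does not by itself cover.
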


\subsection{Weakest-Precondition Proof Rules}\label{sec:weakest-pre-rules}
We now turn to the much more complex problem of designing rules that give weakest preconditions for arbitrary postconditions, for loop-free
programs. In separation logic, such rules resort to using the magic wand operator $-*$~\cite{demri15,ohearn12,ohearn01,reynolds02}. The magic wand is a complex operator whose semantics calls for \emph{second-order quantification}~\cite{magic-wand-complexity} over arbitrarily large submodels. In our setting, our main goal is to show
that FL is itself capable of expressing weakest preconditions of postconditions written in FL.

First, we define a notion of \textit{Weakest Tightest Precondition} (WTP) of a formula $\beta$ with respect to each command that can figure in a basic block: assignment, lookup, mutation, allocation, and deallocation. To define this notion, we first define the notion of a preconfiguration:


\begin{definition}\label{def:precfg}
The \textit{preconfigurations} corresponding to a valid configuration $(M, H, U)$ with respect to a program $S$ are a set of valid configurations of the form $(M_p,H_p, U_p)$ such that when $S$ is executed on $M_p$ with unallocated set $U_p$ it dereferences only locations in $H_p$ and results (using the operational semantics rules) in $(M, H, U)$. That is:
\begin{align*}
\mathit{preconfigurations}&((M,H,U),S) = & \\
& \{(M_p,H_p,U_p) \,|\, \valid(M_p,H_p,U_p) \textrm{ and } (M_p,H_p,U_p) \overset{S}{\Rightarrow} (M,H,U) \} 
\end{align*}
\end{definition}

\begin{definition}\label{def:wtp}
We say that $\alpha$ is a \textit{Weakest Tightest Precondition} (WTP) of a formula $\beta$ with respect to a program $S$ if the set of all valid configurations that satisfy $\alpha$ is the same as the set of all preconfigurations of all valid configurations that satisfy $\beta$, with the addition of similar conditions on the allocated locations. More formally:
\begin{align*} 
\{ (M_p, H_p, U_p) \mid\, M_p \models \alpha,H_p= \sem[M_p]{\Fr(\alpha)}, \valid(M_p, H_p, U_p)\} &\\
=\{C \mid C \in \mathit{preconfigurations} (C_{post},S) & \textrm{ for some } C_{post} = (M,H,U) ,\\
 M \models \beta, &\, H = \sem[M]{\Fr(\beta)}, \valid(M,H,U)\}
\end{align*}

\end{definition}


With the notion of weakest tightest preconditions, we define global program logic rules for each command of our language. In contrast to local rules, global specifications contain heaplets that may be larger than the smallest heap on which one can execute the command.

Intuitively, a WTP of $\beta$ for lookup states that $\beta$ must hold in the precondition
when $x$ is interpreted as $x'$, where $x'=f(y)$, and further that the location
$y$ must belong to the \spp{} of $\beta$. The rules for mutation and allocation are more complex. For mutation, we define a transformation $\mwptr{x}{y}(\beta)$ that evaluates a formula $\beta$ in the pre-state as though it were evaluated in the post-state. We similarly define such a transformation $\mwalloc{x}$ for allocation.
We will define these in detail later. Finally, the deallocation rule ensures $x$ is not in the \spp{} of the postcondition. The conjunct $f(x) = f(x)$ is provided to satisfy the tightness condition, ensuring that the \spp{} of the precondition is the \spp{} of the postcondition with the addition of $\{x\}$. The rules can be seen below, and the proof of soundness for these global rules can be found in Section~\ref{sec:prog-log-proofs}.

\begin{align*}
    \text{\textbf{Assignment-G:}} \;\; &\{\beta[y/x]\}\; x\, :=\, y\; \{\beta\} \hspace*{5ex} \{\beta[c/x]\}\; x\, :=\, c\; \{\beta\} \\
    \text{\textbf{Lookup-G:}} \;\; &\{ \exists x':\; x' = f(y) .\; (\beta \land y \in \Fr(\beta))[x'/x] \}\; x\, :=\, y.f\; \{\beta\} \tag{where $x'$ does not occur in $\beta$} \\
    \text{\textbf{Mutation-G:}} \;\; &\{ \mwptr{x}{y}(\beta \land x \in \Fr(\beta))\} \;x.f\, :=\, y\; \{\beta\} \\
    \text{\textbf{Allocation-G:}} \;\; &\{\mwallocparam{\dummyloc_j}{x}(\beta)) \}\; \mathsf{alloc}(x)\; \{\beta\} \tag{for some $j$ where $\dummyloc_{j}$ does not occur in $\beta$} \\
    \text{\textbf{Deallocation-G:}} \;\; &\{ \beta \land x \not\in \Fr(\beta) \land f(x) = f(x) \} \; \mathsf{free}(x) \; \{\beta\} \tag{where $f \in \Fm$ is an arbitrary (unary) mutable function}
\end{align*}

\begin{theorem}
The rules above suffixed with \textbf{-G} are sound w.r.t the operational semantics. And, each precondition corresponds to the weakest tightest precondition of $\beta$.
\end{theorem}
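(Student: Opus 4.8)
The plan is to prove each of the five \textbf{-G} rules sound and weakest-tightest by direct appeal to Definition~\ref{def:wtp}, treating each command separately. For a fixed command $S$ and postcondition $\beta$, I must show the two sets in Definition~\ref{def:wtp} coincide: the set of valid preconfigurations $(M_p, H_p, U_p)$ satisfying the candidate precondition $\alpha$ with $H_p = \sem[M_p]{\Fr(\alpha)}$, and the set of all preconfigurations of some postconfiguration $(M, H, U)$ with $M \models \beta$ and $H = \sem[M]{\Fr(\beta)}$. I would prove set equality by double inclusion, but the cleaner route is to establish a bijection-like correspondence: given $(M_p, H_p, U_p) \models \alpha$ with tight heaplet, run $S$ and check that it does not abort, that the resulting $(M, H, U)$ satisfies $\beta$ with $H = \sem[M]{\Fr(\beta)}$, and conversely that every preconfiguration of such a $(M,H,U)$ satisfies $\alpha$ with the tight heaplet. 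The soundness direction (candidate precondition guarantees postcondition) and the weakest direction (candidate is implied by being a preconfiguration) are handled together by this correspondence.

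For the simple rules (Assignment-G, Lookup-G, Deallocation-G) the argument is essentially syntactic substitution combined with a heaplet computation. First I would handle \textbf{Assignment-G}: since $x := y$ only renames, $M_p \models \beta[y/x]$ iff the post-model satisfies $\beta$, and the heaplets agree because substituting a variable for a variable does not touch any mutable-function application; the support equations of Figure~\ref{fig:frame-equations} give $\sem{\Fr(\beta[y/x])} = \sem{\Fr(\beta)}$ after the store update. \textbf{Lookup-G} is similar but I must additionally verify the tightness bookkeeping: the conjunct $y \in \Fr(\beta)$ forces $y$ into the precondition's support, matching the fact that $x := y.f$ dereferences $y$; the existential guard $x' = f(y)$ contributes exactly the singleton $\{y\}$ to the support (by the mutable-function case of the support equations), which is precisely what must be added. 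For \textbf{Deallocation-G} the conjunct $f(x) = f(x)$ is the device that puts $x$ back into the support of the precondition, so that $\sem{\Fr(\alpha)} = \sem{\Fr(\beta)} \cup \{x\}$ as required, while $x \notin \Fr(\beta)$ ensures $\beta$'s truth does not depend on the freed cell --- here I invoke the Frame Theorem (Theorem~\ref{thm:frametheorem}) to transfer the truth of $\beta$ across the removal of $x$ from the heap.

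The genuinely hard cases are \textbf{Mutation-G} and \textbf{Allocation-G}, and I expect \textbf{Mutation-G} to be the main obstacle. The correctness of the $\mwptr{x}{y}$ transformation rests on showing two things simultaneously: that $M_p \models \mwptr{x}{y}(\gamma)$ iff the post-model $M$ (which agrees with $M_p$ except $\sem[M]{f}(x) = y$) satisfies $\gamma$, and that the \emph{support} is computed correctly, namely $\sem[M_p]{\Fr(\mwptr{x}{y}(\gamma))} = \sem[M]{\Fr(\gamma)}$. The first is a routine structural induction over $\gamma$ exploiting that $\lambda z.\,\ite(z = x : \ldots, f(z))$ implements the pointwise update to $\sem{f}$. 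The subtle point is the second: a naive substitution $f(t) \mapsto \ite(t = x : y, f(t))$ would compute the wrong support, because in the branch $t = x$ the term $y$ carries no support, yet the post-state genuinely reads the cell $x$ --- this is exactly why the definition wraps $y$ inside $\ite(f(x) = f(x) : y, y)$, whose tautological guard injects the singleton $\{x\}$ into the support (per the guard's support equation) precisely when $t$ equals $x$. I would state and prove this as the key lemma (the paper flags it as Lemma~\ref{lem:pointer-mod}), by induction on the term/formula structure, carefully tracking the support contributions of the $\ite$-guards through the equations in Figure~\ref{fig:frame-equations}; the extra conjunct $x \in \Fr(\beta)$ in the precondition then secures the tightness equation. \textbf{Allocation-G} follows the same template with $\mwalloc{x}$, the universal quantification over $v \in U$ capturing the nondeterministic choice of the freshly allocated cell, and the $v \neq \mathit{nil}$ side condition excluding the degenerate choice; I would reduce it to a corresponding mutation-style lemma. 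Once both $MW$-lemmas are in hand, the weakest-tightest conclusion for each rule follows immediately from Definition~\ref{def:wtp}, since the correspondence I establish is an equality of the two configuration sets rather than a mere inclusion.
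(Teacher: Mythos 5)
Your overall plan coincides with the paper's: per-command soundness and weakest-tightest theorems proved against Definition~\ref{def:wtp}, with the correctness of the $MW$ transformations as the key lemmas. You also correctly isolate the paper's Lemma~\ref{lem:pointer-mod} for \textbf{Mutation-G}, including the role of the tautological guard $f(x)=f(x)$ in injecting $\{x\}$ into the \spp{}. The genuine gap is in \textbf{Allocation-G}, which you dismiss as ``the same template'' reducible to ``a corresponding mutation-style lemma.'' Such a lemma does exist (the paper's Lemma~\ref{lem:alloc-soundness-lemma}: $\sem[{M[v \mapsto a]}]{\mwalloc{x}(\beta)} = \sem[M']{\beta}$ and $\sem[{M[v \mapsto a]}]{\Fr(\mwalloc{x}(\beta))} = \sem[M']{\Fr(\beta) \setminus \{x\}}$, where $a$ is the location actually allocated), but it is not sufficient, because the candidate precondition quantifies over \emph{all} of $U$ while both the operational semantics and that lemma speak only about the \emph{one} location that gets allocated. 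Two things then break. For tightness: by the \spp{} equations of Figure~\ref{fig:frame-equations}, $\sem[M]{\Fr(\forall v : (v \in U).\, \mwalloc{x}(\beta))}$ is a \emph{union over all} $u \in U$ of $\sem[{M[v \mapsto u]}]{\Fr(\mwalloc{x}(\beta))}$, and you must show this union collapses to the single set $\sem[M']{\Fr(\beta)} \setminus \{\sem[M']{x}\}$. For weakness: being a preconfiguration in the sense of Definition~\ref{def:precfg} only guarantees that \emph{some} nondeterministic allocation choice leads to a good postconfiguration, whereas the precondition demands $\mwalloc{x}(\beta)$ for \emph{every} $v \in U$; these coincide only if the choice of $v$ is immaterial.

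The missing idea is the paper's Lemma~\ref{lem:agnostic-to-v}: for a valid configuration and any $a_1, a_2 \in U$, both the truth value and the \spp{} of $\mwalloc{x}(\beta)$ agree under $v \mapsto a_1$ and $v \mapsto a_2$. This is not a formality --- its proof is exactly where the fragment restrictions of Section~\ref{sec:triples-validity} (no atomic relations on locations, restricted guards, validity of configurations so that program variables never alias unallocated cells) are used to argue that two unallocated locations are indistinguishable by any formula of the fragment; without it, your set-equality argument for Allocation-G fails at both points above. A minor further remark: for \textbf{Deallocation-G} you invoke the Frame Theorem, but this is unnecessary --- in the operational semantics $\mathsf{free}(x)$ leaves the model unchanged (mutable functions are total maps; only the allocated set $H$ shrinks), so $M' \models \beta$ is immediate from $M' = M$. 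The invocation suggests you are evaluating $\beta$ relative to the heaplet rather than the whole model, a separation-logic habit that FL's semantics deliberately avoids.
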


\subsection{Definitions of \texorpdfstring{$MW$}{MW} Primitives: Mutation}
\label{sec:mwdefinitions}

Recall that the $MW$\footnote{The acronym MW is a shout-out to the Magic Wand operator, as 
these serve a similar function, except that
they are definable in FL itself.} 
primitives $\mwptr{x}{y}$ and $\mwalloc{x}$ need to evaluate a formula $\beta$ in the pre-state as it would evaluate in the post-state after mutation and allocation statements. The definition of $\mwptr{x}{y}$ is as follows:
\begin{align*}
\mwptr{x}{y}(\beta) = \beta[\lambda z.\; \ite(z=x : \ite(f(x) = f(x) : y, y), f(z))/f]
\end{align*}

\noindent The $\beta[\lambda z. \rho(z) / f]$ notation is shorthand for saying that each occurrence of a term of
the form $f(t)$, where $t$ is a term, is substituted (recursively, from inside out) by the term $\rho(t)$. The precondition essentially
evaluates $\beta$ taking into account $f$'s transformation, but we use the $ite$ expression with a tautological guard $f(x) = f(x)$ (which has the \spp{} containing the singleton $x$)
in order to preserve the \spp{} (see Section~\ref{sec:prog-log-proofs}: Lemma~\ref{lem:pointer-mod}). The definition of $\mwalloc{x}$ is similar, but involves a few subtleties. 


\subsection{Definitions of \texorpdfstring{$MW$}{MW} Primitives: Allocation}\label{sec:appendix-mwdefinitions}

\begin{figure*}
\[
  \begin{array}{rcl}
    \halloc{x}_c(\overline{y},z) &:=& \bot ~~~~\mbox{ for a constant $c$}\\\vspace*{1ex}
    \halloc{x}_w(\overline{y},z) &:=& \bot ~~~~\mbox{ for a variable $w$}\\
    \vspace*{1ex}
    \halloc{x}_{f(t)}(\overline{y},z) &:=& 
  \begin{cases}
  \left( z = \mwalloc{x}(t)\right) \lor  \halloc{x}_{t}(\overline{y},z) \land \left( \mwalloc{x}(f(t) = f(t))\right) & \mbox{if } f \in \Fm \\\vspace*{1ex}
  \halloc{x}_{t}(\overline{y},z) & \mbox{if } f \not\in \Fm 
  \end{cases}\\\vspace*{1ex}
  \halloc{x}_{\Fr(\beta)}(\overline{y},z) &:=& \halloc{x}_{\beta}(\overline{y},z)\\\vspace*{1ex}
 \halloc{x}_{(t_1 = t_2)}(\overline{y},z) &:=& \halloc{x}_{t_1}(\overline{y},z) \lor \halloc{x}_{t_2}(\overline{y},z) \\\vspace*{1ex}
 \revII{\halloc{x}_{(\top)}(\overline{y},z)} &\revII{:=}& \revII{\bot} \\\vspace*{1ex}
  \halloc{x}_{(\bot)}(\overline{y},z) &:=& \bot \\\vspace*{1ex}
 \halloc{x}_{R(\overline{t})}(\overline{y},z)  &:=& \halloc{x}_{\rho_{R(\overline{w})}}(\overline{y},z)[\mwalloc{x}(\overline{t})/\overline{w}] \lor \bigvee_{i=1}^n \halloc{x}_{t_i}(\overline{y},z)\\\vspace*{1ex}
 && \mbox{for $R \in \I$ with definition $R(\overline{w}) := \rho_R(\overline{w})$} \\\vspace*{1ex}
\revII{\halloc{x}_{\beta_1 \land \beta_2}(\overline{y},z)}  &:=& \revII{\halloc{x}_{\beta_1}(\overline{y},z) \lor \halloc{x}_{\beta_2}(\overline{y},z)}\\\vspace*{1ex}
\halloc{x}_{\lnot \beta}(\overline{y},z) &:=& \halloc{x}_{\beta}(\overline{y},z)\\\vspace*{1ex}
\halloc{x}_{\ite(\gamma:\beta_1,\beta_2)}(\overline{y},z)  &:=&
\halloc{x}_{\gamma}(\overline{y},z) \lor
\ite(\mwalloc{x}(\gamma):\halloc{x}_{\beta_1}(\overline{y},z),\halloc{x}_{\beta_2}(\overline{y},z)) \\\vspace*{1ex}
\halloc{x}_{\ite(\gamma:t_1,t_2)}(\overline{y},z)  &:=& \halloc{x}_{\gamma}(\overline{y},z)
\lor \ite(\mwalloc{x}(\gamma):\halloc{x}_{t_1}(\overline{y},z),\halloc{x}_{t_2}(\overline{y},z)) \\\vspace*{1ex}
\halloc{x}_{\exists w: \gamma.\beta}(\overline{y},z) &:=& \exists
w:\left(\mwalloc{x}(\gamma)\right).\left(\halloc{x}_{\gamma}(\overline{y},z) \lor \left(\mwalloc{x}(\gamma) \land \halloc{x}_{\beta}(\overline{y},z)\right)\right)
  \end{array}
\]
\caption{Definition of $\halloc{x}$ for use in $\mwalloc{x}$.}
\label{fig:halloc}
\end{figure*}

We have already seen the definition of $\mwptr{x}{y}$ in Section~\ref{sec:mwdefinitions}. Observe that the inner guard contains the expression $f(x) = f(x)$, which is a tautology that only serves to include $x$ in the \spp{} of the transformed formula (as is required by the weakest tightest precondition definition). This is similar to the Separation Logic syntax $x \mapsto \text{\textunderscore}$. Of course, one can use syntax sugars such as $\mathit{acc}(x)$ available in tools like Viper~\cite{vipertool} to make the intent clearer.

We will detail the construction of $\mwalloc{x}$ in this section.

$\mwalloc{x}$, like $\mwptr{x}{y}$, is also meant to evaluate a formula in the pre-state as though it were evaluated in the post-state. However, note that the \spp{} of this formula must not contain the allocated location (say $x$). Since we know from the operational semantics of allocation that the allocated location is going to point to default values, we can proceed similarly as we did for the previous definition, identify terms evaluating to $f(x)$ and replace them with the default value (under $f$). This has the intended effect of evaluating to the same value as in the post-state while removing $x$ from the \spp{}. 

However, this approach fails when we apply it to \spp{} expressions (since removing $x$ from the \spp{} guarantees that we would no longer compute the `same' value as of that in the post-state). In particular, a subformula of the form $t \in \Fr(\gamma)$ may be falsified by that transformation. To handle this, we identify when $x$ might be in the \spp{} of a given expression and replace it with $v$ (which is given as a parameter) such that neither $x$ nor $v$ is dereferenced, and will not be in the \spp{} of the resulting transformation. We do this syntactic replacement of $x$ with $v$ for formulas not within \spp{} expressions as well. This has the effect of eliminating $x$ altogether from the formula.

We define $\mwalloc{x}$ inductively. We first consider the case where $\beta$ does not contain any subformulas involving \spp{} expressions or inductive definitions. Then, we have $\mwalloc{x}$ defined as follows:
\[
\mwalloc{x}(\beta) =
\beta[v/x][\lambda z.\; \ite(z=v : \mathit{def}_f, f(z))/f]_{f \in F}
\]

\noindent where this means for each instance of a (mutable or immutable) function $f$ in $\beta$, we replace $f(x)$ with a default value. We also replace all free instances of $x$ in $\beta$ with $v$.

If $\Fr(\gamma)$ is a subterm of $\beta$, we translate it to a formula $\halloc{x}_\gamma$ inductively as in Figure~\ref{fig:halloc}. This definition is very similar to the translation of FL formulas to FO-RD in Figure~\ref{fig:translation-to-ford} where we replace free instances of $x$ with $v$. Since this is a relation, we must transform membership to evaluation, i.e, transform expressions of the form $t \in \Fr(\gamma)$ to $\halloc{x}_\gamma(\overline{y},\mwalloc{x}(t))$ where $\overline{y}$ are the free variables (we transform inductively--- at the highest level free variables will be program/ghost variables). We also transform union of \spp{} expressions to disjunction of the corresponding relations, equality to (quantified) double implication, etc.

For a subterm of $\beta$ of the form $I(\overline{t})$ where $I$ is an inductive definition with body $\rho_I$, we translate it to $I'(\mwalloc{x}(\overline{t}))$ where the body of $I'$ is defined as $\mwalloc{x}(\rho_{I'})$.

The above cases can be combined with boolean operators and if-then-else, which $\mwalloc{x}$ distributes over.

Lastly, to design the program logic rule we have to decide the value of the parameter $v$. The idea is that $v$ (which is essentially $x$) will hold the location that is going to be allocated. Since we do not know which one of the locations in the unallocated set will be actually allocated next, and $v$ will not feature in the support of the $\mwalloc{x}$ formula by construction, it is enough to choose an element that simply mirrors the behaviour of the location to be allocated, in that it will evaluate the same way under any function and is different from any location that has ever been allocated. This is where we use our `dummy' location constants $\dummyloc_i$ with which we extended our signature (see Section~\ref{sec:op-sem}). We also maintain the invariant that these dummy locations will not feature in the support of any preconditions that we generate. Note that we ensure that these dummy locations are indeed different by first demanding that the constants bear distinct values (see Section~\ref{sec:op-sem}) and ensuring that we use a constant that does not appear in the given postcondition (see Section~\ref{sec:weakest-pre-rules}).

\subsection{Program Logic Proofs}\label{sec:prog-log-proofs}

In this section we present soundness proofs for the global rules developed in Section~\ref{sec:weakest-pre-rules}. We often drop $U$ when referring to a configuration $(M,H,U)$ for ease of presentation since $U$ is only modified by the allocation rule.
\begin{theorem}[Lookup Soundness]\label{thm:lookup-soundness}
Let $M$ be a model and $H$ a sub-universe of locations such that 
\begin{gather*} 
M \models \exists x' : x' = f(y).\; (\beta \land y \in \Fr(\beta))[x'/x] \\
H = \sem[M]{\Fr(\exists x' : x' = f(y).\; (\beta \land y \in \Fr(\beta)))[x'/x]\,}
\end{gather*}
Then $(M,H) \xRightarrow{x:=y.f} (M',H')$, $M' \models \beta \text{, and } H' = \sem[M']{\Fr(\beta)}$.
\end{theorem}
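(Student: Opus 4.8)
The plan is to prove the two obligations of triple validity for the \textbf{Lookup-G} rule: that the execution does not abort, and that the post-state satisfies $\beta$ with heaplet exactly $\sem[M']{\Fr(\beta)}$. Write $\varphi$ for the precondition formula $\exists x' : x' = f(y).\ (\beta \land y \in \Fr(\beta))[x'/x]$, and recall that a field $f$ is modeled as a mutable function, i.e.\ $f \in \Fm$. Since $M \models \varphi$, fix the unique witness $u = \sem[M]{f}(\sem[M]{y})$ for $x'$, so that $M,\,[x' \leftarrow u] \models (\beta \land y \in \Fr(\beta))[x'/x]$, and note the operational semantics yields $M' = M[x \mapsto u]$, $H' = H$ (the heap is unchanged by lookup).

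\textbf{No abort.} First I would show $\sem[M]{y} \in H$, which rules out the transition to $\bot$. Using the \spp{} equation for guarded existentials, $H = \sem[M]{\Fr(\varphi)}$ contains $\bigcup_{w} \sem[M]{\Fr(x'=f(y))}([x'\leftarrow w])$. Since $x'$ has empty \spp{} and $f \in \Fm$ is applied to $y$, the equations for $\Fr(f(\cdot))$ and $\Fr(t_1=t_2)$ give $\sem[M]{\Fr(x'=f(y))} = \{\sem[M]{y}\}$ independently of $w$. Hence $\sem[M]{y} \in H$, so by the operational semantics the transition to $(M',H')$ exists and is not $\bot$.

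\textbf{Post-state satisfies $\beta$.} This follows from the standard substitution lemma: because $x'$ does not occur in $\beta$, $M,[x'\leftarrow u] \models \beta[x'/x]$ is equivalent to $M[x \mapsto u] \models \beta$, i.e.\ $M' \models \beta$. I would state this substitution lemma (for both truth values and for \spp{} terms, proven by routine structural induction) as a preliminary, since it is reused below.

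\textbf{Tightness of the heaplet.} The core computation is showing $H' = H = \sem[M']{\Fr(\beta)}$. Expanding $H = \sem[M]{\Fr(\varphi)}$ via the existential equation, the guard contributes $\{\sem[M]{y}\}$ and the body contributes $\sem[M]{\Fr((\beta \land y\in \Fr(\beta))[x'/x])}([x'\leftarrow u])$. Using $\Fr(t_1\wedge t_2)$, $\Fr(R(\cdot))$ for $R=\,\in\,\in\R$, and the idempotence $\sem{\Fr(\Fr(\cdot))} = \sem{\Fr(\cdot)}$, the conjunct $y \in \Fr(\beta)[x'/x]$ contributes exactly $\sem[M]{\Fr(\beta[x'/x])}([x'\leftarrow u])$, which is already the contribution of $\beta[x'/x]$; hence the body support equals $\sem[M]{\Fr(\beta[x'/x])}([x'\leftarrow u])$. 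By the substitution lemma for \spps{} this equals $\sem[M']{\Fr(\beta)}$. Thus $H = \{\sem[M]{y}\} \cup \sem[M']{\Fr(\beta)}$. Finally, the \emph{conjunct} $y \in \Fr(\beta)$ in the precondition does its real work: from $M,[x'\leftarrow u] \models y \in \Fr(\beta[x'/x])$ together with the same substitution lemma, $\sem[M]{y} \in \sem[M']{\Fr(\beta)}$, and since $\sem[M]{y} = \sem[M']{y}$ (the store changes only at $x \neq y$), the singleton $\{\sem[M]{y}\}$ is absorbed, giving $H = \sem[M']{\Fr(\beta)}$.

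The main obstacle is the heaplet-tightness step, not the satisfaction step: it requires carefully tracking the \spp{} equations through the guarded existential and showing that the guard's unavoidable contribution $\{\sem[M]{y}\}$ is absorbed precisely because the precondition carries the conjunct $y \in \Fr(\beta)$. The substitution lemma for \spp{} terms (that $x'$-for-$x$ substitution commutes with $\Fr$ when $x'$ is fresh) is the key technical device underlying both the satisfaction and the tightness arguments.
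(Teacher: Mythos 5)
Your proposal is correct and follows essentially the same route as the paper's proof: both establish $\sem[M]{y} \in H$ from the guard's \spp{} to rule out $\bot$, derive $M' \models \beta$ by a renaming/substitution argument exploiting that $x'$ is fresh and $x$ does not occur in the precondition, and prove tightness by expanding the existential \spp{} equation and absorbing the guard's singleton $\{\sem[M]{y}\}$ using the conjunct $y \in \Fr(\beta)$. The only difference is bookkeeping—you work in $M$ with an explicit substitution lemma, while the paper first transfers the precondition to $M'$ (which does not mention $x$) and then substitutes back—but the underlying facts invoked are identical.
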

\begin{proof}
Observe that $\sem[M]y \in H$ since $y$ is in the \spp{} of the precondition. Therefore we know $(M,H) \xRightarrow{x:=y.f} (M',H')$ where $M' = M[x \mapsto \sem[M]{f(y)}]$ and $H' = H$. Next, note that if there is a formula $\alpha$ (or term $t$) where $x$ is not a free variable of $\alpha$ (or $t$), then $M$ and $M'$ have the same valuation of $\alpha$ (or $t$). This is true because the semantics of lookup only changes the valuation for $x$ on $M$. In particular, $M' \models \exists x' : x' = f(y).\; (\beta \land y \in \Fr(\beta))[x'/x]$. Thus,
\begin{align*}
&M' \models \exists x' : x' = f(y).\; (\beta \land y \in \Fr(\beta))[x'/x] \\
&\implies M'[x' \mapsto c] \\
&\hspace*{7ex} \models x' = f(y) \land (\beta \land y \in \Fr(\beta))[x'/x] \tag{for some $c$} \\
&\implies M'[x' \mapsto \sem[M']{f(y)}] \models (\beta \land y \in \Fr(\beta))[x'/x] \tag{since $f$ is a function} \\
&\implies M'[x' \mapsto \sem[M']{x}] \models (\beta \land y \in \Fr(\beta))[x'/x] \tag{operational semantics} \\
&\implies M'[x' \mapsto \sem[M']{x}] \\
&\hspace*{5ex} \models ((\beta \land y \in \Fr(\beta))[x'/x])[x/x'] \\
&\implies M'[x' \mapsto \sem[M']{x}] \models \beta \land y \in \Fr(\beta) \\
&\implies M' \models \beta \land y \in \Fr(\beta) \tag{$\beta$ does not mention $x'$} \\
&\implies M' \models \beta
\end{align*}
The heaplet condition follows from a similar argument. Specifically
\allowdisplaybreaks
\begin{align*}
&H' = H \\
&= \sem[M]{\Fr(\exists x' : x' = f(y).\; (\beta \land y \in \Fr(\beta)))[x'/x]\,} \\
&= \sem[M']{\Fr(\exists x' : x' = f(y).\; (\beta \land y \in \Fr(\beta)))[x'/x]\,} \tag{does not mention $x$} \\
&= \{\sem[M']{y}\} \cup \sem[M']{\Fr((\beta \land y \in \Fr(\beta))[x'/x])}(x' \mapsto \sem[M']{f(y)}) \tag{def of $\Fr$} \\
&= \{\sem[M']{y}\} \cup \sem[M']{\Fr((\beta \land y \in \Fr(\beta))[x'/x])}(x' \mapsto \sem[M']{x}) \tag{operational semantics} \\
&= \{\sem[M']{y}\} \cup \sem[M']{\Fr(\beta \land y \in \Fr(\beta))} \tag{similar reasoning as above} \\
&= \sem[M']{\Fr(\beta)} \tag{since $M' \models y \in \Fr(\beta)$ from above}
\end{align*}
\end{proof}

\begin{theorem}[WTP Lookup]\label{thm:lookup-weakness}
Let $M,M'$ be models with $H,H'$ sub-universes of locations (respectively) such that $(M,H) \xRightarrow{x := y.f} (M',H')$, $M' \models \beta$ and $H' = \sem[M']{\Fr(\beta)}$. Then
\begin{align*}
&M \models \exists x' : x' = f(y).\; (\beta \land y \in \Fr(\beta))[x'/x] \tag{weakest-pre} \\
&H = \sem[M]{\Fr(\exists x' : x' = f(y).\; (\beta \land y \in \Fr(\beta)))} \tag{tightest-pre}
\end{align*}
\end{theorem}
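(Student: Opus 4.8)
The plan is to mirror the soundness argument of Theorem~\ref{thm:lookup-soundness}, reading its chain of equivalences essentially in reverse. First I would unpack the operational semantics: since the transition $(M,H) \xRightarrow{x:=y.f} (M',H')$ is assumed to occur (rather than aborting to $\bot$), the rule forces $\sem[M]{y} \in H$, and moreover $M' = M[x \mapsto \sem[M]{f(y)}]$ and $H' = H$. The key structural observation, already exploited in the soundness proof, is that lookup alters only the interpretation of $x$, so $M$ and $M'$ agree on the value of any term or formula in which $x$ does not occur freely; combined with freshness of $x'$, this lets me translate between ``$M'$, where $x \mapsto v$'' and ``$M$, where $x' \mapsto v$'' through the substitution $[x'/x]$ via the standard substitution lemma.

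For the weakest-precondition claim I would exhibit the witness $\sem[M]{f(y)}$ for $x'$, so that the guard $x' = f(y)$ holds trivially. Evaluating the body, since $x'$ is fresh and replaces every free $x$, we have $M[x' \mapsto \sem[M]{f(y)}] \models \beta[x'/x]$ iff $M' \models \beta$, which is an assumption. For the conjunct $(y \in \Fr(\beta))[x'/x]$, note that $y \neq x$ (SSA form), so $\sem[M']{y} = \sem[M]{y} \in H = H' = \sem[M']{\Fr(\beta)}$, i.e.\ $M' \models y \in \Fr(\beta)$; transporting this through the substitution yields the conjunct. Hence the existential is witnessed, establishing $M \models \exists x':x'=f(y).\,(\beta \land y \in \Fr(\beta))[x'/x]$.

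For the tightest-precondition claim I would compute the support using the $\Fr$-equation for existentials from Figure~\ref{fig:frame-equations}. The guard contributes $\bigcup_u \sem[M]{\Fr(x'=f(y))}(x' \mapsto u) = \{\sem[M]{y}\}$, since $f$ is mutable so that $\Fr(f(y))$ is the singleton $\{\sem[M]{y}\}$, independent of $u$. Because $f$ is a function the guard holds for the single value $u = \sem[M]{f(y)}$, so the body contributes $\sem[M]{\Fr((\beta \land y \in \Fr(\beta))[x'/x])}(x' \mapsto \sem[M]{f(y)})$, which by the substitution equals $\sem[M']{\Fr(\beta \land y \in \Fr(\beta))}$. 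Using idempotence of $\Fr$ together with $\Fr(y \in \Fr(\beta)) = \Fr(y) \cup \Fr(\Fr(\beta)) = \Fr(\beta)$, this simplifies to $\sem[M']{\Fr(\beta)} = H' = H$. Thus $\sem[M]{\Fr(\text{precondition})} = \{\sem[M]{y}\} \cup H$.

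The main obstacle, and the one point requiring genuine care, is reconciling this with the claimed value $H$: the guard has injected the extra element $\sem[M]{y}$. This is absorbed precisely because $\sem[M]{y} \in H$, which I already derived from $M' \models y \in \Fr(\beta)$ --- and which is exactly why the conjunct $y \in \Fr(\beta)$ was placed in the precondition in the first place. Hence $\{\sem[M]{y}\} \cup H = H$, so the support is tight. I expect the delicate bookkeeping to reside entirely in this support computation and absorption step, the remainder being routine applications of the substitution lemma inherited from the soundness argument.
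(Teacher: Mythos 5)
Your proposal is correct and follows essentially the same route as the paper: the paper's own proof simply states that both claims follow by retracing the chain of implications and equalities in the soundness proof (Theorem~\ref{thm:lookup-soundness}) bidirectionally, which is precisely what you carry out, including the key absorption step $\{\sem[M]{y}\} \cup H = H$ justified by $\sem[M]{y} \in H$ from the non-aborting transition. Your write-up is in fact more explicit than the paper's (e.g., invoking SSA form for $y \neq x$ and spelling out the guard/body support computation), but the underlying argument is identical.
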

\begin{proof}
Both parts follow by simply retracing steps in the above proof. The weakness claim follows from the first part of the proof above, where all implications can be made bidirectional (using operational semantics rules, definition of existential quantifier, etc.). The tightness claim follows immediately from the second part of the proof above as all steps involve equalities.
\end{proof}

For soundness of the pointer modification rules, we prove the following lemma:

\begin{lemma}\label{lem:pointer-mod}
Given a formula $\beta$ (term $t$) and configurations $(M,H)$ and $(M',H')$ such that $(M,H)$ transforms to $(M',H')$ on the command $x.f := y$, then $\sem[M]{\mwptr{x}{y}(\beta)} = \sem[M']{\beta}$. Additionally, $\sem[M]{\Fr(\mwptr{x}{y}(\beta))} = \sem[M']{\Fr(\beta)}$. Both equalities hold for terms $t$ as well.
\end{lemma}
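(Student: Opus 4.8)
The plan is to prove both equalities by a single structural induction carried out simultaneously over terms and formulas, strengthened to hold for every subterm/subformula and every variable assignment $\nu$. The driving invariant comes straight from the operational semantics of mutation: since $(M,H) \xRightarrow{x.f:=y} (M',H')$, we have $M' = M[f \mapsto f[\sem[M]{x} \mapsto \sem[M]{y}]]$, so $M$ and $M'$ agree on the store, on all constants, on all relations in $\R$, and on every function except $f$, differing on $f$ only at the single argument $\sem[M]{x}$. I would record this first, as every case relies on it.

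For the value equation $\sem[M]{\mwptr{x}{y}(\beta)} = \sem[M']{\beta}$, the base cases (variables, constants) are immediate because $\mwptr{x}{y}$ leaves them untouched and $M,M'$ interpret them identically. The only interesting case is a term $f(t)$ with $f \in \Fm$, where $\mwptr{x}{y}(f(t)) = \ite(\mwptr{x}{y}(t) = x : \ite(f(x)=f(x):y,y), f(\mwptr{x}{y}(t)))$. Writing $t' = \mwptr{x}{y}(t)$ and using the induction hypothesis $\sem[M]{t'} = \sem[M']{t} =: u$, the inner $\ite(f(x)=f(x):y,y)$ always evaluates to $\sem[M]{y}$, so the transformed term evaluates in $M$ to $\sem[M]{y}$ when $u = \sem[M]{x}$ and to $\sem[M]{f}(u)$ otherwise; this is exactly the pointwise definition of $\sem[M']{f}(u) = \sem[M']{f(t)}$. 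The remaining connectives ($g(\ldots)$ with $g \neq f$, $=$, $\wedge$, $\lnot$, $\ite$, guarded $\exists$) follow from the induction hypotheses, using that the truth value of a transformed guard in $M$ equals that of the original guard in $M'$, so the same branch (respectively the same existential witnesses) is selected.

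For the support equation $\sem[M]{\Fr(\mwptr{x}{y}(\beta))} = \sem[M']{\Fr(\beta)}$ I would again induct using the equations of Figure~\ref{fig:frame-equations}, and here the tautological guard is exactly what makes the atomic case go through. For $f(t)$ with $f$ mutable, $\sem[M']{\Fr(f(t))} = \{\sem[M']{t}\} \cup \sem[M']{\Fr(t)}$; on the $M$ side, splitting on whether $\sem[M]{t'} = \sem[M]{x}$, the outer $\ite$-guard contributes $\sem[M]{\Fr(t')}$, and then the then-branch guard $f(x)=f(x)$ contributes $\{\sem[M]{x}\}$ (the support of $f(x)$ for mutable $f$ being $\{\sem[M]{x}\}$), while the else-branch $f(t')$ contributes $\{\sem[M]{t'}\}$. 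Since $\sem[M]{t'} = \sem[M']{t}$ and $\sem[M]{\Fr(t')} = \sem[M']{\Fr(t)}$ by the two hypotheses, both cases reproduce $\{\sem[M']{t}\} \cup \sem[M']{\Fr(t)}$; without the tautological guard the then-branch would drop $\sem[M]{x}$ and the equality would fail. Support subexpressions $\Fr(\varphi)$ inside $\beta$ reduce, via $\Fr(\Fr(\varphi)) = \Fr(\varphi)$, to the support claim for $\varphi$, and the $\ite$/$\exists$ cases again use agreement of guard truth values.

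The main obstacle is the case of an inductively defined relation $R(\overline{t})$, where plain structural induction fails because $\rho_R$ refers back to $R$ and, more to the point, a purely syntactic substitution of $f$ inside $R(\overline{t})$ cannot reach the occurrences of $f$ hidden in $\rho_R$. As with $\mwalloc{x}$ (Figure~\ref{fig:halloc}), the transformation must introduce a fresh relation $R'$ with body $\mwptr{x}{y}(\rho_R)$ and translate $R(\overline{t})$ to $R'(\mwptr{x}{y}(\overline{t}))$. The key sublemma is then $\sem[M]{R'} = \sem[M']{R}$ (and correspondingly for the associated support expressions): applying the value hypothesis to $\rho_R$ shows that $\sem[M]{R'}$ and $\sem[M']{R}$ satisfy the same relational fixpoint equation, so they coincide by uniqueness of the least fixpoint (Proposition~\ref{pro:unique-frame-model}, via Knaster--Tarski). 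One must also check that adjoining the primed relations yields a genuine frame model, so that the support equations of Figure~\ref{fig:frame-equations} remain available; once that is in place, descending into $\overline{t}$ through the support equation for $R(\overline{t})$ closes the induction, and the term versions of both claims are obtained along the way in the same induction.
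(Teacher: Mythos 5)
Your proposal is correct and follows the same core strategy as the paper's own proof: structural induction on $\beta$ (and terms simultaneously), with the crux being the $f(t)$ case split on whether $\sem[M]{\mwptr{x}{y}(t)} = \sem[M]{x}$, exploiting the fact that $M'$ differs from $M$ only in the value of $f$ at $\sem[M]{x}$; your observation that the tautological inner guard $f(x)=f(x)$ is exactly what restores $\{\sem[M]{x}\}$ to the support in the then-branch is the same point the paper makes when motivating that guard, and your computation of both sides of the support equation is the detail the paper waves off as ``similar.'' Where you go beyond the written proof is the case of inductively defined relations: the paper dismisses all remaining cases as ``trivial,'' but, as you rightly note, a purely syntactic substitution for $f$ cannot reach the occurrences of $f$ inside $\rho_R$, so $R(\overline{t})$ must be translated to a fresh $R'(\mwptr{x}{y}(\overline{t}))$ whose body is $\mwptr{x}{y}(\rho_R)$ --- which is precisely how the paper treats $\mwalloc{x}$ (Figure~\ref{fig:halloc}) and how Section~\ref{sec:discussion} describes rewriting $\mathit{list}$ to $\mathit{list}'$ --- and then $\sem[M]{R'} = \sem[M']{R}$ requires a genuine fixpoint argument rather than bare structural induction. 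Your route via uniqueness of least fixpoints (Proposition~\ref{pro:unique-frame-model}) is sound, with one point to set up carefully: ``satisfying the same fixpoint equation'' must be sharpened to ``the two defining operators coincide,'' which needs the value claim for $\rho_R$ with the recursive symbol treated as a parameter interpreted identically on both sides (equivalently, an induction over the ordinal stages of the fixpoint iteration, in the style of the paper's proof of Theorem~\ref{thm:frametheorem}); your simultaneous induction can be strengthened to provide exactly this. In short, yours is the same approach, rendered more completely than the paper's proof, and the extra care on recursive definitions and on the support equations for the primed relations is necessary rather than optional.
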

\begin{proof}
Induction on the structure of $\beta$, unfolding $\mathit{MW}^{x.f:=y}(\beta)$ accordingly. We discuss one interesting case here, namely when $\beta$ has a subterm of the form $f(t)$. Now, we have two cases, depending on whether $\sem[M]{\mwptr{x}{y}}(t) = \sem[M]{x}$. If it does, then
\begin{align*}
&\sem[M]{\mwptr{x}{y}(f(t))} \\ 
&= \sem[M]{\ite(\mwptr{x}{y}(t):\ite(f(x)=f(x):y,y),f(\mwptr{x}{y}(t)))} \tag{definition} \\
&= \sem[M]{\ite(f(x)=f(x):y,y)} \tag{assumption} \\
&= \sem[M]{y}  \\
&= \sem[M']{y} \\
&= \sem[M']{f}(\sem[M']{x}) \tag{def of $f$ on $M'$} \\
&= \sem[M']{f}(\sem[M]{x}) \tag{definition of $M,M'$} \\
&= \sem[M']{f}(\sem[M]{\mwptr{x}{y}(t)}) \tag{assumption} \\
&= \sem[M']{f}(\sem[M']{t}) \tag{induction hypothesis} \\
&= \sem[M']{f(t)} \tag{definition}
\end{align*}
The proof for the cases when $\sem[M]{\mwptr{x}{y}(t)} \neq \sem[M]{x}$ and the heaplet equality claims are similar, and all other cases are trivial.
\end{proof}

\begin{theorem}[Mutation Soundness]\label{thm:pointer-soundness}
Let $M$ be a model and $H$ a sub-universe of locations such that 
\begin{gather*}
M \models \mwptr{x}{y}(\beta \land x \in \Fr(\beta)) \\
H = \sem[M]{\Fr(\mwptr{x}{y}(\beta \land x \in \Fr(\beta)))}
\end{gather*}
Then $(M,H) \xRightarrow{x.f:=y} (M',H'), M' \models \beta \text{, and } H' = \sem[M']{\Fr(\beta)}$
\end{theorem}
\begin{proof}
From the definition of the transformation $\mwptr{x}{y}$, we have that $\mwptr{x}{y}(\beta \land x \in \Fr(\beta))$ will be transformed to the same formula as $\mwptr{x}{y}(\beta) \land x \in \Fr(\mwptr{x}{y}(\beta))$, the heaplet of which, since the formula holds on $M$, contains $x$. Therefore $x \in H$ and from the operational semantics we have that $(M,H) \xRightarrow{x.f:=y} (M',H')$ for some $(M',H')$ such that $H = H'$.

From Lemma~\ref{lem:pointer-mod} we have that $M'\models \beta \land x \in \Fr(\beta)$, since $M$ models the same. In particular $M' \models \beta$. Moreover we have 
\begin{align*}
H' 
& = H \tag{operational semantics}\\
& = \sem[M]{\Fr(\mwptr{x}{y}(\beta \land x \in \Fr(\beta)))} \tag{given}\\
& = \sem[M']{\Fr(\beta \land x \in \Fr(\beta))} \tag{Lemma~\ref{lem:pointer-mod}}\\
& = \sem[M']{\Fr(\beta)} \tag{semantics of H operator}
\end{align*}

Therefore $M' \models \beta$ and $H' = \sem[M']{\Fr(\beta)}$ which makes our pointer mutation rule sound.
\end{proof}

\begin{theorem}[WTP Mutation]\label{thm:pointer-weakness}
Let $M,M'$ be models with $H,H'$ sub-universes of locations (respectively) such that $(M,H) \xRightarrow{x.f := y} (M',H')$, $M' \models \beta$ and $H' = \sem[M']{\Fr(\beta)}$. Then
\begin{align*}
&M \models \mwptr{x}{y}(\beta \land x \in \Fr(\beta)) \tag{weakest-pre} \\
&H = \sem[M]{\Fr(\mwptr{x}{y}(\beta \land x \in \Fr(\beta)))} \tag{tightest-pre} \end{align*}
\end{theorem}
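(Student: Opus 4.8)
The plan is to prove both claims by running the proof of Mutation Soundness (Theorem~\ref{thm:pointer-soundness}) in reverse, exactly as the WTP Lookup statement (Theorem~\ref{thm:lookup-weakness}) is obtained from Lookup Soundness. The engine is Lemma~\ref{lem:pointer-mod}, which asserts \emph{equalities} rather than mere implications: for the given transition $(M,H) \xRightarrow{x.f:=y} (M',H')$ and any formula $\gamma$ it yields the biconditional $M \models \mwptr{x}{y}(\gamma)$ iff $M' \models \gamma$, together with the support equality $\sem{\Fr(\mwptr{x}{y}(\gamma))} = \sem[M']{\Fr(\gamma)}$, and analogously for terms. Since these are biconditionals and equalities, each step of the soundness argument can be read in either direction; the only real work is to check that the hypotheses now available ($M' \models \beta$, $H' = \sem[M']{\Fr(\beta)}$, and the transition itself) suffice to reinstate the precondition and its support.

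For the weakest-precondition claim I would instantiate Lemma~\ref{lem:pointer-mod} at $\gamma = \beta \land x \in \Fr(\beta)$, so that $M \models \mwptr{x}{y}(\beta \land x \in \Fr(\beta))$ holds iff $M' \models \beta \land x \in \Fr(\beta)$. The conjunct $M' \models \beta$ is a hypothesis, so the only content is $M' \models x \in \Fr(\beta)$, i.e.\ $\sem[M']{x} \in \sem[M']{\Fr(\beta)}$. This I would obtain from the operational semantics: the mutation does not abort, so the written location is allocated, $\sem{x} \in H$; since mutation leaves both the heaplet and the store interpretation of $x$ unchanged, $\sem[M']{x} = \sem{x} \in H = H' = \sem[M']{\Fr(\beta)}$, using $H'=H$ and the hypothesis $H' = \sem[M']{\Fr(\beta)}$. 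Hence $M' \models x \in \Fr(\beta)$, and the precondition holds.

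For the tightest-precondition claim I would chain equalities. By Lemma~\ref{lem:pointer-mod}, $\sem{\Fr(\mwptr{x}{y}(\beta \land x \in \Fr(\beta)))} = \sem[M']{\Fr(\beta \land x \in \Fr(\beta))}$. By the support equations, $\Fr(\beta \land x \in \Fr(\beta)) = \Fr(\beta) \cup \Fr(x \in \Fr(\beta))$, and since $\Fr(x) = \emptyset$ and $\Fr(\Fr(\beta)) = \Fr(\beta)$ we get $\Fr(x \in \Fr(\beta)) = \Fr(\beta)$, so the whole support collapses to $\Fr(\beta)$. Thus the right-hand side equals $\sem[M']{\Fr(\beta)} = H' = H$, the last step because mutation does not change the heaplet. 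Reading the chain backwards gives $H = \sem{\Fr(\mwptr{x}{y}(\beta \land x \in \Fr(\beta)))}$, as required.

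The main obstacle I anticipate lies entirely in the weakest-precondition part, namely justifying $M' \models x \in \Fr(\beta)$. This is precisely where the design of $\mwptr{x}{y}$ matters: the tautological guard $f(x) = f(x)$ in its definition is what keeps $x$ in the support of the transformed formula, so that the support-equality half of Lemma~\ref{lem:pointer-mod} aligns with the conjunct $x \in \Fr(\beta)$ instead of silently dropping $x$. The delicate dependency is on the operational-semantics fact that a non-aborting mutation requires the mutated location to be allocated (so $\sem{x} \in H$), combined with the invariance $H = H'$ for mutation and $\sem[M']{x} = \sem{x}$; once these are secured, every remaining case reduces to rereading the equalities from the proof of Theorem~\ref{thm:pointer-soundness} in the reverse direction, which is routine.
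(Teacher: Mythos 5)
Your proposal is correct and follows essentially the same route as the paper's proof: both hinge on Lemma~\ref{lem:pointer-mod} instantiated at $\beta \land x \in \Fr(\beta)$, together with the operational-semantics facts that a non-aborting mutation requires $\sem[M]{x} \in H$ and leaves $H$ and the store unchanged, which yield $M' \models \beta \land x \in \Fr(\beta)$ and the support equality. Your version merely spells out the support collapse $\Fr(\beta \land x \in \Fr(\beta)) = \Fr(\beta)$ that the paper leaves implicit.
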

\begin{proof}
From the operational semantics, we have that $(M,H) \xRightarrow{x.f := y} (M',H')$ only if $x \in H$ and $H = H'$. Therefore $x \in H' = \sem[M']{\Fr(\beta)}$ (given) which in turn implies that $M' \models \beta \land x \in \Fr(\beta)$ as well as $H' = \sem[M']{\Fr(\beta \land x \in \Fr(\beta))}$. Applying Lemma~\ref{lem:pointer-mod} yields the result. 
\end{proof}

\begin{lemma}\label{lem:halloc-meaning}
Given a formula $\beta$ (or term $t$) and configurations $(M,H,U)$ and $(M',H',U')$ such that $(M,H,U)\xRightarrow{\mathsf{alloc}(x)} (M',H',U')$, it is the case that $\sem[{M[v \mapsto a]}]{\halloc{x}(\overline{y},\mwalloc{x}(t))}$ iff $\sem[M']{t \in \Fr(\beta)}$, where $a = \sem[M']{x}$ and $\overline{y}$ are the free variables in $\mwalloc{x}(\beta)$. Additionally, $\sem[{M[v \mapsto a]}]{\Fr(\halloc{x}_{\beta}(\overline{y},z))} = \sem[M']{\Fr(\beta) \setminus \{x\}}$ where $z$ is a free variable. Both equalities hold for terms $t$ as well.
\end{lemma}
\begin{proof}
Induction on the structure of $\beta$ and using the construction in Figure~\ref{fig:halloc}. For the second claim about the \spp{} of $\halloc{x}$, the fact that we only allow specific kinds of guards is crucial in the inductive case of the existential quantifier.
\end{proof}

\begin{lemma}\label{lem:alloc-soundness-lemma}
Given a formula $\beta$ (or term $t$) and configurations $(M,H,U)$ and $(M',H',U')$ such that $(M,H,U)\xRightarrow{\mathsf{alloc}(x)} (M',H',U')$, we have that $\sem[{M[v \mapsto a]}]{\mwalloc{x}(\beta)} = \sem[M']{\beta}$, where $a = \sem[M']{x}$. Additionally, $\sem[{M[v \mapsto a]}]{\Fr(\mwalloc{x}(\beta))} = \sem[M']{\Fr(\beta) \setminus \{x\}}$. Both equalities hold for terms $t$ as well.
\end{lemma}
\begin{proof}
First, we split on the structure of $\beta$, as the definition of $\mwalloc{x}$ differs depending on the form of $\beta$. For subformulas with no \spp{} expressions or inductive definitions, the proof follows from the syntactic definition of $\mwalloc{x}$ and is very similar to Lemma~\ref{lem:pointer-mod}. It is important to note that $a$ is a location different from any of the locations ever allocated, and is therefore different from any location held in any program variable or reachable by any recursive definition. This is crucial in the case of handling the atomic equality/disequality formulas.

Subformulas with \spp{} expressions follow by construction using Lemma~\ref{lem:halloc-meaning}, and formulas with inductive definitions follow by construction as well. Boolean combinations and if-then-else follow using the inductive hypothesis.
\end{proof}

We are now ready to prove the soundness and WTP property of the allocation rule. This will be slightly different from the other soundness theorems because it reasons only about configurations reachable by a program or a valid initial state. This strengthening of the premise is not an issue since we will only ever execute commands on such states. We shall first prove a lemma.

\begin{lemma}
\label{lem:agnostic-to-v}
Let $\beta$ be any formula within our restricted fragment (Section~\ref{sec:triples-validity}) and $(M,H,U)$ be a valid configuration. Then, for any locations $a_1,a_2 \in U \cup \Dummyloc$:
\begin{align*}
&\sem[{M[v \mapsto a_1]}]{\mwalloc{x}(\beta)}= \sem[{M[v \mapsto a_2]}]{\mwalloc{x}(\beta)}\\
&\hspace{25ex}\textrm{and} \\
&\sem[{M[v \mapsto a_1]}]{\Fr(\mwalloc{x}(\beta))} =
\sem[{M[v \mapsto a_2]}]{\Fr(\mwalloc{x}(\beta))}
\end{align*}
\end{lemma}
\begin{proof}
The proof follows by a simple inductive argument on the structure of $\beta$. First observe that in any model if $v$ is interpreted to a hitherto unallocated location (either from $U$ or one of the dummy constants) it is never contained in $\mwalloc{x}(\beta)$ since it is never dereferenced. Therefore, all we are left to prove is that the actual value of $v$ (between choices in $U \cup \Dummyloc$) influences neither the truth value nor the \spp{} of the formula. The key case is that of $ite$ expressions where the value of $v$ can influence the truth of the guard. This case can be resolved using the observation that since $(M,H,U)$ is a valid configuration, the value of any unallocated location can never equal that of a program variable. Since we have no atomic relations either in our restricted fragment, any two values in $U \cup \Dummyloc$ are indistinguishable by a formula in this fragment.

In particular, any $ite$ expressions that depend on the value of $v$ either compare it with a term over a program variable ---which is never equal, or compare it with a quantified variable --- which itself only takes on values allowed by the guard of the quantification that, inductively, does not distinguish between values in $U \cup \Dummyloc$.
\end{proof}

\begin{theorem}[Allocation Soundness]\label{thm:alloc-soundness}
Let $(M,H,U)$ be a valid configuration such that 
\begin{gather*}
M \models \mwallocparam{\dummyloc_j}{x}(\beta)
\\
H = \sem[M]{\Fr(\mwallocparam{\dummyloc_j}{x}(\beta))}
\end{gather*}
such that $\dummyloc_j$ does not appear in $\beta$. Then $(M,H,U) \xRightarrow{\mathsf{alloc}(x)} (M',H',U\setminus \sem[M']{x})$, $M' \models \beta \text{ and } H' = \sem[M']{\Fr(\beta)}$
\end{theorem}
\begin{proof}
It is easy to see that $(M,H,U) \xRightarrow{\mathsf{alloc}(x)} (M',H',U\setminus \sem[M']{x})$ for some $M', H'$ since $U$ is infinite for any valid configuration.

Let $a$ be the actual location allocated, i.e., $a = \sem[M']{x}$. Clearly $a \in U$ by the operational semantics. Then, we have:
\begin{align*}
&M \models \mwallocparam{\dummyloc_j}{x}(\beta)\\
& \implies M[v\mapsto \dummyloc_j] \models \mwalloc{x}(\beta)\\
& \implies M[v\mapsto a] \models \mwalloc{x}(\beta) \tag{Lemma~\ref{lem:agnostic-to-v}}\\
& \implies M' \models \beta \tag{Lemma~\ref{lem:alloc-soundness-lemma}}
\end{align*}

For the \spp{} claim, we have:
\begin{align*}
H &= \sem[M]{\Fr(\mwallocparam{\dummyloc_j}{x}(\beta))}\\
&= \sem[{M[v \,\mapsto\, \dummyloc_j]}]{\Fr(\mwalloc{x}(\beta))}\\
&= \sem[{M[v \,\mapsto\, a]}]{\Fr(\mwalloc{x}(\beta))} \tag{Lemma~\ref{lem:agnostic-to-v}} \\
&= \sem[M']{\Fr(\beta) \setminus\{x\}} \tag{Lemma~\ref{lem:alloc-soundness-lemma}}
\end{align*}
Now $H' = H \cup \{\sem[M']{x}\}$ (by operational semantics) $= \sem[M']{\Fr(\beta) \setminus \{x\}} \,\cup\, \{\sem[M']{x}\} = \sem[M']{\Fr(\beta)}$, as desired.
\end{proof}

\begin{theorem}[WTP Allocation]\label{thm:alloc-weakness}
Let $(M,H,U)$ and $(M',H',U\setminus \sem[M']{x})$ be valid configurations such that
\begin{align*}
&(M,H,U) \xRightarrow{\mathsf{alloc}(x)} (M',H',U\setminus \sem[M']{x})\\
&M' \models \beta \textrm{ and } H' = \sem[M']{\Fr(\beta)}
\end{align*}
Then
\begin{align*}
M &\models \mwallocparam{\dummyloc_j}{x}(\beta)\\
H &= \sem[M]{\Fr(\mwallocparam{\dummyloc_j}{x}(\beta))}\\
&\textrm{for some $j$ such that $\dummyloc_j$ does not appear in $\beta$.}
\end{align*}
\end{theorem}
\begin{proof}
The first claim follows easily from an application of Lemma~\ref{lem:alloc-soundness-lemma} followed by an application of Lemma~\ref{lem:agnostic-to-v}. For the second claim, observe that as done in the proof above for Theorem~\ref{thm:alloc-soundness} we can prove that $\sem[M']{\Fr(\beta)\setminus\{x\}} = \sem[M]{\Fr(\mwallocparam{\dummyloc_j}{x}(\beta))}$. The proof concludes by observing that by the operational semantics we have $H = H' \setminus \{\sem[M']{x}\} = \sem[M']{\Fr(\beta)} \setminus \{\sem[M']{x}\} = \sem[M']{\Fr(\beta) \setminus \{x\}}$
\end{proof}

\begin{theorem}[Deallocation Soundness]\label{thm:dealloc-soundness}
Let $M$ be a model and $H$ a sub-universe of locations such that 
\begin{gather*}
M \models \beta \land x \not\in \Fr(\beta) \land f(x) = f(x) \\
H = \sem[M]{\Fr(\beta \land x \not\in \Fr(\beta) \land f(x) = f(x))}
\end{gather*}
Then $(M,H) \xRightarrow{\mathsf{free}(x)} (M',H'), M' \models \beta \text{, and } H' = \sem[M']{\Fr(\beta)}$
\end{theorem}
\begin{proof}
Observe that $x \in \Fr(\beta \land x \notin \Fr\beta) \land f(x) = f(x))$, i.e., $\sem[M]{x} \in H$. Therefore we have from the operational semantics that $(M,H) \xRightarrow{\mathsf{free}(x)} (M',H')$ such that $M' = M$ and $H' = H \setminus \{\sem[M]{x}\}$. Since $M \models \beta \land x \not\in \Fr(\beta) \land f(x) = f(x)$, we know $M \models \beta$, which implies $M' \models \beta$.
Similarly, we have: 
\begin{align*}
H' &= H \setminus \{\sem[M]{x}\} \tag{operational semantics} \\
&= \sem[M]{\Fr(\beta \land x \not\in \Fr(\beta) \land f(x) = f(x))} \setminus \{\sem[M]x\} \\
&= \sem[M]{\Fr(\beta)} \cup \{\sem[M]x\} \setminus \{\sem[M]x\} \tag{def of $\Fr$} \\
&= \sem[M]{\Fr(\beta)} \\
&= \sem[M']{\Fr(\beta)} \tag{$M = M'$}
\end{align*}
\end{proof}

\begin{theorem}[WTP Deallocation]\label{thm:dealloc-weakness}
Let $M,M'$ be models with $H,H'$ sub-universes of locations (respectively) such that $(M,H) \xRightarrow{\mathsf{free}(x)} (M',H')$, $M' \models \beta$ and $H' = \sem[M']{\Fr(\beta)}$. Then
\begin{align*}
&M \models \beta \land x \not\in \Fr(\beta) \land f(x) = f(x) \tag{weakest-pre} \\
&H = \sem[M]{\Fr(\beta \land x \not\in \Fr(\beta) \land f(x) = f(x))} \tag{tightest-pre} \end{align*}
\end{theorem}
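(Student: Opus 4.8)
The plan is to mirror the soundness proof (Theorem~\ref{thm:dealloc-soundness}) and trace its chain of equalities in reverse, exactly as was done for the \textbf{WTP Mutation} and \textbf{WTP Lookup} theorems. The starting point is to read off the operational-semantics facts implied by the hypothesis that $(M,H)$ actually transitions (rather than aborting to $\bot$) to $(M',H')$ on $\mathsf{free}(x)$: the deallocation rule fires only when $\sem[M]{x} \in H$, and in that case $M' = M$ while $H' = H \setminus \{\sem[M]{x}\}$. These three facts are the entire content I extract from the transition.

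For the weakest-precondition claim I would verify the three conjuncts of $\beta \land x \notin \Fr(\beta) \land f(x) = f(x)$ in $M$ separately. Since $M' = M$ and $M' \models \beta$, we immediately get $M \models \beta$; the conjunct $f(x) = f(x)$ is a tautology and holds trivially. For $M \models x \notin \Fr(\beta)$, I would combine the hypothesis $H' = \sem[M']{\Fr(\beta)}$ with $M' = M$ to obtain $\sem[M]{\Fr(\beta)} = H' = H \setminus \{\sem[M]{x}\}$, a set that by construction omits $\sem[M]{x}$; hence $\sem[M]{x} \notin \sem[M]{\Fr(\beta)}$, i.e.\ $M \models x \notin \Fr(\beta)$.

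For the tightest-precondition claim I would compute the support of the precondition using the equations of Figure~\ref{fig:frame-equations}. The conjunction rule splits the support into the union of three pieces. The piece $\sem[M]{\Fr(f(x)=f(x))}$ reduces, since $f \in \Fm$ and $\sem[M]{\Fr(x)} = \emptyset$, to the singleton $\{\sem[M]{x}\}$. The piece $\sem[M]{\Fr(x \notin \Fr(\beta))}$ equals, by the negation rule followed by the relation rule applied to the membership symbol (which lies in $\R$), the set $\sem[M]{\Fr(x)} \cup \sem[M]{\Fr(\Fr(\beta))}$; using $\sem[M]{\Fr(x)} = \emptyset$ and the idempotence equation $\sem[M]{\Fr(\Fr(\beta))} = \sem[M]{\Fr(\beta)}$ this collapses to $\sem[M]{\Fr(\beta)}$ and is absorbed into the first piece. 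Hence the support is exactly $\sem[M]{\Fr(\beta)} \cup \{\sem[M]{x}\}$. Finally, from $H' = H \setminus \{\sem[M]{x}\}$ with $\sem[M]{x} \in H$ I conclude $H = H' \cup \{\sem[M]{x}\} = \sem[M']{\Fr(\beta)} \cup \{\sem[M]{x}\} = \sem[M]{\Fr(\beta)} \cup \{\sem[M]{x}\}$, which matches the computed support.

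The argument uses neither induction nor the frame theorem, so I do not anticipate a serious obstacle. The one point demanding care is the $x \notin \Fr(\beta)$ conjunct in the tightness computation: one must observe, via the idempotence of $\Fr$ and the fact that $\in$ is a non-inductive relation whose support is merely that of its arguments, that this conjunct contributes exactly $\sem[M]{\Fr(\beta)}$ and hence adds no new location. Consequently it is the tautological conjunct $f(x)=f(x)$, and not the membership conjunct, that re-injects the freed location $\sem[M]{x}$ into the support and makes the tightness bookkeeping balance against $H = H' \cup \{\sem[M]{x}\}$.
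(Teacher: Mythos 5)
Your proposal is correct and follows essentially the same route as the paper's proof: the weakest-pre part extracts $M'=M$, $H'=H\setminus\{\sem[M]{x}\}$, and $\sem[M]{x}\in H$ from the operational semantics and checks the three conjuncts exactly as the paper does, while your tightness computation via the equations of Figure~\ref{fig:frame-equations} (yielding $\sem[M]{\Fr(\beta)}\cup\{\sem[M]{x}\}$) is precisely the ``def of $\Fr$'' step the paper borrows from Theorem~\ref{thm:dealloc-soundness}. You merely spell out in detail what the paper cites as ``similar arguments,'' including the correct observation that the membership conjunct contributes only $\sem[M]{\Fr(\beta)}$ and it is the tautology $f(x)=f(x)$ that accounts for the freed location.
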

\begin{proof}
For the first part, note that the operational semantics ensures $\sem[M']x \not\in H' = \sem[M']{\Fr(\beta)}$ and $M = M'$. So $M' \models x \not\in \Fr(\beta)$ which implies $M \models x \not\in \Fr(\beta)$. Similarly, $M \models \beta$, and $M \models f(x) = f(x)$ as it is a tautology. Tightness follows from similar arguments as in Theorem~\ref{thm:dealloc-soundness}, again noting that $H' = H \setminus \{\sem[M]x\}$ as per the operational semantics.
\end{proof}

\begin{reptheorem}{thm:soundness-local}
The four local rules (for assignment, lookup, mutation, allocation, and deallocation) given in Section~\ref{sec:prog-and-prfs} are sound given the global rules.
\end{reptheorem}
\begin{proof}
The validity of assignment follows immediately setting $\beta$ to be $x = y$ (or $x = c$). Instantiating with this and the precondition becomes $y = y$ which is equivalent to $\mathit{true}$ (the heaplet of both is empty)

The validity of the next (lookup) follows since
\begin{align*}
&\wtp(x = f(y), x := y.f) \\
&= \exists x' : x' = f(y).\; (x=f(y) \land y \in \Fr(x=f(y)))[x'/x] \\
&= \exists x' : x' = f(y).\; x'=f(y) \land y \in \Fr(x'=f(y)) \\
&= f(y) = f(y) \land y \in \Fr(f(y) = f(y))
\end{align*}
This is a tautology, so it is clearly implied by any precondition, in particular the precondition $f(y) = f(y)$. Similarly, the \spp{} of the resulting formula is the singleton $\{y\}$ which is also the \spp{} of $f(y) = f(y)$ as needed.

For the second local rule (mutation), we first notice that
\begin{align*}
\mwptr{x}{y}(f(x) = y) &= (f(x) = y) [\ite(z = x: \ite(f(x)=f(x): y, y), f(z))/f(x)] \\
&= \ite(x = x: \ite(f(x)=f(x): y, y), f(x)) = y
\end{align*}
Then,
\begin{align*}
\wtp(f(x) = y, x.f := y) &= \ite(x = x: \ite(f(x)=f(x): y, y), f(x)) = y \\
&\land x \in \Fr(\ite(x = x: \ite(f(x)=f(x): y, y), f(x)) = y)
\end{align*}
The first conjunct is clearly true since it is equivalent to $y = y$. The second conjunct is also true because $\Fr(\ite(x = x: \ite(f(x)=f(x): y, y), f(x)) = y) = \{x\}$. Thus, this formula is also a tautology, and it is implied by the precondition $f(x) = f(x)$. Additionally the \spp{} of the resulting formula and the \spp{} of $f(x) = f(x)$ is $\{x\}$ as needed.

For the next local rule (allocation), observe that the postcondition does not have any \spp{} expressions or inductive definitions. Therefore, we have that:
\[ \mwalloc{x}(f(x) = \mathit{def}_f)
= \ite(x = x: \mathit{def}_f,f(x)) = \mathit{def}_f
\]
Observe that the \spp{} of the above expression is $\emptyset$. The \spp{} of a conjunction of such expressions is also $\emptyset$. This and the fact that $\mwalloc{x}$ distributes over $\land$ gives us:
\begin{align*}
& \wtp\left(\bigwedge\limits_{f \in F}\left(f(x) = \mathit{def}_f\right), x := alloc()\right)\\
&= \forall v:\, v \notin \emptyset \implies \mwalloc{x}\left(\bigwedge\limits_{f \in F}f(x) = \mathit{def}_f\right)\\
&= \forall v:\, \bigwedge\limits_{f \in F}\left(\ite(x = x: \mathit{def}_f,f(x)) = \mathit{def}_f\right)
\end{align*}
which is a tautology (as it is equivalent to $\mathit{def}_f = \mathit{def}_f$) and its \spp{} is $\emptyset$ as desired.

Finally the last local rule (deallocation) follows directly from the global rule for deallocation by setting $\beta = \top$.
\end{proof}

\begin{theorem}[Conditional, While Soundness] \label{thm:condwhile-soundness}
\end{theorem}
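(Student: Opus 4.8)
The plan is to prove validity of the conditional and while rules directly from the big-step operational semantics in Figure~\ref{fig:op-sem}, reusing the validity of the premise triples and the fact (stated in Section~\ref{sec:prog-and-prfs}) that reachable configurations remain valid. A preliminary observation I would establish first is that a branch condition $be$ is a background/store expression that applies no mutable function to a foreground element, so $\sem[M]{\Fr(be)}(\nu) = \emptyset$; consequently $\sem[M]{\Fr(be \wedge \alpha)} = \sem[M]{\Fr(\alpha)} = \sem[M]{\Fr(\neg be \wedge \alpha)}$. This lets the tight heaplet of a configuration satisfying $\alpha$ serve unchanged as the tight heaplet for $be \wedge \alpha$ or for $\neg be \wedge \alpha$, which is exactly what is needed to feed a configuration into the premise triples.

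For the conditional rule, fix a valid $(M,H,U)$ with $M \models \alpha$ and $H = \sem[M]{\Fr(\alpha)}$, and split on whether $M \models be$. If $M \models be$, the operational semantics makes the $\mathsf{if}$-transition coincide with a transition of $S$; since $M \models be \wedge \alpha$ and $H = \sem[M]{\Fr(be \wedge \alpha)}$ by the observation above, the premise $\{be \wedge \alpha\}\,S\,\{\beta\}$ applies verbatim, yielding both that $\bot$ is not reached and that any resulting configuration satisfies $\beta$ with heaplet $\sem[M']{\Fr(\beta)}$. The case $M \not\models be$ is symmetric, using $T$ and the premise $\{\neg be \wedge \alpha\}\,T\,\{\beta\}$.

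For the while rule the argument is an induction on the height of the big-step derivation of the loop's transition; because the while rule unfolds exactly one iteration, every derivation tree is finite. Fix valid $(M,H,U)$ with $M \models \alpha$ and $H = \sem[M]{\Fr(\alpha)}$. The base case $M \not\models be$ returns $(M,H,U)$ itself, which satisfies $\neg be \wedge \alpha$ with heaplet $H = \sem[M]{\Fr(\neg be \wedge \alpha)}$. In the step case $M \models be$, and the loop transition factors through the sequence rule as $(M,H,U) \xRightarrow{S} (M_1,H_1,U_1) \xRightarrow{\mathsf{while}\ be\ \mathsf{do}\ S} (M',H',U')$. Applying the premise $\{\alpha \wedge be\}\,S\,\{\alpha\}$ (valid precondition and matching heaplet, again by the observation) gives $M_1 \models \alpha$ and $H_1 = \sem[M_1]{\Fr(\alpha)}$, and $(M_1,H_1,U_1)$ is itself valid since it is reachable. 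The induction hypothesis on the strictly smaller inner derivation then yields $M' \models \neg be \wedge \alpha$ and $H' = \sem[M']{\Fr(\neg be \wedge \alpha)}$. The ``no $\bot$'' obligation is discharged by the same induction: a derivation $(M,H,U) \xRightarrow{S} \bot$ is excluded by the premise's soundness, and a $\bot$ arising from the recursive call is excluded by the induction hypothesis applied at $(M_1,H_1,U_1)$.

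The main obstacle I anticipate lies not in the control flow but in the heaplet bookkeeping: one must verify that the tight heaplet is preserved exactly when the branch or loop guard is moved into the premise, which is precisely the point where the identity $\sem[M]{\Fr(be)}(\nu) = \emptyset$ is required, and one must invoke the invariance of configuration validity under transitions in order to keep the induction hypothesis applicable at the intermediate configuration $(M_1,H_1,U_1)$.
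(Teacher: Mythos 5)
Your proof is correct, but it takes a genuinely different route from the paper: the paper's entire proof of this theorem is a one-line deferral to the classical soundness arguments for conditional and while rules (citing~\cite{apt81}), whereas you give a self-contained proof by induction on the height of the big-step derivation. The substance you add beyond the classical argument is exactly the part the paper's citation silently assumes carries over to its non-classical setting: the \emph{tight-heaplet} validity condition $H = \sem[M]{\Fr(\alpha)}$ in Section~\ref{sec:triples-validity}. Your observation that $\sem[M]{\Fr(be)} = \emptyset$ for a guard expression (no mutable function applied to a foreground element), hence $\sem[M]{\Fr(be \wedge \alpha)} = \sem[M]{\Fr(\alpha)}$, is precisely what lets a configuration that is tight for $\alpha$ be fed verbatim into the premise triples $\{be \wedge \alpha\}\,S\,\{\beta\}$ and $\{\alpha \wedge be\}\,S\,\{\alpha\}$; without it the premises would simply not be applicable under the paper's definition of validity. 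Likewise, your use of the invariance of $\valid(\cdot)$ under transitions to re-enter the induction hypothesis at the intermediate configuration, and the treatment of $\bot$ via finiteness of abort derivations, are the obligations specific to this operational semantics. What the paper's approach buys is brevity; what yours buys is a verification that the classical induction actually survives the heaplet bookkeeping, which is the only non-routine content of the theorem in this setting.
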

\begin{proof}
See any classical proof of the soundness of these rules, as in~\cite{apt81}.
\end{proof}

\begin{theorem}[Sequence Soundness]
\label{thm:soundness-seq}
The Sequence rule is sound.
\end{theorem}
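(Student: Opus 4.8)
The plan is to unfold the definition of triple validity (Section~\ref{sec:triples-validity}) and argue compositionally, chaining the two hypotheses through the intermediate state produced by $S$. Concretely, I would fix an arbitrary valid configuration $(M,H,U)$ with $M \models \alpha$ and $H = \sem[M]{\Fr(\alpha)}$, and discharge the two obligations in the definition of validity of $\{\alpha\}\,S;T\,\{\mu\}$: that $\bot$ is never reached, and that any terminal configuration satisfies $\mu$ with the tightness condition on its heaplet.

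First I would recall the operational semantics of sequencing (Figure~\ref{fig:op-sem}): $(M,H,U) \overset{S;T}{\Rightarrow} (M'',H'',U'')$ holds exactly when there is an intermediate $(M',H',U')$ with $(M,H,U) \overset{S}{\Rightarrow} (M',H',U')$ and $(M',H',U') \overset{T}{\Rightarrow} (M'',H'',U'')$, and that an abort during $S$ propagates to an abort of $S;T$. For the no-abort obligation, validity of $\{\alpha\}\,S\,\{\beta\}$ gives that $S$ does not abort from $(M,H,U)$; if $S$ terminates in $(M',H',U')$, then the same hypothesis yields $M' \models \beta$ and $H' = \sem[M']{\Fr(\beta)}$. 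To then invoke validity of $\{\beta\}\,T\,\{\mu\}$ on $(M',H',U')$, I need that this intermediate configuration is itself valid; this is exactly the fact, noted in Section~\ref{sec:op-sem-discuss}, that reachable configurations of the operational semantics remain valid. With $\valid(M',H',U')$ in hand, the second hypothesis gives that $T$ does not abort from $(M',H',U')$, establishing the no-abort obligation for $S;T$.

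For the postcondition obligation I would reuse the same decomposition: if $(M,H,U) \overset{S;T}{\Rightarrow} (M'',H'',U'')$, pick the intermediate $(M',H',U')$, apply $\{\alpha\}\,S\,\{\beta\}$ to obtain $M' \models \beta$, $H' = \sem[M']{\Fr(\beta)}$ and $\valid(M',H',U')$, and then apply $\{\beta\}\,T\,\{\mu\}$ to this valid, $\beta$-satisfying, tightly-heapleted configuration to conclude $M'' \models \mu$ and $H'' = \sem[M'']{\Fr(\mu)}$, as required.

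The hard part is not the chaining itself, which is routine, but rather the \emph{preservation-of-validity} step: the definition of triple validity only quantifies over valid configurations, so soundness of Sequence genuinely depends on the intermediate configuration $(M',H',U')$ produced by $S$ being valid. I would isolate this as a separate invariance lemma---that every command's transition relation maps valid configurations to valid configurations---proved by a case analysis over the operational semantics rules in Figure~\ref{fig:op-sem} (checking, e.g., that allocation removes the freshly allocated cell from $U$ and that mutation never creates a pointer into $U$). Granting this lemma (already asserted in the paper), the remainder of the proof is a direct two-step composition.
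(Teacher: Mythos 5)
Your proof is correct and takes the same route the paper intends: the paper's own proof is the one-line assertion that soundness of Sequence ``follows directly from the operational semantics,'' and your argument is precisely that composition, spelled out via the intermediate configuration. The one detail you make explicit---that the intermediate configuration $(M',H',U')$ must itself be valid before the second triple can be invoked---is exactly the fact the paper relies on implicitly (it asserts in Section~\ref{sec:prog-and-prfs} that reachable configurations remain valid), so your elaboration fills a gap in exposition rather than diverging in method.
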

\begin{proof}
Follows directly from the operational semantics.
\end{proof}

\begin{theorem}[Consequence Soundness]
\label{thm:soundness-conseq}
The Consequence rule is sound.
\end{theorem}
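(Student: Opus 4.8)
The plan is to unfold the definition of triple validity from Section~\ref{sec:triples-validity} and push an arbitrary satisfying configuration through the already-valid premise triple $\{\alpha\}\,S\,\{\beta\}$, using the two entailments and the two frame equalities to re-tighten at each end. First I would fix an arbitrary valid configuration $(M,H,U)$ with $M \models \alpha'$ and $H = \sem[M]{\Fr(\alpha')}$; this is the generic starting point demanded by validity of $\{\alpha'\}\,S\,\{\beta'\}$. I read the side conditions semantically: $\alpha' \implies \alpha$ means that $M \models \alpha'$ forces $M \models \alpha$ in every frame model, and $\Fr(\alpha) = \Fr(\alpha')$ means $\sem[M]{\Fr(\alpha)} = \sem[M]{\Fr(\alpha')}$ in every frame model. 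Hence $M \models \alpha$ and $H = \sem[M]{\Fr(\alpha')} = \sem[M]{\Fr(\alpha)}$, so $(M,H,U)$ is exactly the kind of configuration to which validity of the premise triple applies.

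Next I would invoke the validity of $\{\alpha\}\,S\,\{\beta\}$ on this very configuration. Since the configuration is literally the same, the no-abort clause transfers verbatim: $S$ executed from $(M,H,U)$ never reaches $\bot$, which is precisely the first obligation for $\{\alpha'\}\,S\,\{\beta'\}$. For the post-state, suppose $(M,H,U) \overset{S}{\Rightarrow} (M',H',U')$. Validity of the premise gives $M' \models \beta$ and $H' = \sem[M']{\Fr(\beta)}$. Now the postcondition side conditions close the argument: from $\beta \implies \beta'$ we get $M' \models \beta'$, and from $\Fr(\beta) = \Fr(\beta')$ (read as $\sem[M']{\Fr(\beta)} = \sem[M']{\Fr(\beta')}$) we get $H' = \sem[M']{\Fr(\beta')}$. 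This establishes both obligations for $\{\alpha'\}\,S\,\{\beta'\}$, completing soundness.

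I expect the only real subtlety---and the reason this rule carries the extra frame-equality premises absent from the classical Hoare consequence rule---to be the tightness requirement in our notion of validity: a precondition must pin the heaplet down to be \emph{exactly} its support, not merely to satisfy the formula. Weakening $\alpha'$ to $\alpha$ (or strengthening $\beta$ to $\beta'$) could in principle change the support and break the constraint $H = \sem[M]{\Fr(\cdot)}$; the hypotheses $\Fr(\alpha) = \Fr(\alpha')$ and $\Fr(\beta) = \Fr(\beta')$ are exactly what guarantee the heaplet stays aligned on both ends. Once these equalities are in hand the rest is a direct transfer, so I anticipate no combinatorial or fixpoint-theoretic difficulty; the care lies entirely in interpreting the four premises as statements that hold in all frame models and applying each at the correct model ($M$ for the pre-state, $M'$ for the post-state).
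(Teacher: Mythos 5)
Your proposal is correct and follows essentially the same route as the paper's own proof: weaken the precondition via $\alpha' \implies \alpha$, apply validity of $\{\alpha\}\,S\,\{\beta\}$, strengthen via $\beta \implies \beta'$, and use the support equalities $\Fr(\alpha) = \Fr(\alpha')$ and $\Fr(\beta) = \Fr(\beta')$ to preserve the tightness constraints $H = \sem[M]{\Fr(\cdot)}$ and $H' = \sem[M']{\Fr(\cdot)}$. In fact your write-up is more explicit than the paper's (which compresses the heaplet-alignment and no-abort steps into a single sentence), and your closing remark correctly identifies why the frame equalities are the premises that distinguish this rule from the classical consequence rule.
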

\begin{proof}
First, note if we can't execute $S$ then the triple is vacuously valid. Next, assume $M \models \alpha'$. Then, because $\alpha' \implies \alpha$, we know $M \models \alpha$. So, if we execute $S$ and result in $M'$, we know $M' \models \beta$ since $\{\alpha\}S\{\beta\}$ is a valid triple. Then, $M' \models \beta'$ since $\beta \implies \beta'$. Finally, since the \spps{} of $\alpha$ and $\alpha'$ as well as $\beta$ and $\beta'$ are equal, the validity of the Hoare triple holds.
\end{proof}

\begin{theorem}[Frame Rule Soundness]
\label{thm:soundness-frame}
The Frame rule is sound.
\end{theorem}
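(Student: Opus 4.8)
The plan is to reduce the soundness of the frame rule to the Frame Theorem (Theorem~\ref{thm:frametheorem}) together with a locality property of the operational semantics. Fix a valid configuration $(M,H,U)$ with $M \models \alpha \wedge \mu$ and $H = \sem[M]{\Fr(\alpha \wedge \mu)}$, and fix an execution $(M,H,U) \xRightarrow{S} (M',H',U')$. Writing $H_\alpha = \sem[M]{\Fr(\alpha)}$ and $H_\mu = \sem[M]{\Fr(\mu)}$, the premise $\Fr(\alpha) \cap \Fr(\mu) = \emptyset$ gives $H_\alpha \cap H_\mu = \emptyset$ and $H = H_\alpha \uplus H_\mu$. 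First I would observe that the ``small'' configuration $(M, H_\alpha, U)$ is itself valid (shrinking the allocated set while leaving $U$ and the store fixed preserves every clause of $\valid$), that $M \models \alpha$, and that $H_\alpha = \sem[M]{\Fr(\alpha)}$. Hence the validity of the premise $\{\alpha\}\,S\,\{\beta\}$ applies to $(M,H_\alpha,U)$: the execution of $S$ from it never reaches $\bot$, and any resulting $(M_1,H_1,U_1)$ satisfies $M_1 \models \beta$ and $H_1 = \sem[M_1]{\Fr(\beta)}$.

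The core of the argument is a \emph{frame lemma} for the operational semantics of Figure~\ref{fig:op-sem}, proved by induction on $S$ (equivalently on the derivation length): if $S$ runs without abort on $(M,H_\alpha,U)$ and $H_\mu$ is disjoint from $H_\alpha$ and from $U$, then (a) $S$ runs without abort on $(M, H_\alpha \uplus H_\mu, U)$ --- safety monotonicity, since $\bot$ is reached only by dereferencing a location outside the current allocated set, so enlarging it can only remove aborting transitions --- and (b) every execution $(M,H_\alpha \uplus H_\mu, U) \xRightarrow{S} (M',H',U')$ matches an execution $(M,H_\alpha,U) \xRightarrow{S} (M_1,H_1,U_1)$ with $M' = M_1$, $H' = H_1 \uplus H_\mu$, and $U' = U_1$, where moreover every location at which a mutable function is written and every freshly allocated location lies in $H_\alpha \cup U$, hence outside $H_\mu$. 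The cases are checked directly: lookup, assignment and $\mathsf{free}$ only read or remove locations that must already be allocated (in $H_\alpha$) for the command not to abort; $\mathsf{mutation}$ writes at the allocated location $M(x) \in H_\alpha$; and for $\mathsf{alloc}(x)$ the two executions are synchronized to draw the \emph{same} witness $a \in U$ (legal for both, as $a \notin H_\alpha \uplus H_\mu$), so the models stay equal and $H_\mu$ is threaded through untouched.

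With the frame lemma in hand, the matched small-heap execution gives $M' = M_1 \models \beta$ and $\sem[M']{\Fr(\beta)} = H_1$. To re-establish $\mu$ I would invoke Theorem~\ref{thm:frametheorem}: by (b) all mutable-function changes are confined to $H_\alpha \cup U$, which is disjoint from $H_\mu = \sem[M]{\Fr(\mu)}$, and the side condition $\fv(\mu) \cap \mathit{vars}(S) = \emptyset$ ensures the store values of the free variables of $\mu$ are unchanged. After bridging the irrelevant store changes (replacing $M$ by the model that agrees with $M'$ on $\mathit{vars}(S)$ but with $M$'s heap --- this does not affect $\mu$ since those variables are not free in $\mu$), $M'$ is a mutation of $M$ that is stable on $\sem[M]{\Fr(\mu)}$, so $M \models \mu \iff M' \models \mu$ and $\sem[M']{\Fr(\mu)} = H_\mu$. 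Thus $M' \models \beta \wedge \mu$, and finally $H' = H_1 \uplus H_\mu = \sem[M']{\Fr(\beta)} \cup \sem[M']{\Fr(\mu)} = \sem[M']{\Fr(\beta \wedge \mu)}$. Together with part (a) (no abort), this establishes validity of $\{\alpha \wedge \mu\}\,S\,\{\beta \wedge \mu\}$.

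The main obstacle I expect is the frame lemma, and within it the treatment of $\mathsf{alloc}$: the nondeterministic choice of a fresh cell must be synchronized across the small- and large-heap runs so that the resulting models coincide, which relies on allocation drawing from $U$ (so the same witness is available and disjoint from $H_\mu$) and on validity of configurations being maintained along the induction. A secondary technical point is that program variables live in the model $M$ rather than in the logical assignment $\nu$, so applying the Frame Theorem --- stated for mutations under a fixed assignment --- requires the small bridging step above, which is precisely where the hypothesis $\fv(\mu) \cap \mathit{vars}(S) = \emptyset$ is consumed.
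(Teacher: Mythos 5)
Your proposal is correct and follows essentially the same route as the paper's proof: a correspondence lemma (proved by induction on the program) matching executions from the small heaplet $\sem[M]{\Fr(\alpha)}$ and the large heaplet $\sem[M]{\Fr(\alpha\land\mu)}$, with synchronized allocations, safety monotonicity for the no-abort claim, the invariant that the two allocated sets differ exactly by $\sem[M]{\Fr(\mu)}$ for the tightness claim, and an appeal to the Frame Theorem to preserve $\mu$. Your explicit bridging step for store updates (consuming the side condition $\mathit{vars}(S)\cap\fv(\mu)=\emptyset$) makes precise a point the paper's proof leaves implicit, but it is a refinement of the same argument, not a different one.
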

\begin{proof}
First, we establish that for any $(M,H)$ such that $M \models \alpha \land \mu$ and $H = \sem[M]{\Fr(\alpha \land \mu)}$ we never reach $\bot$. Consider $(M,H) \overset{S}{\Rightarrow^*}$ as the sequence of configurations $P_2$.
Construct the sequence of configurations $P_1$ as
$(M,\sem[M]{\Fr(\alpha)}) \overset{S}{\Rightarrow^*}$, where each allocation from $S$ in $P_1$ chooses the same location to allocate as in $P_2$.
We can show that for each step in $P_2$, there exists a corresponding step in $P_1$ such that:
\begin{enumerate}
    \item at any corresponding step the allocated set on $P_2$ is a superset of the allocated set on $P_1$
    \item the executions allocate and deallocate the same locations
\end{enumerate}

The claim as well as the first item is easy to show by structural induction on the program. Given that, the second is trivial since a location available to allocate on $P_2$ is also available to allocate on $P_1$. Any location that is deallocated on $P_2$ that is unavailable on $P_1$ would cause $P_1$ to reach $\bot$ which is disallowed since we are given that $\{\alpha\}S\{\beta\}$ is valid.

Thus if we abort on the former we must abort on the latter, which is a contradiction since we are given that $\{\alpha\}S\{\beta\}$ is valid. From the second item above, we can also establish that all mutations of the model are outside of $\Fr(\mu)$ since it is unavailable on $P_1$ (we start with $\Fr(\alpha)$ and allocate only outside $\Fr(\alpha \land \mu) = \Fr(\alpha) \cup \Fr(\mu)$, and we are also given that the \spps{} of $\alpha$ and $\mu$ are disjoint in any model). Therefore, if there exists a configuration $(M',H')$ such that $(M,H) \overset{S}{\Rightarrow^*} (M',H')$ it must be the case that $M'$ is a mutation of $M$ that is stable on $\Fr(\mu)$. Since $\{\alpha\}S\{\beta\}$ is valid we have that $M' \models \beta$. Lastly, we conclude from the Frame Theorem (Theorem~\ref{thm:frametheorem}) that since $M \models \mu$, $M' \models \mu$ which gives us $M' \models \beta \land \mu$.

We must also show that $H' = \sem[M']{\Fr(\beta \land \mu)}$. To show this, we can strengthen the inductive invariant above with the fact that at any corresponding step the allocated set on $P_2$ is not simply a superset of that on $P_1$, but in fact differs exactly by $\Fr(\mu)$. This invariant establishes the desired claim, which concludes the proof of the frame rule.
\end{proof}

\subsection{Example}
In this section, we will see an example of using our program logic rules that we described earlier. This will demonstrate the utility of Frame Logic as a logic for annotating and reasoning with heap manipulating programs, as well as offer some intuition about how our program logic can be deployed in a practical setting. The following program performs in-place reversal of the linked list pointed to by $i$
\begin{flalign*}
    &\revII{\texttt{j := nil ;}}&\\[-0.5em]
    &\revII{\texttt{while (i != nil) do }}&\\[-0.5em]
    &\hspace{3ex}\revII{\texttt{k := i.next ;}}&\\[-0.5em]
    &\hspace{3ex}\revII{\texttt{i.next := j ;}}&\\[-0.5em]
    &\hspace{3ex}\revII{\texttt{j := i ;}}&\\[-0.5em]
    &\hspace{3ex}\revII{\texttt{i := k}}&
\end{flalign*}

\revII{Although we must show that $j$ is the reverse of $i$ for full functional correctness, we illustrate our program logic using a simpler contract, namely that $j$ points to a $\lst$ and the end of the program. Full functional correctness can be proved similarly, modeling the content of the linked lists as mathematical sequences.} The recursive definition of $\lst$ we use for this proof is the one from Figure~\ref{fig:rec-defs-list}:
\begin{align*}
    \lst(x) :=\; &\ite(x = nil, \top, \exists z : z = \nxt(x).\; \lst(z) \land x \not\in \Fr(\lst(z)))
\end{align*}

We need to also give an invariant for the while loop, simply stating that $i$ and $j$ point to disjoint lists:
$\lst(i) \land \lst(j) \land \Fr(\lst(i)) \cap \Fr(\lst(j)) = \emptyset$.

We prove that this is indeed an invariant of the while loop below. Our proof uses a mix of both local and global rules from Sections~\ref{sec:prog-log} and \ref{sec:weakest-pre-rules} above to demonstrate how either type of rule can be used. We also use the consequence rule along with the program rule to be applied in several places in order to simplify presentation. As a result, some detailed analysis is omitted, such as proving \spps{} are disjoint in order to use the frame rule.
\allowdisplaybreaks
\begin{align*}
    &\{ \lst(i) \land \lst(j) \land \Fr(\lst(i)) \cap \Fr(\lst(j)) = \emptyset \land i \neq nil \} \tag{consequence rule} \\
    &\{ \lst(i) \land \lst(j) \land \Fr(\lst(i)) \cap \Fr(\lst(j)) = \emptyset \land i \neq nil \land i \notin \Fr(\lst(j))\} \tag{consequence rule: unfolding list definition} \\
    \vspace*{2ex}
    &\{\exists k' : k' = \nxt(i).\; \lst(k') \land i \not\in \Fr(\lst(k')) \land \lst(j) \\ &\hspace*{10ex} \land i \not\in \Fr(\lst(j)) \land \Fr(\lst(k')) \cap \Fr(\lst(j)) = \emptyset\} \tag{consequence rule} \\
    &\{\exists k' : k' = \nxt(i).\; \nxt(i) = \nxt(i) \land \lst(k') \land i \not\in \Fr(\lst(k')) \land \lst(j) \\ &\hspace*{10ex} \land i \not\in \Fr(\lst(j)) \land \Fr(\lst(k')) \cap \Fr(\lst(j)) = \emptyset\} \\
    &\hspace*{5ex} \texttt{k := i.next ;} \tag{consequence rule, lookup-G rule} \\
    &\{ \nxt(i) = \nxt(i) \land \lst(k) \land i \not\in \Fr(\lst(k)) \land \lst(j) \\ &\hspace*{10ex} \land i \not\in \Fr(\lst(j)) \land \Fr(\lst(k)) \cap \Fr(\lst(j)) = \emptyset \} \\
    &\hspace*{5ex} \texttt{i.next := j ;} \tag{mutation rule, frame rule} \\
    &\{ \nxt(i) = j \land \lst(k) \land i \not\in \Fr(\lst(k)) \land \lst(j) \\ &\hspace*{10ex} \land i \not\in \Fr(\lst(j)) \land \Fr(\lst(k)) \cap \Fr(\lst(j)) = \emptyset \} \tag{consequence rule} \\
    &\{ \lst(k) \land \nxt(i) = j \land i \not\in \Fr(\lst(j)) \land \lst(j) \land \Fr(\lst(k)) \cap \Fr(\lst(j)) = \emptyset \} \tag{consequence rule: folding list definition} \\
    &\{ \lst(k) \land \lst(i) \land \Fr(\lst(k)) \cap \Fr(\lst(i)) = \emptyset \} \\
    &\hspace*{5ex} \texttt{j := i ; i := k} \tag{assignment-G rule} \\
    &\{ \lst(i) \land \lst(j) \land \Fr(\lst(i)) \cap \Fr(\lst(j)) = \emptyset \}
\end{align*}
Armed with this, proving $j$ is a list after executing the full program above is a trivial application of the assignment, while, and consequence rules, which we omit for brevity.

Observe that in the above proof we were applying the frame rule because $i$ belongs neither to $\Fr(\lst(k))$ nor $\Fr(\lst(j))$. This can be dispensed with easily using reasoning about first-order formulas with least-fixpoint definitions, techniques for which are discussed in Section~\ref{sec:discussion}.

Also note the invariant of the loop is precisely the intended meaning of $\lst(i) * \lst(j)$ in separation logic. In fact, as we will see in Section~\ref{sec:discussion}, we can define a \emph{first-order} macro $\Star$ as $\Star(\varphi, \psi) = \varphi \land \psi \land \Fr(\varphi) \cap \Fr(\psi) = \emptyset$.
We can use this macro to represent disjoint supports in similar proofs.

These proofs demonstrate what proofs of actual programs look like in our program logic. They also show that frame logic and our program logic can prove many results similarly to traditional separation logic. And, by using the derived operator $\Star$, very little even in terms of verbosity is sacrificed in gaining the flexibility of Frame Logic(please see Section~\ref{sec:discussion} for a broader discussion of the ways in which Frame Logic differs from Separation Logic and in certain situations offers many advantages in stating and reasoning with specifications/invariants).

\section{Expressing a Precise Separation Logic}\label{sec:seplog-translation}
In this section, we show that FL is expressive
by capturing a fragment of separation logic in frame logic; the fragment is a syntactic fragment of separation logic that defines 
only \emph{precise formulas}--- formulas 
that can be satisfied in at most one heaplet
for any store.
The translation also shows that frame logic can naturally and compactly capture such separation logic formulas. 

\subsection{A Precise Separation Logic}
As discussed in Section~\ref{sec:intro}, a crucial difference between separation logic and frame logic is that formulas in frame logic have uniquely determined \spps{}/heaplets, while this is not true in separation logic (the heaplet for $\alpha \lor \beta$ can be one that supports the truth of $\alpha$ or one that supports the truth of $\beta$ in Separation Logic). 
However, it is well known that in verification,
determined heaplets are very natural (most uses of separation logic in fact are precise) and sometimes desirable. For instance, see~\cite{brookes07} where precision is used crucially in some proof rules regarding resource sharing to give sound semantics to concurrent separation logic and~\cite{ohearn04} where precise formulas are proposed in verifying modular programs as imprecision causes ambiguity in function contracts in the frame rule. The work in~\cite{gotsman11} also relies on precise predicates in developing a Hoare logic for concurrent programs. Specifically, the conjunction rule is unsound in the context of imprecise predicates.

We define a fragment of separation logic that defines precise formulas (more accurately, we handle a slightly larger class inductively: formulas that when satisfiable have unique minimal heaplets for any given store). 

\begin{definition} \mh Fragment:
\label{def:mhfragment}
\begin{itemize}
\item \revII{$\mathit{sf}$: formulas over the stack only (nothing dereferenced). Includes $\mathit{isatom}?()$, $m(x) = y$ for immutable $m$, $\top$, background formulas, and boolean combinations of these formulas}.
\item $x \xrightarrow{f} y$
\item $\mathit{ite}(\mathit{sf}, \varphi_1, \varphi_2)$ where $\mathit{sf}$ is from the first bullet
\item $\varphi_1 \land \varphi_2$ and
$\varphi_1 * \varphi_2$
\item $\mathcal I$ where $\mathcal I$ contains all unary inductive definitions $I$ that have unique heaplets inductively ($\mathit{list}, \mathit{tree}$, etc.). In particular, the body $\rho_I$ of $I$ is a formula in the \mh fragment ($\rho_I[I \mapsfrom \varphi]$ is in the \mh fragment provided $\varphi$ is in the \mh fragment). Additionally, for all $x$, if $s, h \models I(x)$ and $s, h' \models I(x)$, then $h = h'$.\footnote{While we only assume unary inductive definitions here, we can easily generalize this to inductive definitions with multiple parameters.}
\item $\exists y.\; (x \xrightarrow{f} y) * \varphi_1$
\end{itemize}
\end{definition}

Note that in the fragment negation and disjunction are disallowed, but mutually exclusive disjunction using $\ite$ is allowed. Existential
quantification is only present when the topmost operator is a $*$ and where one of the formulas guards the quantified variable uniquely.

The semantics of this fragment follows the standard semantics of separation logic~\cite{demri15,ohearn12,ohearn01,reynolds02}, with the heaplet of $x \xrightarrow{f} y$ taken to be 
$\{ x \}$. 
See Remark~\ref{rem:mutation-frame} in Section~\ref{sec:design} for a discussion of a more accurate heaplet for $x \xrightarrow{f} y$ being the
set containing the pair $(x,f)$, and how this can be modeled in the above semantics by using field-lookups using non-mutable pointers.

%

\begin{theorem}[Minimum Heap]
\label{minimalheap}
For any formula $\varphi$ in the \mh fragment, if there is an $s$ and $h$ such that $s, h \models \varphi$ then there is a $h_\varphi$ such that $s, h_\varphi \models \varphi$ and for all $h'$ such that $s, h' \models \varphi$, $h_\varphi \subseteq h'$.
\end{theorem}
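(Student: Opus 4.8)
The plan is to prove the Minimum Heap theorem by structural induction on the grammar of the \mh fragment given in Definition~\ref{def:mhfragment}. The statement asserts that whenever a formula $\varphi$ is satisfiable (for a fixed store $s$), the set of satisfying heaps has a least element $h_\varphi$ under $\subseteq$. The natural inductive hypothesis to carry is exactly the statement itself: for each subformula, satisfiability at $s$ yields a minimum heap that is contained in every satisfying heap. Because the fragment forbids negation and disjunction (permitting only the mutually-exclusive $\ite$), the semantics of every constructor is monotone enough that minima are preserved and in fact determined.

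First I would dispatch the base cases. For a pure stack formula $\mathit{sf}$, the heaplet is the empty heap, so $h_\varphi = \emptyset$ works trivially (the empty heap is contained in every heap). For $x \xrightarrow{f} y$, the semantics fixes the heaplet to be exactly the singleton domain $\{s(x)\}$ mapping $f$ appropriately, so the satisfying heap is \emph{unique}, hence minimum. These give the anchors for the induction.

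Next I would handle the inductive constructors. For $\varphi_1 \land \varphi_2$: if the conjunction is satisfied by some $h$, then each conjunct is satisfied by the same $h$, so by the induction hypothesis we get minima $h_{\varphi_1}, h_{\varphi_2} \subseteq h$; the candidate $h_\varphi = h_{\varphi_1} \cup h_{\varphi_2}$ must then be shown to satisfy $\land$ and to be contained in every satisfying heap. Here I would invoke precision (uniqueness of heaplets for the sub-pieces) to argue that both conjuncts are actually forced to agree on their supporting subheaps, so the union is again a satisfying heap; containment in any other $h'$ follows because $h_{\varphi_i} \subseteq h'$ for each $i$. For $\varphi_1 * \varphi_2$: any satisfying $h$ splits as a disjoint union $h_1 \uplus h_2$ with $h_i \models \varphi_i$, the induction hypothesis yields minima $h_{\varphi_i} \subseteq h_i$, and since the $h_i$ are disjoint so are the minima; I take $h_\varphi = h_{\varphi_1} \uplus h_{\varphi_2}$ and verify both that it satisfies $*$ and that it is the least such heap. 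The $\ite(\mathit{sf}, \varphi_1, \varphi_2)$ case reduces to a single branch because $\mathit{sf}$ is store-determined: exactly one of $\varphi_1, \varphi_2$ is selected independently of the heap, so the minimum is inherited directly from the chosen branch. The inductive-definition case $\mathcal{I}$ is handled by the explicit assumption baked into the fragment: each $I$ is stipulated to have unique (hence minimum) heaplets, with body $\rho_I$ in the fragment, so one either cites this assumption directly or performs a secondary induction on the least-fixpoint unfolding using the hypothesis on $\rho_I$. Finally, $\exists y.\, (x \xrightarrow{f} y) * \varphi_1$ is the delicate case: the guarding conjunct $x \xrightarrow{f} y$ pins down $y = s'(f(x))$ uniquely in any satisfying heap, so the witness for $y$ is forced, collapsing the existential to a single instance, after which the analysis is identical to the $*$ case.

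\textbf{The main obstacle} I expect is ensuring that the minima of subformulas combine into a \emph{genuinely satisfying} heap rather than merely a lower bound---in particular, establishing that $h_{\varphi_1} \cup h_{\varphi_2}$ still models $\varphi_1 \land \varphi_2$, which requires that each conjunct's truth is preserved when restricting from the larger $h$ down to the union of minima. This is exactly where the precision/uniqueness built into the fragment is essential: it guarantees that each conjunct does not secretly depend on locations outside its own minimal support, so truth on the minimal subheap transfers to the union. I would therefore prove, as a lemma feeding the induction, that satisfaction in the fragment is \emph{local} to the minimum heap (truth on $h_\varphi$ implies truth on any $h \supseteq h_\varphi$ that still provides the right values, and conversely), and the disjointness bookkeeping in the $*$ and $\exists$ cases will need care to confirm that the forced witnesses and forced singleton heaplets do not overlap in a way that breaks the separating union.
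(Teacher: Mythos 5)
Your induction has the same skeleton as the paper's proof --- the same base cases ($\emptyset$ for stack formulas, the singleton $\{s(x)\}$ for $x \xrightarrow{f} y$), union of the sub-minima for $\land$, disjoint union for $*$, branch selection for $\ite$, the stipulated uniqueness for inductive predicates, and the forced witness collapsing the existential --- and you correctly identify the crux: showing that $h_{\varphi_1} \cup h_{\varphi_2}$ genuinely satisfies $\varphi_1 \land \varphi_2$. But both tools you propose for that step fail. Invoking ``precision (uniqueness of heaplets for the sub-pieces)'' is unsound because subformulas of the \mh fragment need not be precise: stack formulas, and conjunctions and $\ite$'s built from them, are satisfied by many heaps (this is exactly why the paper later introduces the precision predicate $P(\varphi)$ of Definition~\ref{def:precision} --- precision can only be tested, not assumed). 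And your fallback lemma, that ``truth on $h_\varphi$ implies truth on any $h \supseteq h_\varphi$,'' is simply false in this fragment: $x \xrightarrow{f} y$ holds only on the heap with domain exactly $\{s(x)\}$, and the same is true of inductive predicates and of $*$-combinations of precise formulas, so satisfaction is not upward closed above the minimum.

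The ingredient that actually closes the gap is the paper's Lemma~\ref{extensibleheap}, a \emph{conditional} dichotomy proved by its own structural induction: if $s,h \models \varphi$ and some \emph{nonempty} extension $h \cup h'$ also satisfies $\varphi$, then \emph{every} extension $h \cup h''$ satisfies $\varphi$. With it, the conjunction case is finished as follows: let $U = h_{\varphi_1} \cup h_{\varphi_2} \subseteq h$; for each $i$, either $U = h_{\varphi_i}$, in which case $s,U \models \varphi_i$ trivially, or $h_{\varphi_i} \subsetneq U \subseteq h$, so $h$ is a satisfying heap properly extending $h_{\varphi_i}$, and the lemma then yields that every extension of $h_{\varphi_i}$ --- in particular $U$ --- satisfies $\varphi_i$. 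Minimality of $U$ follows from the induction hypothesis exactly as you argue. So your proof is repairable, but only by replacing your unconditional locality lemma with this weaker, conditional extensibility statement; as written, the central step of your conjunction case rests on a claim with immediate counterexamples.
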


\subsection{Translation to Frame Logic}
For a separation logic store and heap $s, h$ (respectively), we define the corresponding interpretation $\mathcal{M}_{s,h}$ such that variables are interpreted according to $s$ and values of pointer functions on $\mathit{dom}(h)$ are interpreted according to $h$. 
For $\varphi$ in the \mh fragment, we first define a formula $P(\varphi)$, inductively, that captures whether $\varphi$ is precise. $\varphi$ is a precise formula iff, when it is satisfiable with a store $s$, there is exactly one $h$ such that $s, h \models \varphi$.
The formula $P(\varphi)$ is in separation logic
and will
be used in the translation. 
To see why this formula is needed, consider the formula $\varphi_1 \wedge \ite(\mathit{sf}, \varphi_2, \varphi_3)$.
Assume that $\varphi_1$ is imprecise, $\varphi_2$ is precise, and $\varphi_3$ is imprecise. Under conditions where $\mathit{sf}$ is true, the heaplets for $\varphi_1$ and $\varphi_2$ must align. However, when $\mathit{sf}$ is false, the heaplets for $\varphi_1$ and $\varphi_3$ can be anything. Because we cannot initially know when $\mathit{sf}$ will be true or false, we need this separation logic formula $P(\varphi)$ that is true exactly when $\varphi$ is precise.

\begin{definition} Precision predicate $P$:
\label{def:precision}
\begin{itemize}
\item $P(\mathit{sf}) = \bot$ and
$P(x \xrightarrow{f} y) = \top$
\item $P(\mathit{ite}(\mathit{sf}, \varphi_1, \varphi_2)) = (\mathit{sf} \wedge P(\varphi_1)) \vee (\neg\mathit{sf} \wedge P(\varphi_2))$
\item $P(\varphi_1 \wedge \varphi_2) = P(\varphi_1) \vee P(\varphi_2)$
\item $P(\varphi_1 * \varphi_2) = P(\varphi_1) \wedge P(\varphi_2)$
\item $P(I) = \top$ where $I \in \mathcal{I}$ is an inductive predicate
\item $P(\exists y.\; (x \xrightarrow{f} y) * \varphi_1) = P(\varphi_1)$
\end{itemize}
\end{definition}

Note that this definition captures precision within our fragment since stack formulas are imprecise and pointer formulas are precise. The argument for the rest of the cases follows by simple structural induction. \revII{The interesting case is that of an inductive predicate $I$ whose body is assumed to be in the \mh fragment. We can show this by induction on the rank of $I(x)$\footnote{\revII{By the Knaster-Tarski theorem, we can compute the least-fixpoint by computing an increasing sequence pre-fixpoints for the kinds of definitions we assume. This yields a `rank' for every $x$ such that $I(x)$ holds, namely the index in the sequence of pre-fixpoints when $I(x)$ first holds.}} for an argument $x$ when $I(x)$ holds.}

Now we define the translation $T$ inductively:

\begin{definition} Translation from \mh to Frame Logic:
\label{def:translation}

\begin{itemize}
\item $T(\mathit{sf}) = \mathit{sf}$ and
$T(x \xrightarrow{f} y) = (f(x) = y)$
\item $\mathit{ite}(\mathit{sf}, \varphi_1, \varphi_2) = \mathit{ite}(T(\mathit{sf}), T(\varphi_1), T(\varphi_2))$
\item $T(\varphi_1 \wedge \varphi_2) =$
$\begin{array}[t]{rcl}
T(\varphi_1) \wedge T(\varphi_2) &\wedge\;  T(P(\varphi_1)) \implies \Fr(T(\varphi_2)) \subseteq \Fr(T(\varphi_1))\\
&\wedge\; T(P(\varphi_2)) \implies \Fr(T(\varphi_1)) \subseteq \Fr(T(\varphi_2))
\end{array}$
\item $T(\varphi_1 * \varphi_2) = T(\varphi_1) \wedge T(\varphi_2) \wedge \Fr(T(\varphi_1)) \cap \Fr(T(\varphi_2)) = \emptyset$
\item $T(I) = T(\rho_I)$ where $\rho_I$ is the definition of the inductive predicate $I$ as in Section 3.
\item $T(\exists y.\; (x \xrightarrow{f} y) * \varphi_1) = \exists y: [f(x) = y].\; [T(\varphi_1) \wedge x \not\in \Fr(T(\varphi_1))]$
\end{itemize}
\end{definition}

Finally, recall that any formula $\varphi$ in the \mh fragment has a unique minimal heap (Theorem~\ref{minimalheap}). With this (and a few auxiliary lemmas that can be found in Section~\ref{sec:appendix-seplogiclemmas}), we have the following theorem, which captures the correctness of the translation:


\begin{theorem}\label{seplogicmain}
For any formula $\varphi$ in the \mh fragment, we have the following implications:
\[ \begin{array}[t]{rcl}
s, h \models \varphi &\implies& \mathcal{M}_{s,h} \models T(\varphi) \\
\mathcal{M}_{s,h} \models T(\varphi) &\implies& s, h' \models \varphi \text{  where $h' \equiv \mathcal{M}_{s,h}(\Fr(T(\varphi)))$}
\end{array}
\]

\noindent Here, $\mathcal{M}_{s,h}(\Fr(T(\varphi)))$ is the interpretation of $\Fr(T(\varphi))$ in the model $\mathcal{M}_{s,h}$.
Note $h'$ is minimal and is equal to $h_\varphi$ as in Theorem~\ref{minimalheap}.
\end{theorem}

\subsection{Frame Logic Can Capture the \mh fragment: Proofs}
\label{sec:appendix-seplogiclemmas}

\begin{lemma}\label{extensibleheap}
For any formula $\varphi$ in the \mh fragment, if there is an $s$ and $h$ such that $s, h \models \varphi$ and we can extend $h$ by some nonempty $h'$ such that $s, h \cup h' \models \varphi$, then for any $h''$, $s, h \cup h'' \models \varphi$.
\end{lemma}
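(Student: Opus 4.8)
The plan is to proceed by structural induction on the formula $\varphi$ following the grammar of the \mh fragment (Definition~\ref{def:mhfragment}), using the Minimum Heap theorem (Theorem~\ref{minimalheap}) to keep track of where the extra heaplet $h'$ can live. Throughout I read $h \cup h'$ and $h \cup h''$ as genuine (domain-disjoint) extensions of $h$, so that the $f$-values already fixed by $h$ are preserved.

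First the base and trivial inductive cases. For a pure stack formula $\mathit{sf}$, satisfaction depends only on the store, so $s, h \cup h'' \models \mathit{sf}$ holds for every $h''$ and the claim is immediate. For a points-to atom $x \xrightarrow{f} y$ and for an inductive predicate $I \in \mathcal{I}$, the heaplet is uniquely determined (the former has heaplet exactly $\{x\}$, and the latter is assumed to have unique heaplets in Definition~\ref{def:mhfragment}); hence no nonempty extension $h'$ with $s, h \cup h' \models \varphi$ exists, and the hypothesis is vacuous. For $\varphi = \mathit{ite}(\mathit{sf}, \varphi_1, \varphi_2)$ the truth of the guard $\mathit{sf}$ is heap-independent, so $s$ selects a single branch $\varphi_i$ throughout and the claim follows by applying the induction hypothesis to $\varphi_i$. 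For $\varphi = \varphi_1 \land \varphi_2$, both conjuncts hold on $h$ and on $h \cup h'$, so each is extensible; the induction hypothesis makes each fully extensible and hence so is their conjunction.

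The substantive case is $\varphi = \varphi_1 * \varphi_2$. I would fix a decomposition $h = h_1 \uplus h_2$ with $s, h_i \models \varphi_i$ and a decomposition $h \cup h' = g_1 \uplus g_2$ with $s, g_i \models \varphi_i$, and let $m_1, m_2$ be the minimal heaps of $\varphi_1, \varphi_2$ from Theorem~\ref{minimalheap}, so $m_i \subseteq h_i$, $m_i \subseteq g_i$, and $m_1, m_2$ are disjoint. Since $h'$ is nonempty, disjoint from $h$, and contained in $g_1 \uplus g_2$, some location of $h'$ lies in $g_1$ or in $g_2$; say in $g_1$. As $m_1 \subseteq h$ is disjoint from $h'$, this location witnesses $g_1 \supsetneq m_1$, so $\varphi_1$ holds on $m_1$ and on the strictly larger $g_1$, and the induction hypothesis makes $\varphi_1$ fully extensible. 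To satisfy $\varphi$ on an arbitrary extension $h \cup h''$, I would split it as $\big((h \cup h'') \setminus m_2\big) \uplus m_2$: the second part is exactly $m_2$ and satisfies $\varphi_2$, the first part contains $m_1$ and therefore satisfies $\varphi_1$ by full extensibility, and the two parts are disjoint with union $h \cup h''$. The case where the $h'$-location lies in $g_2$ is symmetric. The analogous existential case $\exists y.\,(x \xrightarrow{f} y) * \varphi_1$ is handled the same way: the pointer pins $y$ to the (heap-determined, hence unchanged) $f$-successor $v$ of $x$ and fixes its cell $h_a$, so the extension must be absorbed by $\varphi_1[v/y]$, which becomes fully extensible by the induction hypothesis and lets us reconstruct a witnessing split of $h \cup h''$.

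The main obstacle is precisely the $*$ case: one must argue that the nonempty extension $h'$ is forced into at least one of the two star-conjuncts, and the clean way to do this is via the minimal heaps $m_1, m_2$ supplied by Theorem~\ref{minimalheap}, which separate the mandatory part of each conjunct from its slack. Once that conjunct is shown extensible and hence, by induction, fully extensible, the only remaining care is bookkeeping: verifying disjointness and that the chosen split of $h \cup h''$ covers exactly $h \cup h''$, for which the disjointness of $m_1$ and $m_2$ together with the inclusions $m_i \subseteq h$ suffice.
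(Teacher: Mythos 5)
Your proof follows the same structural-induction skeleton as the paper's --- identical treatment of stack formulas, points-to atoms, inductive predicates (vacuous by uniqueness of heaplets), $\ite$-formulas, and plain conjunction --- but it diverges in the one case that carries all the weight, separating conjunction. The paper dispatches $\varphi_1 * \varphi_2$ in a single sentence ("the heap can be extended iff the heap of either subformula can be extended"), which glosses over a real difficulty: the decomposition $h = h_1 \uplus h_2$ and the decomposition $h \cup h' = g_1 \uplus g_2$ need not be aligned, so one cannot directly read off that some $\varphi_i$ holds both on a heap and on a proper extension of that same heap. Your detour through the minimal heaps $m_1, m_2$ of Theorem~\ref{minimalheap} is exactly what makes that sentence true: $m_i \subseteq h_i \cap g_i$, any location of $h'$ lands in some $g_i$ strictly above $m_i$, and the split $\bigl((h \cup h'')\setminus m_2\bigr) \uplus m_2$ then reconstructs an arbitrary extension. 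So your argument is more rigorous precisely where the paper is terse, and the same mechanism handles the guarded existential.

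The one thing you must address is dependency order: in the paper, Theorem~\ref{minimalheap} is \emph{proved using} Lemma~\ref{extensibleheap} (its conjunction case invokes the lemma to show that $h_{\varphi_1} \cup h_{\varphi_2}$ satisfies both conjuncts), so quoting the theorem as a black box inside the lemma's proof is circular relative to the paper's development. The circularity is repairable, because you apply the theorem only to the \emph{strict subformulas} $\varphi_1, \varphi_2$, and the paper's proof of the theorem likewise applies the lemma only to strict subformulas: the two statements can therefore be established together by one simultaneous induction on formula structure. You should say this explicitly; as proposed, with Theorem~\ref{minimalheap} cited as an independent prior result, the proof cannot be slotted into the paper without that restructuring.
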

\begin{proof}
If a stack formula holds then it holds on any heap. Pointer formulas and inductive definitions as defined can never have an extensible heap so this is vacuously true.

For $\mathit{ite}(\mathit{sf}, \varphi_1, \varphi_2)$, assume WLOG $s, h \models \mathit{sf}$. Then for any $h'$, $s, h' \models \varphi_1 \Leftrightarrow s, h' \models \mathit{ite}(\mathit{sf}, \varphi_1, \varphi_2)$. Then use the induction hypothesis.

For $\varphi_1 \wedge \varphi_2$, for any $h'$, $s, h' \models \varphi_1 \wedge \varphi_2 \Leftrightarrow s, h' \models \varphi_1$ and $s, h' \models \varphi_2$. If the conjoined formula can be extended, both subformulas can be extended, and then we apply the induction hypothesis.

For separating conjunction, the nature of the proof is similar to conjunction, noting that the heap can be extended iff the heap of \textit{either} subformula can be extended.

For existential formulas in our form, the proof is again similar, noting the heap is extensible iff the heap of $\varphi_1$ is extensible.
\end{proof}

\begin{reptheorem}{minimalheap}
For any formula $\varphi$ in the \mh fragment, if there is an $s$ and $h$ such that $s, h \models \varphi$ then there is an $h_\varphi$ such that $s, h_\varphi \models \varphi$ and for all $h'$ such that $s, h' \models \varphi$, $h_\varphi \subseteq h'$.
\end{reptheorem}
\begin{proof}
The minimal heaplets for stack formulas are empty. For $x \xrightarrow{f} y$ the heaplet is uniquely $\{x\}$. 

For conjunction, there are three cases depending on if $\varphi_1$ or $\varphi_2$ or both have extensible heaplets. We cover the most difficult case where they both have extensible heaplets here. By definition we know $s, h \models \varphi_1$ and $s, h \models \varphi_2$. By induction, we know there are unique $h_{\varphi_1}$ and $h_{\varphi_2}$ such that $h_{\varphi_1}$ and $h_{\varphi_2}$ model $\varphi_1$ and $\varphi_2$ respectively and are minimal. Thus, $h_{\varphi_1} \subseteq h$ and $h_{\varphi_2} \subseteq h$, so $h_{\varphi_1} \cup h_{\varphi_2} \subseteq h$. By Lemma~\ref{extensibleheap}, $h_{\varphi_1} \cup h_{\varphi_2}$ is a valid heap for both $\varphi_1$ and $\varphi_2$. Thus, $s, h_{\varphi_1} \cup h_{\varphi_2} \models \varphi_1 \wedge \varphi_2$ and $h_{\varphi_1} \cup h_{\varphi_2}$ is minimal.

For separating conjunction the minimal heaplet is (disjoint) union. For $\mathit{ite}$ we pick the heaplet of either case depending on the truth of the guard. By definition, inductive defintions will have minimal heaplets.

Inductive definitions have unique heaplets by the choice we made above and therefore vacuously satisfy the given statement.

For existentials, we know from the semantics of separation logic that every valid heap on a store $s$ for the original existential formula is a valid heap for $\psi \equiv (x \xrightarrow{f} y) * \varphi_1$ on a modified store $s' \equiv s[y \mapsto v]$ for some $v$. Since the constraint $(x \xrightarrow{f} y)$ forces the value $v$ to be unique, we can then invoke the induction hypothesis to conclude that the minimal heaplets of the existential formula on $s$ and of $\psi$ on $s'$ are the same. In particular, this means that existential formulas in our fragment also have a minimal heaplet.
\end{proof}

\begin{lemma}\label{minimalframeheap}
For any $s, h$ such that $s, h \models \varphi$ we have $\mathcal{M}_{s,h}(\Fr(T(\varphi))) = h_\varphi$ where $h_\varphi$ is as above.
\end{lemma}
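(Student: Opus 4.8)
The plan is to prove the statement by structural induction on $\varphi$, running the clauses of the translation $T$ (Definition~\ref{def:translation}) in lockstep with the support equations of Figure~\ref{fig:frame-equations}. Throughout I would fix $s,h$ with $s,h \models \varphi$, invoke Theorem~\ref{minimalheap} to obtain the minimal heaplet $h_\varphi$, and establish $\sem[\mathcal{M}_{s,h}]{\Fr(T(\varphi))} = h_\varphi$. One useful preliminary observation is that since $h_\varphi \subseteq h$, the support computation only ever touches locations on which the pointer functions of $\mathcal{M}_{s,h}$ are pinned down by $h$, so the equality is well-posed. The two base cases are immediate: for a stack formula $\mathit{sf}$ no mutable function is dereferenced, so both $\Fr(T(\mathit{sf}))$ and the minimal heaplet are empty; for a pointer atom $x \xrightarrow{f} y$ the translation is $f(x)=y$, whose support is the singleton consisting of the location of $x$ by the equation for $\Fr$ of an equality over a mutable-function term, which matches the heaplet $\{x\}$.

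For the Boolean-shaped cases I would repeatedly use three facts that follow directly from Figure~\ref{fig:frame-equations}: the support of a negation (and hence of an implication or disjunction) is the union of the supports of its immediate subformulas; the idempotence rule $\Fr(\Fr(\cdot)) = \Fr(\cdot)$; and the fact that the set-building operators ($\cap$, $\subseteq$, $=\emptyset$) are non-mutable, so that an expression such as $\Fr(T(\varphi_1)) \cap \Fr(T(\varphi_2)) = \emptyset$ has support $\Fr(T(\varphi_1)) \cup \Fr(T(\varphi_2))$. The $\ite$ case then follows because its guard $T(\mathit{sf})$ is a stack formula with empty support whose truth agrees in $\mathcal{M}_{s,h}$ and in $s$, so the hypothesis transfers the correct branch. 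For $\varphi_1 * \varphi_2$ the disjointness conjunct contributes exactly $\Fr(T(\varphi_1)) \cup \Fr(T(\varphi_2))$, which by the hypothesis equals $h_{\varphi_1} \cup h_{\varphi_2}$, the disjoint union realized as the minimal heaplet of $*$. The conjunction $\varphi_1 \wedge \varphi_2$ is the only delicate Boolean case: the translation appends the precision implications, and here I would first observe that each $P(\varphi_i)$ (Definition~\ref{def:precision}) is a stack-level formula, whence $\Fr(T(P(\varphi_i))) = \emptyset$, and that the support of $\Fr(T(\varphi_j)) \subseteq \Fr(T(\varphi_i))$ reduces, by idempotence, to $\Fr(T(\varphi_1)) \cup \Fr(T(\varphi_2))$; consequently these extra conjuncts add nothing beyond $\Fr(T(\varphi_1)) \cup \Fr(T(\varphi_2))$, which equals $h_{\varphi_1} \cup h_{\varphi_2} = h_{\varphi_1 \wedge \varphi_2}$ by the conjunction case of Theorem~\ref{minimalheap}. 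The guarded existential $\exists y.\,(x \xrightarrow{f} y) * \varphi_1$ is handled by noting that the guard $f(x)=y$ is satisfied by a single witness, so the support equation for $\exists y:\gamma.\,\psi$ collapses to the location of $x$ (from the guard) together with $\Fr(T(\varphi_1))$ evaluated at that witness; the hypothesis then yields $\{x\} \cup h_{\varphi_1}$, the minimal heaplet of the existential.

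The genuine obstacle is the inductive-predicate case $T(I) = T(\rho_I)$, where the frame-logic support is given as a least fixpoint of the support equations while the separation-logic heaplet $h_I$ is the minimal heaplet obtained from the separation-logic fixpoint. I would prove these coincide by an inner induction on the unfolding level at which $s,h \models I(x)$ is witnessed, showing that at each stage the support equation $\sem{\Fr(T(I)(\overline{t}))} = \sem{\Fr(T(\rho_I)(\overline{x}))}[\overline{x} \leftarrow \overline{t}] \cup \cdots$ mirrors the recursion $h_{I} = h_{\rho_I}$ satisfied by the minimal heaplet of the body. The outer structural hypothesis discharges every part of $\rho_I$ except its recursive calls to $I$, for which the unfolding-level hypothesis applies; uniqueness of the minimal heaplet within the fragment (the well-definedness requirement on $\mathcal I$ in Definition~\ref{def:mhfragment}) together with uniqueness of the frame-logic least fixpoint (Proposition~\ref{pro:unique-frame-model}) then force the two to agree. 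Aligning these two fixpoint constructions — and in particular verifying that the support recursion does not silently enlarge the heaplet across a recursive call — is where I expect the bulk of the technical effort to lie.
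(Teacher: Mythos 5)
Your proposal is correct and follows essentially the same route as the paper's proof: the same structural induction on $\varphi$, with identical treatment of every case --- empty support for stack formulas, $\{x\}$ for pointer atoms, the guard-driven branch selection for $\ite$, the observation that the precision conjuncts $T(P(\varphi_i))$ have empty support and the inclusion/disjointness constraints add nothing beyond $\Fr(T(\varphi_1)) \cup \Fr(T(\varphi_2))$ (so that conjunction and separating conjunction both reduce to $h_{\varphi_1} \cup h_{\varphi_2}$ via Theorem~\ref{minimalheap}), and the single-witness collapse of the existential support equation to $\{x\} \cup h_{\varphi_1}$. The only divergence is the inductive-predicate case, where the paper argues by substitution (setting $\rho_I' \equiv \rho_I[I \mapsfrom \varphi]$ for a fresh $\varphi$ and noting that the SL minimal heaplets and the FL supports then satisfy the same system of recursive equations), while you run an inner induction on the unfolding level with explicit appeals to uniqueness of the minimal heaplet (Definition~\ref{def:mhfragment}) and of the FL least fixpoint (Proposition~\ref{pro:unique-frame-model}); this is the same underlying idea, rendered more explicitly --- your version makes visible exactly where uniqueness is used, which the paper leaves implicit.
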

\begin{proof}
Structural induction on $\varphi$.

If $\varphi$ is a stack formula, $h_\varphi = \Fr(T(\varphi)) = \emptyset$. If $\varphi \equiv x \xrightarrow{f} y$, $h_\varphi = \Fr(T(\varphi)) = \{x\}$.

For $\varphi \equiv \mathit{ite}(\mathit{sf}, \varphi_1, \varphi_2)$, because, $s, h \models \varphi$, we know either $s, h \models \varphi_1$ or $s, h \models \varphi_2$ depending on the truth of $\mathit{sf}$. WLOG assume $s, h \models \mathit{sf}$, then $h_\varphi = h_{\varphi_1}$. Similarly, $\Fr(T(\varphi)) = \Fr(\mathit{sf}) \cup \Fr(T(\varphi_1)) = \Fr(T(\varphi_1))$ (heaplet of stack formulas is empty) and then we apply the induction hypothesis. Similarly if $s, h \not\models \mathit{sf}$.

For $\varphi \equiv \varphi_1 \wedge \varphi_2$, we know from the proof of Theorem~\ref{minimalheap} that $h_\varphi = h_{\varphi_1} \cup h_{\varphi_2} = \mathcal{M}_{s,h}(\Fr(T(\varphi_1))) \cup \mathcal{M}_{s,h}(\Fr(T(\varphi_2)))$. The guard parts of the translation $\Fr(\varphi)$ since they are all precise formulas which have empty heaplets.

For $\varphi \equiv \varphi_1 * \varphi_2$, the proof is the same to the previous case, again from the proof of Theorem~\ref{minimalheap}.


For an inductive definition $I$, recall that $\rho_I[I \mapsfrom \varphi]$ is in the \mh fragment (and crucially does not mention $I$). Assume $\varphi$ is fresh and does not occur in $\rho_I$. Define $\rho_I' \equiv \rho_I[I \mapsfrom \varphi]$ and note that $\rho_I = \rho_I'[\varphi \mapsfrom I]$. This means that $h_{\rho_I} = h_{\rho_I'}[h_{\varphi} \mapsfrom h_I]$. We also see that $\Fr(T(\rho_I)) = \Fr(T(\rho_I'))[\Fr(T(\varphi)) \mapsfrom \Fr(T(\rho_I))]$. Because $h_{\rho_I'} = \Fr(T(\rho_I'))$ (by the other cases in this proof and since $\rho_I'$ does not mention $I$), we see the heaplets are related by the same sets of recursive equations and we are done.

For existentials, we have from the definition of the $\Fr$ operator that the \spp{} of the translation of the existential formula is the same as that of $\{x\} \cup \Fr(T(\varphi_1))$. The claim then follows from the definition of heaplet of existentials in separation logic as well as the inductive hypothesis for $\varphi_1$. 
\end{proof}


\begin{reptheorem}{seplogicmain}
For any formula $\varphi$ in the \mh fragment, we have the following implications:
\begin{align*}
s, h \models \varphi &\implies \mathcal{M}_{s,h} \models T(\varphi) \\
\mathcal{M}_{s,h} \models T(\varphi) &\implies s, h' \models \varphi \text{  where $h' \equiv \mathcal{M}_{s,h}(\Fr(T(\varphi)))$}
\end{align*}
Here, $\mathcal{M}_{s,h}(\Fr(T(\varphi)))$ is the interpretation of $\Fr(T(\varphi))$ in the model $\mathcal{M}_{s,h}$.
Note $h'$ is minimal and is equal to $h_\varphi$ as in Theorem~\ref{minimalheap}.
\end{reptheorem}
\begin{proof}
First implication: Structural induction on $\varphi$. 

If $\varphi$ is a stack formula or a pointer formula, this is true by construction. If $\varphi$ is an if-then-else formula the claim is true by construction and the induction hypothesis.

If $\varphi = \varphi_1 \land \varphi_2$, we know by the induction hypothesis that $\mathcal{M}_{s,h} \models T(\varphi_1)$ and $\mathcal{M}_{s,h} \models T(\varphi_2)$. Further, from the semantics of separation logic, we have that if $\varphi_1$ is precise, then $h_{\varphi_1} = h$. Therefore, $h_{\varphi_2} \subseteq h_{\varphi_1}$ (by Lemma~\ref{minimalheap}). Therefore, from Lemma~\ref{minimalframeheap}, we have that $\mathcal{M}_{s,h} \models \Fr(T(\varphi_2)) \subseteq \Fr(T(\varphi_1))$. Similarly if $\varphi_2$ is precise. This justifies the two latter conjuncts of the translation.

If $\varphi = \varphi_1 * \varphi_2$, we know there exist $h_1, h_2$ such that $h_1 \cap h_2 = \emptyset$ and $s, h_1 \models \varphi_1$ and $s, h_2 \models \varphi_2$. Then, from Lemma~\ref{minimalheap}, we have that $h_{\varphi_1} \subseteq h_1$ and $h_{\varphi_2} \subseteq h_2$. Thus, by Lemma~\ref{minimalframeheap}, we have that $\mathcal{M}_{s,h} \models \Fr(T(\varphi_1)) \cap \Fr(T(\varphi_2)) = \emptyset$. The other conjuncts follow from the induction hypothesis.

Similarly to the proof of Lemma~\ref{minimalframeheap}, we can show that the translation of the inductive definition satisfies the same recursive equations as the original inductive definition and we are done.

If $\varphi$ is an existential, the result follows from definition and the induction hypothesis.

Second implication: Structural induction on $\varphi$.

By construction, induction hypotheses, and Lemma~\ref{minimalframeheap}, all cases can be discharged besides conjunction and inductive predicates.

For conjunction, if $\varphi = \varphi_1 \land \varphi_2$, we have from the induction hypothesis that $s, h_{\varphi_1} \models \varphi_1$ and $s, h_{\varphi_2} \models \varphi_2$. If $\varphi_1$ is precise, we know $\mathcal{M}_{s,h} \models \Fr(T(\varphi_2)) \subseteq \Fr(T(\varphi_1))$ and therefore $h_{\varphi_2} \subseteq h_{\varphi_1}$ (from Lemma~\ref{minimalframeheap}). Similarly, if $\varphi_2$ is precise, then $\mathcal{M}_{s,h} \models \Fr(T(\varphi_1)) \subseteq \Fr(T(\varphi_2))$ as well as $h_{\varphi_1} \subseteq h_{\varphi_2}$. In particular, if they are both precise, their \spps{} (and therefore minimal heaplets) are equal, and $h' = h_{\varphi_1} \cup h_{\varphi_2}$ (from the proof of Lemma~\ref{minimalheap}) $= h_{\varphi_1} = h_{\varphi_2}$, and we are done. If only $\varphi_1$ is precise (similarly if only $\varphi_2$ is precise), then we have as above that $h_{\varphi_2} \subseteq h_{\varphi_1}$ and $h_{\varphi_1} = h'$. Moreover, we know by Lemma~\ref{extensibleheap} that $s, h_{\varphi_1} \models \varphi_2$ and we are done. If neither is precise, both heaps are extensible, so we know by Lemma~\ref{extensibleheap} that $s, h_{\varphi_1} \cup h_{\varphi_2} \models \varphi_1$ and $s, h_{\varphi_1} \cup h_{\varphi_2} \models \varphi_2$ and we are done.

For $\varphi$ an inductive predicate, we know that $\mathcal{M}_{s,h} \big | \Fr(T(\varphi)) \models T(\varphi)$. The remainder follows since, because we restrict the form of inductive predicates to have a unique heap at each level, the translated inductive predicate will satisfy the same recursive equations as $\varphi$.

\end{proof}

\section{Discussion}\label{sec:discussion}

\subsection{Comparison with Separation Logic}
The design of frame logic is, in many ways, inspired by the design choices of separation logic. Separation logic formulas implicitly hold on \emph{tight} heaplets--- models are defined on pairs $(s,h)$, where $s$ is a store (an interpretation of variables) and $h$ is a heaplet that defines a subset of the heap as the domain for functions/pointers. 
In Frame Logic, we choose to not define satisfiability with respect to heaplets but define it with respect to the entire global heap. However, we give access to the implicitly defined heaplet using the operator $\Fr$, and give a logic
over \emph{sets} to talk about supports. 
The separating conjunction operation $*$ can then be expressed using normal conjunction of the two formulae and a constraint that says that the supports of the formulas are disjoint \revII{(e.g., $x \overset{\mathit{next}}{\mapsto} y * \mathit{list}(y)$ can be expressed in FL as $\mathit{next}(x) = y \land \mathit{list}(y) \land \Fr(\mathit{next}(x) = y) \cap \Fr(\mathit{list}(y)) = \emptyset$).
}

We do not allow formulas to have \emph{multiple} supports which is crucial ($\Fr$ is a function). 
Precise fragments of separation logic have been proposed and accepted in the separation logic literature a way of giving robust semantics in handling modular functions and concurrency~\cite{ohearn04,brookes07}. 
Section~\ref{sec:seplog-translation} details a translation of a precise fragment of separation logic (with $*$ but not magic wand) to frame logic that shows the natural connection between precise formulas in separation logic and frame logic. When converting arbitrary (potentially imprecise) separation logic formulas to first-order logic, there is an inherent existential quantifier over heaplets that makes reasoning difficult, due to the non-uniqueness of heaplets. Our logic however does not have this problem because supports are uniquely defined, and hence we do not need to quantify over heaplets (as in the above example).

Frame logic, through the support operator, facilitates local reasoning much in the same way as separation logic does, and the frame rule in frame logic supports frame reasoning in a similar way as the frame rule in separation logic. The key difference between frame logic and separation logic is the adherence to a first-order logic (with recursive definitions), both in terms of syntax and expressiveness. Note that there is no difference in regards to completeness, as \revII{both Separation Logic and FO-RD are incomplete~\cite{reynolds02,sl-computability-complexity,loding18}}. We refer the reader to page~\pageref{sec:consequences} where several nuances in semantic choices are explained. 

There are several other key differences between separation logic and frame logic. First, in separation logic, the magic wand operator is needed to express the weakest precondition~\cite{reynolds02}. Consider for example computing the weakest precondition of the formula $list(x)$ with respect to the code
$y.n := z$. The weakest precondition should essentially describe the (tight) heaplets such that changing the $n$ pointer from $y$ to $z$ results in $x$ pointing to a list. In separation logic, this is expressed typically (see~\cite{reynolds02}) using magic wand as
$(y \xrightarrow{n} z) ~-\!\!*~ (list(x))$. 
\revII{However, the magic wand operator is inherently a \emph{second-order} property~\cite{magic-wand-complexity}.} The formula $\alpha -\!\!* \beta$ holds on a heaplet $h$ if for any \emph{disjoint} heaplet that satisfies $\alpha$, $\beta$ will hold on the conjoined heaplet. Expressing this property (for arbitrary $\alpha$, whose heaplet can be \emph{unbounded}) requires quantifying over unbounded heaplets satisfying $\alpha$, which is not first order
expressible. 

In frame logic, we instead rewrite the recursive definition $list(\cdot)$ to a new one $list'(\cdot)$ that captures whether $x$ points to a list, assuming that $n(y)=z$ (see~Section~\ref{sec:weakest-pre-rules}). This property continues to be expressible in frame logic and can be converted to first-order logic with recursive definitions (see~Section~\ref{sec:reduction-to-FO-RD}). Note that we are exploiting the fact that there is only a bounded amount of change to the heap in loop-free programs in order to express this in FL.

Let us turn to expressiveness and succinctness. In separation logic, separation of structures is expressed using $*$, and in frame logic, such a separation is expressed using conjunction and an additional constraint that says that the supports of the two formulas are disjoint. A precise separation logic formula of the form $\alpha_1 * \alpha_2 * \ldots \alpha_n$ is succinct and would get translated to a much larger formula in frame logic as it would have to state that the supports of each pair of formulas is disjoint.
We believe this can be tamed using macros ($\textit{Star}(\alpha, \beta) = \alpha \wedge \beta \wedge \Fr(\alpha) \cap \Fr(\beta)=\emptyset$). 

There are, however, several situations where frame logic leads to more compact and natural formulations. For instance, consider expressing the property that $x$ and $y$ point to lists, which may or may not overlap. 
In Frame Logic, we simply write $list(x) \land list(y)$. The support of this formula is the union of the supports of the two lists. 
In separation logic, we cannot use $*$ to write this compactly (while capturing the tightest heaplet). 
Note that the formula $(list(x) * true) \wedge (list(y) * true)$ is \emph{not} equivalent, as it is true in heaplets that are larger than the set of locations of the two lists.
The simplest formulation we know is to write a recursive definition $\mathit{lseg}(u,v)$ for list segments from $u$ to $v$ and use quantification: $(\exists z.~\mathit{lseg}(x,z)*\mathit{lseg}(y,z)*list(z))
 \vee (list(x)*list(y))$
where the definition of $\mathit{lseg}$ is the following: $\mathit{lseg}(u,v) \equiv (u = v \land \mathit{emp}) \lor (\exists w.~ u \rightarrow w ~*~ lseg(w,v))$.

If we wanted to say $x_1,\ldots,x_n$ all point to lists, that may or may not overlap, then in FL we can say $list(x_1) \land list(x_2) \land \ldots \land list(x_n)$.
However, in separation logic, the simplest way seems to be to write using $\mathit{lseg}$ and a linear number of quantified variables and an exponentially-sized formula. Now consider the property saying $x_1,\ldots,x_n$ all point to binary trees, with
pointers $\textit{left}$ and $\textit{right}$, and that can overlap arbitrarily. We can write it in FL as $tree(x_1)\land\ldots\land tree(x_n)$, while a formula in separation logic that expresses this property seems very complex.

\revII{The difficulty of using separation logic to capture such overlapping datastructures~\cite{iterated-star-sidkrish2020,hongseok-overlaid-cav2011} has been noticed in the past. For example, the work in~\cite{hobor13} introduces an overlapping conjunction operator $\vcup{*}$ to separation logic, which unlike $*$ or $\land$, allows the conjunction of two properties whose heaplets may overlap but not coincide, and can express the above properties we mention succinctly. However, the \emph{ramification} rules proposed in that paper to reason with overlapping datastructures seem inherently second-order, introducing the magic wand~\cite{magic-wand-complexity}. The work in~\cite{guarded-wand-pagel2020} explores a decidable fragment of separation logic with inductive definitions including \emph{guarded} magic wands to capture overlaid datastructures.}

\revII{In summary, we believe that frame logic is a logic that supports frame reasoning built on similar principles as separation logic, but is still translatable to first-order logic (avoiding the magic wand) while still being closed under weakest preconditions. Furthermore, the choices
it makes in syntax and semantics lead to expressing certain properties more naturally and compactly, while others more verbosely.
}

\subsection{\revII{Reasoning with Frame Logic}}
\label{sec:reasoning-with-fl}
\revII{A important advantage of the adherence of frame logic is that it is translatable to a first-order logic with recursive definitions, opening up the possibility of using solvers for the latter in order to reason with frame logic. }


\revII{Notably, we are inspired by the reasoning using the framework on Natural Proofs~\cite{pek14,qiu13,loding18}.
In this work, recursive definitions with least fixpoint semantics are first abstracted to fixpoint definitions, lending to a formulation in pure first-order logic. Second, universal quantifiers are instantiated using terms in the formula, and recursive definitions are unfolded on terms as well. Finally, the resulting formulae, which are quantifier-free are checked using SMT solvers. A remarkable result is that for a \emph{safe} fragment of the logic (which verification conditions for datastructure manipulating programs often adhere to), the above technique is \emph{complete} with respect to pure FO reasoning~\cite{loding18}. The work in~\cite{fossil} also attempts to bridge the gap between the first-order reasoning of natural proofs and the least-fixpoint semantics of the inductive definitions by synthesizing inductive lemmas.}

\revII{Our main hope of reasoning with frame logic resides in this technique, which has already been used in reasoning with {\sc Dryad}, a precise fragment of separation logic, by converting it to first-order logic with recursive definitions~\cite{pek14,qiu13}.
The evaluation of \textsc{Dryad} on a suite of 150 heap programs has been reported in the works on Natural Proofs (see http://madhu.cs.illinois.edu/vcdryad/) providing evidence that the technique works well
in practice.
However, we believe that verification conditions using the weakest precondition derived in this paper introduce certain quantifications that may make it harder than necessary to reason with. 
We believe an approach of verification condition generation based on strongest postconditions (or symbolic execution), as done in~\cite{qiu13,pek14}, and with the frame logic restricted in order to not introduce existential quantification in specifications, would lead to an efficient solver.
In particular, the support of recursively defined functions can be generated automatically using 
our translation, while the work in~\cite{pek14,qiu13} translated heaplets for {\sc Dryad} formulas using manual unverified translations. 
This work is however beyond the scope of this paper, and we leave it for the future. The main technical challenge is to restrict frame logic annotations to not use existential quantification and realize supports of inductively defined datastructures as FO inductive definitions over set without introducing additional quantifiers.}

\revII{Another mechanism to reason with frame logic is to convert it into FO-RD, and use it in frameworks such as {\sc Dafny}~\cite{dafny} or {\sc Boogie}~\cite{boogie,boogie-meets-regions} that support rich user-annotation based reasoning. In particular, the methodology of \emph{region logics}~\cite{RegionLogic1,RegionLogic2,RegionLogic} suggests ways of encoding supports as sets in first-order logics, and carry through proofs using \emph{ghost annotations} that update a ghost state and help prove theorems. This methodology typically requires help from the user in terms of ghost annotations, but much of the reasoning otherwise is delegated to automatic logic engines (mainly SMT solvers). Region logic~\cite{RegionLogic1,RegionLogic2,RegionLogic} itself is a logic for heap manipulating programs that supports \emph{explicit} heaplets and frame-based reasoning using them, rather than the implicit heaplets of frame logic.}

\section{Related Work}\label{sec:rel-work}
The frame problem~\cite{hayes81, borgida95} is an important problem in many different domains of research. In the broadest form, it concerns representing and reasoning about the effects of a local action without requiring explicit reasoning regarding static changes to the global scope. For example, in artificial intelligence one wants a logic that can seamlessly state that if a door is opened in a lit room, the lights continue to stay switched on. This issue is present in the domain of verification as well, specifically with heap-manipulating programs.

There are many solutions that have been proposed to this problem. The most prominent proposal in the verification context is  separation logic~\cite{demri15,ohearn12,ohearn01,reynolds02}, which 
we discussed in detail in the previous section.



\revII{The work on Dynamic Frames~\cite{kassios06,DynamicFrames2011} and similarly inspired approaches such as Region Logic~\cite{RegionLogic1,RegionLogic2,RegionLogic} allow programs to explicitly specify parts of the heap that may be modified. 
The key idea is the notion of \emph{regions}, which are subsets of locations that can be manipulated in code (as ghost variables) as well as used in annotations in order to perform frame reasoning.  
This allows a finer-grained specification of the portion of the heap modified by programs, 
avoiding special symbols like $*$ and $-*$. Section~\ref{sec:reasoning-with-fl} discusses on verification methodologies based on these logics as well as using using these techniques to reasoning with FL. In contrast, FL itself has \emph{implicit} supports given using the $\Fr$ operator. 
}


The work on Implicit Dynamic Frames~\cite{smans12,chalice,idf,parkinson11} \revII{bridges the worlds of separation logic 
and dynamic frames}--- it uses separation logic and fractional permissions to implicitly define frames (reducing annotation burden), allows annotations to access these frames, and translates them into set regions for first-order reasoning. Our work is similar in that frame logic also implicitly defines regions and gives annotations access to these regions, and can be easily translated to pure FO-RD for first-order reasoning. However, implicit dynamic frames do not allow a formula to access anything in a function that is not in the support of the function's precondition. Further, all implementations of Implicit Dynamic Frames such as Chalice~\cite{parkinson11} and Viper~\cite{vipertool} only handle restricted formulas. \revII{For example, non-separating conjunction is prohibited.} 
In addition, a translation from the VeriFast thereom prover to Implicit Dynamic Frames~\cite{verifast,jost14} contains similarly restricted formulas.



One distinction with separation logic involves the non-unique heaplets in separation logic and the unique heaplets in frame logic. Determined heaplets have been used~\cite{qiu13,pek14,ohearn04} as they are more amenable to automated reasoning. In particular a separation logic fragment with determined heaplets known as precise predicates is defined in~\cite{ohearn04}, which we capture using frame logic in Section~\ref{sec:seplog-translation}.

There is also a rich literature on reasoning with these heap logics for program verification. Decidability is an important dimension and there is a lot of work on decidable logics for heaps with separation logic specifications~\cite{Smallfoot,BerdineCalcagnoOHearn2004,BerdineCalcagnoOHearn2005,CookHaaseChristoph2011,PerezAntonioRybalchenko2011,PerezRybalchenko2013}. The work based on EPR (Effectively Propositional Reasoning) for specifying heap properties~\cite{Itzhaky2014,Itzhaky2013,ItzhakyEPRInvariants} provides decidability, as does some of the work that translates separation logic specifications into classical logic~\cite{PiskacWiesZufferey2013}. 

\revII{The work in~\cite{bobot12} defines an extension of first-order logic and inductive definitions with footprints, where the design decisions on the semantics of footprints are very similar to ours. The work develops the notion of `separation predicates' based on these footprint expressions as well as verification conditions for programs annotated with first-order formulas involving separation predicates. The work also realizes an implementation of this technique by generating verification conditions and using Why3~\cite{filliatre13esop,why3logic}. The footprint expressions compute the same set as our $\Fr$ operator (with some minor syntactic changes), but our work handles a much larger fragment. In particular, apart from only defining supports for recursive predicates, the work in~\cite{bobot12} disallows quantification, non-separating conjunction, and separation predicates in the body of recursive predicates. This makes it difficult to define many of the typical data-structures shown in Figure~\ref{fig:rec-defs-list} in their logic.}
    

Translating separation logic into other logics and reasoning with them is another solution pursued in a lot of recent efforts~\cite{ChinDavidNguyen2007,pek14,PiskacWiesZufferey2013,PiskacWiesZufferey2014,PiskacWiesZufferey2014Tool,madhusudan12,pek14,qiu13,suter10,loding18}.
Other techniques including recent work on cyclic proofs~\cite{brotherston11,ta16} use heuristics for reasoning about recursive definitions. 

\revII{Finally, there is also work that studies first-order logics such as FO(ID) with more general non-monotonic inductive definitions motivated by AI applications such as the frame problem and the modeling of causal processes of knowledge~\cite{foid2000,foid-revisited2014}. FO(ID) generalizes the least-fixpoint semantics~\cite{tarski-knaster} we use for definitions in FO-RD (which are all required to be monotonic).} 

\section{Conclusions}\label{sec:conc}
Our main contribution is to propose \emph{Frame Logic}, a first-order logic endowed with an explicit operator
that recovers the implicit \spps{} of formulas and supports
frame reasoning. 
We have argued its expressiveness by capturing several properties of data-structures naturally and succinctly, and by showing that it can express a precise fragment of separation logic. 
The program logic built using frame logic supports local heap reasoning, frame reasoning, and weakest tightest preconditions across loop-free programs.

We believe that frame logic is an attractive alternative to separation logic, built using similar principles as separation logic while staying within the first-order logic world. The
first-order nature of the logic makes it potentially amenable to easier automated reasoning.


A practical realization of a tool for verifying programs in a standard programming language with frame logic annotations by marrying it with existing automated techniques and tools for first-order logic (in particular~\cite{madhusudan12,pek14,qiu13,suter10,kovacs17}), is the most compelling future work. 

Another area for future work involves potential extensions to Frame Logic. For example, the work in~\cite{krishnaswami09} uses magic wand for verifying certain design patterns including iterators, and extending Frame Logic similarly is an interesting possible direction. Other potential extensions include support for function pointers and permission models.

\smallskip



\bibliographystyle{ACM-Reference-Format}
\bibliography{refs}


\end{document}